\documentclass[opre,nonblindrev]{informs_arxiv}

\DoubleSpacedXI 


\usepackage{color}
\usepackage{endnotes}
\let\footnote=\endnote

%


\usepackage{natbib}
 \bibpunct[, ]{(}{)}{,}{a}{}{,}%
 %
 %
 %
 %
 %

\TheoremsNumberedThrough     
\ECRepeatTheorems

\EquationsNumberedThrough    

\usepackage{subcaption}
\usepackage{algorithm}
\usepackage{algorithmic}
\usepackage{comment}
\newtheorem{condition}{Condition}

\newcommand{\red}[1]{\begin{color}{red}#1\end{color}}

\DeclareMathOperator*{\diag}{Diag}

\def\norm#1{\|#1 \|}
\def\inprod#1#2{\langle#1,\,#2 \rangle}


\def\cA{{\mathcal A}}

\def\cC{{\mathcal C}}

\def\cE{{\mathcal E}}
\def\cF{{\mathcal F}}
\def\cG{{\mathcal G}}

\def\cI{{\mathcal I}}

\def\cL{{\mathcal L}}
\def\cM{{\mathcal M}}
\def\cN{{\mathcal N}}

\def\cS{{\mathcal S}}
\def\cT{{\mathcal T}}

\def\cW{{\mathcal W}}
\def\cX{{\mathcal X}}
\def\cY{{\mathcal Y}}

\def\norm#1{\|#1 \|}

\def\S{\mathbb{S}}
\def\R{\mathbb{R}}


\def\mw#1{\textcolor{blue}{#1}}

\def\bP{\mathbf{P}}

\def\bQ{\mathbf{Q}}
\def\bH{\mathbf{H}}

\def\bX{\mathbf{X}}
\def\bZ{\mathbf{Z}}	
\def\bU{\mathbf{U}}	
\def\bV{\mathbf{V}}	
	
\def\bD{\mathbf{D}}

\def\bS{\mathbf{S}}

\def\bR{\mathbf{R}}

\def\bW{\mathbf{W}}		
		
\def\bzero{\mathbf{0}}	
\def\bone{\mathbf{1}}

\newcommand{\D}{{\mathbf{D}}}

\renewcommand{\P}{{\mathbf{P}}}
\newcommand{\bbP}{{\mathbb{P}}}
\newcommand{\Q}{{\mathbf{Q}}}

\renewcommand{\S}{{\mathbf{S}}}
\newcommand{\U}{{\mathbf{U}}}
\newcommand{\V}{{\mathbf{V}}}
\newcommand{\Var}{{\rm Var}}

\newcommand{\W}{{\mathbf{W}}}
\newcommand{\X}{{\mathbf{X}}}

\newcommand{\Z}{{\mathbf{Z}}}

\def \bGamma   {\boldsymbol{\Gamma}}       \def \bDelta   {\boldsymbol{\Delta}}

\newcommand{\rank}{{\rm rank}}

\newcommand{\lonenorm}[1]{\lVert#1\rVert_1}
\newcommand{\ltwonorm}[1]{\lVert#1\rVert_2}

\newcommand{\opnorm}[1]{\|#1\|_{2}}
\newcommand{\fnorm}[1]{\|#1\|_{\mathrm{F}}}
\newcommand{\nnorm}[1]{\lVert#1\rVert_*}
\newcommand{\supnorm}[1]{ \lVert#1  \rVert_{\max}}
\newcommand{\inn}[1]{\langle #1 \rangle}

\def \DKL {D_{\mathrm{KL}}}
\def \PP {\mathbb{P}}

\def \OO {\mathbb{O}}
\def \NN {\mathbb{N}}
\def \EE {\mathbb{E}}
\def \ind {1}

\def \KL {\mathrm{KL}}
\def \vec	{\text{vec}}

\def\mybox#1{\vskip1mm \begin{center} \bf \red
		\hspace{.0\textwidth}\vbox{\hrule\hbox{\vrule\kern6pt
				\parbox{.95\textwidth}{\kern6pt#1\vskip6pt}\kern6pt\vrule}\hrule}
	\end{center} \vskip-5mm}
\begin{document}


\RUNAUTHOR{Zhu et al.}

\RUNTITLE{Estimation of Markov Models}

\TITLE{Learning Markov models via low-rank optimization}

\ARTICLEAUTHORS{%
\AUTHOR{Ziwei Zhu}
\AFF{Department of Statistics, University of Michigan, Ann Arbor, MI 48109, \EMAIL{ziweiz@umich.edu}
\AUTHOR{Xudong Li}
\AFF{School of Data Science, Fudan University, Shanghai 200433, \EMAIL{lixudong@fudan.edu.cn}\\
Shanghai Center for Mathematical Sciences, Fudan University, Shanghai 200433
} 
\AUTHOR{Mengdi Wang}
\AFF{Department of Operations Research and Financial Engineering, Princeton University, Princeton, NJ 08544,
\EMAIL{mengdiw@princeton.edu}}
\AUTHOR{Anru R. Zhang}
\AFF{Department of Statistics, University of Wisconsin-Madison, Madison, WI 53706,
\EMAIL{anruzhang@stat.wisc.edu}\\
Department of Biostatistics and Bioinformatics, Duke University, 27710}
}
} 
\def\mw#1{\textcolor{red}{[#1]}}
\ABSTRACT{
Modeling unknown systems from data is a precursor of system optimization and sequential decision making. In this paper, we focus on learning a Markov model from a single trajectory of states. Suppose that the transition model has a small rank despite of having a large state space, meaning that the system admits a low-dimensional latent structure. We show that one can estimate the full transition model accurately using a trajectory of length that is proportional to the total number of states. 
We propose two maximum likelihood estimation methods: a convex approach with  nuclear-norm regularization and a nonconvex approach with rank constraint. We explicitly derive the statistical rates of both estimators in terms of the Kullback-Leiber divergence and the $\ell_2$ error and also establish a minimax lower bound to assess the tightness of these rates. For computing the nonconvex estimator, we develop a novel DC (difference of convex function) programming algorithm that starts with the convex M-estimator and then successively refines the solution till convergence. Empirical experiments demonstrate consistent superiority of the nonconvex estimator over the convex one.

}%


\KEYWORDS{Markov Model, DC-programming, Non-convex Optimization, Rank Constrained Likelihood}

\maketitle

%


\section{Introduction}

In engineering and management applications, one often has to collect data from unknown systems, learn their transition functions, and learn to make predictions and decisions. A critical precursor of decision making is to model the system from data. We study how to learn an unknown Markov model of the system from its state-transition trajectories. When the system admits a large number of states, recovering the full model becomes sample expensive.

In this paper, we focus on Markov processes where the transition matrix has a small rank.
The small rank implies that the observed process is governed by a low-dimensional latent process which we cannot see in a straightforward manner. It is a property that is (approximately) satisfied in a wide range of practical systems. Despite the large state space, the low-rank property unlocks potential of accurate learning of a full set of transition density functions based on short empirical trajectories. 

\subsection{Motivating Examples}

Practical state-transition processes with a large number of states often exhibit low-rank structures. For example, the sequence of stops made by a taxi turns out to follow a Markov model with approximately low rank structure \citep{liu2012understanding, benson2017spacey}. 
For another example, random walk on a lumpable network has a low-rank transition matrix \citep{buchholz1994exact,e2008optimal}. The transition kernel with fast decaying eigenvalues has been also observed in molecular dynamics \citep{rohrdanz2011determination}, which can be used to find metastable states, coresets and manifold structures of complicated dynamics \citep{chodera2007automatic, coifman2008diffusion}.


Low-rank Markov models are also related to dimension reduction for control systems and reinforcement learning. For example, the state aggregation approach for modeling a high-dimensional system can be viewed as a low-rank approximation approach \citep{bertsekas1995dynamic, bertsekas1995neuro,singh1995reinforcement}. In state aggregation, one assumes that there exists a latent stochastic process $\{z_t\} \subset [r]$ such that
$\PP( s_{t+1}  \mid s_t ) =  \sum_z \PP( z_t =z \mid  s_t) \PP( s_{t+1} \mid z_t=z),$
which is equivalent to a factorization model of the transition kernel $\P$.
In the context of reinforcement learning, the nonnegative factorization model was referred to as the generalization to the rich-observation model \citep{azizzadenesheli2016reinforcement1}. The low-rank structure allows us to model and optimize the system using significantly fewer observations and less computation. Effective methods for estimating the low-rank Markov model would pave the way to better understanding of process data and more efficient decision making.

\subsection{Our approach}
We propose to estimate the low-rank Markov model based on an empirical trajectory of states, whose length is only proportional to the total number of states. 
We propose two approaches based on the maximum likelihood principle and low-rank optimization. The first approach uses a convex nuclear-norm regularizer to enforce the low-rank structure and a polyhedral constraint to ensure that optimization is over all probabilistic matrices. The second approach is to solve a rank-constrained optimization problem using difference-of-convex (DC) programming. For both approaches, we provide statistical upper bounds for the Kullback-Leibler (KL) divergence between the estimator and the true transition matrix as well as the $\ell_2$ risk. We also provide an information-theoretic lower bound to show that the proposed estimators are nearly rate-optimal. Note that the low-rank estimation of the Markov model was considered in \cite{zhang2018optimal} where a spectral method with total variation bound is given. In comparison, the novelty of our methods lies in the use of maximum likelihood principle and low-rank optimization, which allows us to obtain the first KL divergence bound for learning low-rank Markov models.

Our second approach involves solving a rank constraint optimization problem over probabilistic matrices, which is a refinement of the convex nuclear-norm approach. Due to the non-convex rank constraint, the optimization problem is difficult -  to the best of our knowledge, there is no efficient approach that directly solves the rank-constraint problem. In this paper, we develop a penalty approach to relax the rank constraint and transform the original problem into a DC (difference of convex functions) programming one. Furthermore, we develop a particular DC algorithm to solve the problem by initiating at the solution to the convex problem and successively refining the solution through solving a sequence of inner subproblems. Each subroutine is based on the multi-block alternating direction method of multipliers (ADMM). Empirical experiments show that the successive refinements through DC programming do improve the learning quality. As a byproduct of this research, we develop a new class of DC algorithms and a unified convergence analysis for solving non-convex non-smooth problems, which were not available in the literature to our best knowledge.  

\subsection{Contributions and paper outline}

The paper provides a full set of solutions for learning low-rank Markov models. The main contributions are: (1) We develop statistical methods for learning low-rank Markov model with rate-optimal Kullback-Leiber divergence guarantee for the first time; (2) We develop low-rank optimization methods that are tailored to the computation problems for nuclear-norm regularized and rank-constrained M-estimation; (3) A byproduct is a generalized DC algorithm that applies to nonsmooth nonconvex optimization with convergence guarantee. 

The rest of the paper is organized as follows. Section 2 surveys related literature. Section 3 proposes two maximum likelihood estimators based on low-rank optimization and establishes their statistical properties. Section 4 develops computation methods and establishes convergence of the methods. Section 5 presents the results of our numerical experiments.


\section{Related literature}


Model reduction for complicated systems has a long history. It traces back to variable-resolution dynamic programming \citep{moore1991variable} and state aggregation for decision process \citep{sutton1998reinforcement}. In the case of Markov process, \citep{deng2011optimal, deng2012model} considered low-rank reduction of Markov models with explicitly known transition probability matrix, but not the estimation of the reduced models. Low-rank matrix approximation has been proved powerful in analysis of large-scale panel data, with numerous applications including network analysis \citep{e2008optimal}, community detection \citep{newman2013spectral}, ranking \citep{negahban2016rank}, product recommendation \citep{keshavan2010matrix} and many more. The main goal is to impute corrupted or missing entries of a large data matrix. Statistical theory and computation methods are well understood in the settings where a low-rank signal matrix is corrupted with independent Gaussian noise or its entries are missed independently.

In contrast, our problem is to estimate the transition density functions from dependent state trajectories, where statistical theory and efficient methods are underdeveloped. When the Markov model has rank $1$, it becomes an independent process. In this case, our problem reduces to estimation of a discrete distribution from independent samples \citep{steinhaus1957problem,lehmann2006theory,han2015minimax}. For a rank-$2$ transition matrix, \cite{huang2016recovering} proposed an estimation method using a small number of independent samples. Very recently there have been some works on minimax learning of Markov chains. \citet{HOP18} derived the minimax rates of estimating a Markov model in terms of a smooth class of $f$-divergences. They considered the family of $\alpha$-minorated Markov chains, i.e., all the transition probabilities are greater than $\alpha$. \citet{WKo19} computed the finite-sample PAC-type minimax sample complexity of recovering the transition matrix from a state trajectory of a Markov chain, up to a tolerance in a total-variation-based (TV-based) metric. This TV-based metric does not belong to the family of the smooth $f$-divergences in \citet{HOP18}, and their class of Markov models strictly contains the class of the $\alpha$-minorated ones. Neither of these works considered low-rank Markov models though. 

The closest work to ours is \cite{zhang2018optimal}, in which a spectral method via truncated singular value decomposition was introduced and the upper and lower error bounds in terms of total variation were established. \cite{yang2017dynamic} developed an online stochastic gradient method for computing the leading singular space of a transition matrix from random walk data. To our best knowledge, none of the existing works has analyzed efficient recovery of a low-rank Markov model with Kullback-Leiber divergence guarantee.

Hidden Markov Models (HMMs) are closely related with our low-rank Markov models. Note that the observation trajectory of an HMM is not necessarily Markovian. Therefore, an HMM can be regarded as a relaxed variant of low-rank Markov models. There have been many works on estimating HMM, in particular through spectral approaches, e.g., \citet{hsu2012spectral} and \citet{anandkumar2014tensor}. A critical difference is: States are not fully observable in HMM, but are fully observable in low-rank Markov models. Although HMM is more general, the low-rank Markov model is more suitable for dynamical processes where the state space is large but fully observable, for which we will establish tighter error bounds. 


On the optimization side, we adopt DC programming to handle the rank constraint and replace it with the difference of two convex functions. DC programming was first introduced by \cite{tao1997convex} and has become a prominent tool for handling a class of nonconvex optimization problems (see also \cite{tao2005dc,le2012exact,le2017stochastic,lethi2018}). 
In particular, \citet{van2015convergence} and \cite{wen2017proximal} considered the {majorized DC algorithm}, which motivated the optimization method developed in this paper.
However, both \cite{van2015convergence} and \citet{wen2017proximal} used the majorization technique with restricted choices of majorants, and neither considered the introduction of the indefinite proximal terms. In addition, \citet{wen2017proximal} further assumes the smooth part in the objective to be convex. 
In comparison with the existing methods, our DC programming method applies to nonsmooth problems and is compatible with a more flexible and possibly indefinite proximal term.

Finally, we would like to mention the probabilistic tools we used to derive the statistical results. Recent years have witnessed many works on measure concentration of dependent random variables, e.g., \citet{Mar96, Kon07, KRa08, Pau15, JFS18}, etc. Nevertheless, these results do not suffice to establish the desired statistical guarantee, because exploiting low-rank structure requires studying the concentration of a matrix martingale in terms of the spectral norm, as shown in Lemma \ref{lem:gradient}. The matrix Freedman inequality \citep[][Corollary~1.3]{tropp2011freedman} turns out to be the right tool for analyzing the concentration of the matrix martingale. We also used an variant of Bernstein's inequality for general Markov chains \citep[][Theorem~1.2]{JFS18} to derive an exponential tail bound for the status counts of the Markov chain $\cX$.

\section{Minimax rate-optimal estimation of low-rank Markov chains}
\label{sec:StatProperty}

Consider an ergodic Markov chain $\cX = \{X_0, X_1, \ldots, X_n\}$ on $p$ states $\cS=\{s_j\}_{j=1}^p$ with the transition probability matrix $\bP\in\mathbb{R}^{p\times p}$ and stationary distribution $\pi$, where $P_{ij} = \PP(X_1 = s_j|X_0 = s_i)$ for any $i, j \in [p]$. Let $\pi_{\min} := \min_{j \in [p]} \pi_j$ and $\pi_{\max} := \max_{j \in [p]} \pi_j$. We quantify the distance between two transition matrices $\bP$ and $\widehat{\bP}$ in Frobenius norm $\|\widehat{\bP} - \bP\|_{\rm F} = \bigl\{\sum_{i,j  = 1}^p (\widehat P_{ij} - P_{ij})^2\bigr\}^{1/2}$ and Kullback--Leibler divergence  $\DKL(\bP, \widehat{\bP}) = \sum_{i, j=1}^p \pi_iP_{ij}\log(P_{ij}/ \widehat P_{ij})1_{\{P_{ij} \neq 0\}}$.
Suppose that the unknown transition matrix $\bP$ has a small rank $r \ll p$. Our goal is to estimate the transition matrix $\P$ via a state trajectory of length $n$. 

\subsection{Spectral gap of nonreversible Markov chains}
We first introduce the {\it right $\cL_2$-spectral gap} of $\bP$ \citep{Fil91, JFS18}, a quantity that determines the convergence speed of the Markov chain $\cX$ to its invariant distribution $\pi$. 
Let $\cL_2(\pi):= \{h \in \Re^p: \sum_{j \in [p]} h_j^2 \pi_j < \infty\}$ be a Hilbert space endowed with the following inner product: 
\[
\inn{h_1, h_2}_{\pi} := \sum_{j \in [p]} h_{1j} h_{2j}\pi_j. 
\]
The matrix $\bP$ induces a linear operator on $\cL_2(\pi)$: $h \mapsto \bP h$, which we abuse $\bP$ to denote. Let $\bP^*$ be the adjoint operator of $\bP$ with respect to $\cL_2(\pi)$:
$$\P^* = \diag(\pi)^{-1} \P^\top \diag(\pi).$$ 
Note that the following four statements are equivalent: (a) $\P$ is self-adjoint; (b) $\P^\ast = \P$; (c) the detailed balance condition holds: $\pi_i P_{ij} = \pi_j P_{ji}$; (d) the Markov chain is reversible. In our analysis, we do {\it not} require the Markov chain to be reversible. We therefore introduce the {\it additive reversiblization of $\bP$}: $(\bP + \bP^*) / 2$, which is a self-adjoint operator on $\cL_2(\pi)$ and has the largest eigenvalue as 1. The right spectral gap of $\bP$ is defined as follows: 
\begin{definition}[Right $\cL_2$-spectral gap]
	We say  the right $\cL_2$-spectral gap of $\bP$ is $1 - \rho_+ $ if 
	\[
	\rho_+:= \sup_{\inn{h, 1}_{\pi} = 0, \inn{h, h}_{\pi} = 1} \frac{1}{2}\inn{(\bP + \bP^*)h, h}_{\pi} < 1, 
	\]
	where $1$ in $\inn{h, 1}$ refers to the all-one $p$-dimensional vector. 
\end{definition}
Define the $\epsilon$-mixing time of the Markov chain $\cX$ as
\[
\tau(\epsilon) := \min\{t: \max_{j \in [p]} \|(\bP^t)_{j\cdot} - \pi\|_{\mathrm{TV}} \le \epsilon\}, 
\]
where $\|(\bP^t)_{j\cdot} - \pi\|_{\mathrm{TV}} := 2 ^ {-1} \lonenorm{(\bP^t)_{j\cdot} - \pi}$ is the total variation distance between $\bP ^ t_{j \cdot}$ and $\pi$. For reversible and ergodic Markov chains, \citet[][Theorem~12.3]{LPe17} show that 
\be
	\label{eq:mixing_gap}
	\tau(\epsilon) \le \frac{1}{1 - \rho_+}\log\biggl(\frac{1}{\epsilon \pi_{\min}}\biggr), 
\ee
which implies that the larger the spectral gap is, the faster the Markov chain converges to the stationary distribution.

\subsection{Estimation methods and statistical results}
\label{sec:main_results}

Now we are in position to present our methods and statistical results. Given the trajectory $\{X_1,\ldots,X_n\}$, we count the number of times that the state $s_i$ transitions to $s_j$:
\[ n_{ij}: = \left|\{1\le k\le n \mid \, X_{k-1} = s_i, X_k = s_j\}\right|.\]
Let $n_i: = \sum_{j=1}^{p} n_{ij}$ for $i=1,\ldots, p$ and $n := \sum_{i=1}^p n_i$. The averaged negative log-likelihood function of $\P$ based on the state-transition trajectory $\{x_0, \ldots, x_n\}$ is
\begin{equation}\label{eq:log-likelihood}
{\ell_n}(\P):= - \frac{1}{n} \sum_{k = 1}^n \log (\inn{\bP, \bX_k}) = -\frac{1}{n} \sum_{i=1}^{p}\sum_{j=1}^{p} n_{ij}\log(P_{ij}), 
\end{equation}
where $\bX_k := e_ie_j^\top \in \Re^{p \times p}$ if $x_k = s_i$ and $x_{k + 1} = s_j$. We first impose the following assumptions on $\bP$ and $\pi$.  
\begin{assumption}
	\label{asp:1}
	(i) $\rank(\bP) = r$; (ii) There exist some positive constants $\alpha, \beta > 0$ such that for any $1 \le j, k \le p$, $P_{jk} \in \{0\} \cup [\alpha / p, \beta / p]$. 
\end{assumption}
\begin{remark}
The entrywise constraints on $\bP$ are imposed by our theoretical analysis and may not be necessary in practice. 
	Specifically, the upper and lower bounds for the nonzero entries of $\bP$ 
	ensure that (i) the gradient of the log-likelihood $\nabla \ell_n(\bP)$ is well controlled and exhibits exponential concentration around its population mean (see \eqref{eq:log_likelihood} for the reason we need $\alpha$ there); 
	(ii) the converter between the $\ell_2$-risk $\fnorm{\widehat \bP - \bP}$ ($\fnorm{\widehat \bP^r - \bP}$ resp.) and the KL-divergence $D_{\KL}(\bP, \widehat \bP)$ ($D_{\KL}(\bP, \widehat \bP^r)$ resp.) depends on $\alpha$ and $\beta$, as per Lemma \ref{lem:kl_to_l2}. The entry-wise upper and lower bounds are common in statistical analysis of count data, e.g., Poisson matrix completion \citep[Equation (10)]{cao2016poisson}, Poisson sparse regression \citep[Assumption 2.1]{jiang2015minimax}, point autoregressive model \citep[Definition of $\mathcal{A}_s$]{hall2016inference}, etc.
\end{remark}

\begin{remark}
	If we remove $0$ in the feasible range of $P_{jk}$, we obtain the $(\alpha/p)$-minoration condition: $P_{jk} \ge \alpha / p$ for all $j, k \in [p]$. The $(\alpha/p)$-minoration condition implies strong mixing since combining \citet[][pp. 237-238]{Bre99} and \citet[][Lemma~2.2.2]{Kon07} yields $1 - \rho_+ \ge \alpha$ and we can deduce that $\tau(\epsilon) \le \alpha^{-1}\log\{(\epsilon \pi_{\min})^{-1}\}$ given \eqref{eq:mixing_gap}. 
\end{remark}

Next we propose and analyze a nuclear-norm regularized maximum likelihood estimator (MLE) of $\bP$ defined as follows: 
\begin{equation}
\label{prob:convex-nuclear}
\begin{array}{rllll}
%
\widehat{\P} := & \argmin ~\ell_n(\Q) + \lambda \nnorm{\Q}\\
\mbox{s.t.} & \Q 1_p = 1_p,\quad \alpha / p \le Q_{ij}\le \beta / p,\quad \forall\, 1\le i,j\le p, 
\end{array}
\end{equation}
where $\lambda>0$ is a tuning parameter. Note that we cannot allow $\bQ$ to have zero entries as in Assumption \ref{asp:1}, because otherwise we may have that $\widehat P_{ij} = 0$ and $P_{ij} > 0$ for some $(i, j)$, violating the requirement of the definition of $\DKL(\bP, \widehat\bP)$. Our first theorem shows that with an appropriate choice of $\lambda$, $\widehat \bP$ exhibits a sharp statistical rate. For simplicity, from now on we say $a \gtrsim b$ ($a \lesssim b$) if there exists a universal constant $c > 0$ ($C > 0$) such that $a \ge cb$ ($a \le Cb$). 

\begin{theorem}[Statistical guarantee for the nuclear-norm regularized estimator]
	\label{thm:nuclear}
	Suppose the initial state $X_0$ is drawn from the stationary distribution $\pi$ and Assumption \ref{asp:1} holds. There exists a universal constant $C_1 > 0$, such that for any $\xi > 1$, if we choose 
	\[
		\lambda =  C_1 \biggl\{\biggl(\frac{\xi p ^ 2\pi_{\max} \log p}{n\alpha}\biggr)^{\! 1 / 2} + \frac{\xi p\log p}{n \alpha}\biggr\}, 
	\]
	then whenever $n\pi_{\max}(1 - \rho_+) \ge \max\{\max(20, \xi ^ 2) \log p, \log n\}$, we have that 
	\[
	\begin{aligned}
		\PP \biggl( \DKL(\bP, \widehat\bP) \gtrsim \frac{\xi r\pi_{\max}\beta ^ 2 p \log p}{\pi_{\min}\alpha ^ 3n} + \frac{\xi\pi_{\min}}{rp\pi_{\max}\log p}\biggr) \lesssim e^{- \xi} + p^{-(\xi - 1)} + p^{-10},
	\end{aligned}
	\]
	and that 
	\[
		\begin{aligned}
		\PP \biggl( \fnorm{\widehat \bP - \bP} ^ 2 \gtrsim \frac{\xi r\pi_{\max}\beta ^ 4\log p}{\pi_{\min} ^ 2\alpha ^ 4n} + \frac{\xi\beta ^ 2}{\alpha rp ^ 2 \pi_{\max} \log p}\biggr) \lesssim e^{- \xi} + p^{-(\xi - 1)} + p^{-10}. 
		\end{aligned}
	\]	

\end{theorem}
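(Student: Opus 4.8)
The plan is to run the standard analysis of nuclear-norm regularized M-estimators, adapted to the Markov-dependent log-likelihood. I would begin from the basic inequality implied by optimality of $\widehat\bP$: since $\bP$ is feasible for \eqref{prob:convex-nuclear},
\[ \ell_n(\widehat\bP)-\ell_n(\bP)\le\lambda\bigl(\nnorm{\bP}-\nnorm{\widehat\bP}\bigr). \]
Writing $\bDelta:=\widehat\bP-\bP$ and $D_{\ell_n}(\widehat\bP,\bP):=\ell_n(\widehat\bP)-\ell_n(\bP)-\inn{\nabla\ell_n(\bP),\bDelta}$ for the Bregman divergence, this rearranges to $D_{\ell_n}(\widehat\bP,\bP)\le-\inn{\nabla\ell_n(\bP),\bDelta}+\lambda(\nnorm{\bP}-\nnorm{\widehat\bP})$. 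The first structural point is that the row-stochasticity constraint forces $\bDelta 1_p=0$, so the rank-one population gradient $\EE\nabla\ell_n(\bP)=-\pi 1_p^\top$ (whose rows are constant) has vanishing inner product with $\bDelta$; hence $\inn{\nabla\ell_n(\bP),\bDelta}=\inn{\G,\bDelta}$ with $\G:=\nabla\ell_n(\bP)-\EE\nabla\ell_n(\bP)$. Invoking the gradient bound of Lemma \ref{lem:gradient}, the stated choice of $\lambda$ guarantees $\opnorm{\G}\le\lambda/2$ on an event of probability at least $1-e^{-\xi}-p^{-(\xi-1)}-p^{-10}$; matrix duality then gives $|\inn{\G,\bDelta}|\le\opnorm{\G}\nnorm{\bDelta}$. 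Feeding this into the basic inequality and using the decomposability of the nuclear norm on the tangent space of the rank-$r$ matrix $\bP$ produces both the cone constraint (the nuclear norm of the off-tangent projection of $\bDelta$ is at most a constant times that of the on-tangent projection) and, via $\nnorm{\bDelta}\lesssim\sqrt r\,\fnorm{\bDelta}$ inside the cone, the upper bound $D_{\ell_n}(\widehat\bP,\bP)\lesssim\lambda\sqrt r\,\fnorm{\bDelta}$.

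For the matching lower bound I would exploit a clean identity: replacing the empirical weights $n_{ij}/n$ by the population weights $\pi_iP_{ij}$ turns the Bregman divergence into $\sum_{i,j}\pi_iP_{ij}\bigl(-\log t_{ij}+t_{ij}-1\bigr)$ with $t_{ij}:=\widehat P_{ij}/P_{ij}$, and the linear part $\sum_{i,j}\pi_iP_{ij}(t_{ij}-1)=\sum_i\pi_i(\bDelta 1_p)_i$ again vanishes by the constraint, leaving exactly $\DKL(\bP,\widehat\bP)$. Thus it suffices to show the empirical Bregman divergence does not fall far below its population value, i.e. $D_{\ell_n}(\widehat\bP,\bP)\ge\DKL(\bP,\widehat\bP)-\tau_n$. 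Because the entrywise bounds $\alpha/p\le P_{ij},\widehat P_{ij}\le\beta/p$ confine each $t_{ij}$ to a compact interval on which $-\log t+t-1$ is nonnegative and bounded, this reduces to controlling $\sum_{i,j}(n_{ij}/n-\pi_iP_{ij})(-\log t_{ij}+t_{ij}-1)$, which I would bound using the Bernstein-type inequality for Markov chains (Theorem 1.2 of \citet{JFS18}) together with the mixing bound \eqref{eq:mixing_gap}. The sample-size condition $n\pi_{\max}(1-\rho_+)\ge\max\{\max(20,\xi^2)\log p,\log n\}$ is precisely what keeps the resulting tolerance $\tau_n$ at the lower order appearing as the second additive term in each displayed bound.

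Combining the two estimates gives $\DKL(\bP,\widehat\bP)\le D_{\ell_n}(\widehat\bP,\bP)+\tau_n\lesssim\lambda\sqrt r\,\fnorm{\bDelta}+\tau_n$. To close the recursion I would invoke Lemma \ref{lem:kl_to_l2}, the quadratic equivalence $\DKL(\bP,\widehat\bP)\asymp\kappa\,\fnorm{\bDelta}^2$ in which the conversion factor $\kappa$ absorbs the dependence on $\alpha,\beta$ and the stationary probabilities, to replace $\fnorm{\bDelta}$ by a constant multiple of $\kappa^{-1/2}\sqrt{\DKL(\bP,\widehat\bP)}$. Solving the resulting inequality $\DKL\lesssim\lambda\sqrt{r/\kappa}\,\sqrt{\DKL}+\tau_n$ for $\DKL$ yields $\DKL(\bP,\widehat\bP)\lesssim\lambda^2 r/\kappa+\tau_n$; substituting $\lambda^2\asymp\xi p^2\pi_{\max}\log p/(n\alpha)$ and the explicit $\kappa$ gives the first displayed bound, and a final application of Lemma \ref{lem:kl_to_l2} in the reverse direction converts it into the stated Frobenius-norm bound.

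I expect the decisive difficulty to be the concentration step under Markovian dependence: the tolerance $\tau_n$ must be controlled uniformly over all $p^2$ transition pairs simultaneously, including those with tiny stationary visitation, which forces a union bound calibrated to the mixing time and is the reason the hypotheses involve $(1-\rho_+)$, $\pi_{\max}$, and the $\max(20,\xi^2)\log p$ threshold. A related subtlety is the zero pattern of $\bP$: since feasible $\widehat\bP$ must stay above $\alpha/p$ while $P_{ij}=0$ off the support, those entries never enter the KL functional and escape the Bregman lower bound, so they must be absorbed through the cone constraint and the tolerance term rather than controlled by curvature, making the interplay between the cone localization and $\tau_n$ the technically delicate part of keeping the additive term at the stated lower order.
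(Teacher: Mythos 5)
You follow the paper's skeleton faithfully: the basic inequality from optimality, the cone restriction via decomposability (the paper's Lemma \ref{lem:large_lambda}), spectral-norm control of the centered, row-projected gradient via matrix Freedman (Lemma \ref{lem:gradient}), a curvature lower bound, the self-bounding inequality in $\sqrt{\DKL}$, and Lemma \ref{lem:kl_to_l2} for the Frobenius conversion. The genuine gap is in the one step you outsource to ``Bernstein plus a union bound'': the uniform lower bound $D_{\ell_n}(\widehat\bP,\bP)\ge\DKL(\bP,\widehat\bP)-\tau_n$. A pointwise Bernstein bound at fixed $t_{ij}$ is inapplicable because $\widehat\bP$ is a function of the same trajectory, and a union bound over the $p^2$ entries does not yield uniformity over the \emph{continuum} of feasible matrices. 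The only way purely entrywise control can be made uniform is through
\[
|\tau_n|\;\le\;\max_{i,j}\Bigl|\frac{n_{ij}}{n}-\pi_iP_{ij}\Bigr|\sum_{i,j}\bigl(t_{ij}-1-\log t_{ij}\bigr)\;\le\;\frac{\beta^2p^2}{2\alpha^4}\,\max_{i,j}\Bigl|\frac{n_{ij}}{n}-\pi_iP_{ij}\Bigr|\,\fnorm{\bDelta}^2,
\]
and comparing with the curvature $\DKL(\bP,\widehat\bP)\ge\{\alpha\pi_{\min}p/(2\beta^2)\}\fnorm{\bDelta}^2$ of Lemma \ref{lem:kl_to_l2}, absorbing $\tau_n$ into $\tfrac12\DKL$ forces $\max_{i,j}|n_{ij}/n-\pi_iP_{ij}|\lesssim\alpha^5\pi_{\min}/(\beta^4p)$. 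Since the maximal entrywise deviation is itself of order $\{\pi_{\max}\beta(\log p)/(pn)\}^{1/2}$, this route demands $n\gtrsim p^2\log p$ when $\pi_{\min}\asymp\pi_{\max}\asymp1/p$ and $\alpha,\beta\asymp1$: the effective dimension becomes $p^2$, and the theorem's leading term $r\pi_{\max}\beta^2p\log p/(\pi_{\min}\alpha^3n)\asymp rp\log p/n$ cannot be recovered. The low-rank structure must enter the concentration argument itself. That is exactly the content of the paper's Lemma \ref{lem:uniform_law}, whose proof peels the feasible set into KL shells, applies Adamczak's regeneration-based inequality \citep[Theorem~7]{Ada08} for suprema of Markov-chain empirical processes, symmetrizes, uses Ledoux--Talagrand contraction to linearize the logarithm, and only then bounds the supremum through the duality $|\inn{n^{-1}\sum_i\varepsilon_i\bX_i,\bQ-\bP}|\le\opnorm{n^{-1}\sum_i\varepsilon_i\bX_i}\,\nnorm{\bQ-\bP}$ together with matrix Freedman and the cone bound $\nnorm{\bQ-\bP}\lesssim\sqrt{r}\,\fnorm{\bQ-\bP}$. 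You deploy precisely this machinery for the gradient term, but it is equally indispensable for $\tau_n$, and without it the stated rate is out of reach.

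A secondary flaw: your claim that $\EE\nabla\ell_n(\bP)=-\pi1_p^\top$ is rank-one with constant rows, hence orthogonal to $\bDelta$, is false when $\bP$ has zero entries, which Assumption \ref{asp:1} permits. Transitions with $P_{ij}=0$ never occur, so $(\EE\nabla\ell_n(\bP))_{ij}=-\pi_i$ on the support of row $i$ and $0$ off it; consequently $\inn{\EE\nabla\ell_n(\bP),\bDelta}=\sum_{(i,j):P_{ij}=0}\pi_i\widehat P_{ij}$, which is of constant order whenever a constant fraction of the entries of $\bP$ vanish, since feasibility forces $\widehat P_{ij}\ge\alpha/p$ everywhere. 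The same bias enters your population Bregman identity with the opposite sign, $\EE\,D_{\ell_n}(\widehat\bP,\bP)=\DKL(\bP,\widehat\bP)-\sum_{(i,j):P_{ij}=0}\pi_i\widehat P_{ij}$, so the two occurrences cancel once the pieces are assembled correctly --- but your proposal asserts both are zero rather than exhibiting the cancellation, and as written the Bregman route appears to lose a non-vanishing term. The paper avoids this entirely by working with the empirical divergence $\widetilde D_{\KL}$, whose expectation equals $\DKL$ exactly regardless of the zero pattern.
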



\begin{remark}
		When $n \lesssim \{rp \pi_{\max}(\log p) \beta/ (\pi_{\min} \alpha ^ {\!3 / 2})\} ^ 2$, the second terms of both the KL-Divergence and Frobenius-norm error bounds are dominated by the first terms respectively, so that 
		\be
			\label{eq:hd_result}
			\DKL(\bP, \widehat\bP) = O_{\PP}\biggl(\frac{r\pi_{\max}\beta ^ 2 p \log p}{\pi_{\min}\alpha ^ 3n}\biggr)~\text{and}~\fnorm{\widehat\bP - \bP} ^ 2 = O_{\PP}\biggl(\frac{r\pi_{\max}\beta ^ 4\log p}{\pi_{\min} ^ 2\alpha ^ 4n}\biggr).
		\ee
		When $\alpha \asymp \beta$ and $\pi_{\max} \asymp \pi_{\min}$, we have that $\alpha, \beta \asymp 1$ and that $\pi_{\max}, \pi_{\min} \asymp 1 / p$. Therefore, $\DKL(\bP, \widehat \bP) = O_{\PP}(rp\log p / n)$ and $\fnorm{\widehat \bP - \bP} ^ 2 = O_{\PP}(rp\log p / n)$. These rates are consistent with those derived in the literature of low-rank matrix estimation \citep{NWa11, KLT11}. For a big $n$, the current error bounds are sub-optimal: the second terms of the bounds are independent of $n$ and thus do not converge to zero as $n$ goes to infinity. These terms are due to the requirement of the uniform concentration of $\widetilde D_{\KL}(\bP, \widehat\bP)$, the empirical counterpart of $\DKL(\bP, \widehat \bP)$, to $\DKL(\bP, \widehat \bP)$ (see Lemma \ref{lem:uniform_law}). We eliminate these trailing terms through an alternative proof strategy in Section \ref{sec:alt}, though the resulting statistical rates have worse dependence on $\alpha$ and $\beta$ and are thus relegated to the appendix.
\end{remark}

\begin{remark}
	When $r = 1$, $\bP$ can be written as $1v^\top$ for some vector $v\in \Re^{p}$, and then estimating $\bP$ essentially reduces to estimating a discrete distribution from multinomial count data. The first term of the upper bounds in Theorem \ref{thm:nuclear} nearly matches (up to a log factor) the classical results of discrete distribution estimation $\ell_2$ risks (see, e.g., \citet[Pg. 349]{lehmann2006theory}). 
\end{remark}

Next we move on to the second approach -- using rank-constrained MLE to estimate $\bP$: 
\begin{equation}\label{prob:nonconvex-lowrank}
\begin{array}{rllll}
\widehat{\P}^r  :=  &  \argmin {\ell_n}(\Q) \\
 \mbox{s.t.} & \Q 1_p = 1_p,\quad \alpha / p \le Q_{ij}\le \beta / p,\quad \forall\, 1\le i,j\le p,\quad \text{rank}(\Q) \le r. 
\end{array}
\end{equation}
Similarly to \eqref{prob:convex-nuclear}, we cannot allow $\bQ$ to have zero entries. In contrast to $\widehat \bP$, the rank-constrained MLE $\widehat \bP^r$ enforces the prior knowledge ``$\bP$ is low-rank" exactly without inducing any additional bias. It requires solving a non-convex and non-smooth optimization problem, for which we will provide an algorithm based on DC programming in Section \ref{sec:6}. Here we first present its statistical guarantee. 

\begin{theorem}[Statistical guarantee for the rank-constrained estimator]
	\label{thm:rank}
		Suppose that Assumption \ref{asp:1} holds and that $n\pi_{\max} (1 - \rho_+) > \max(20 \log p, \log n)$. There exist universal constants $C_1, C_2 > 0$ such that for any $\xi > 0$, 
		\[
			\PP\biggl\{D_{\KL}(\bP, \widehat \bP^r) \ge \max\biggl(\frac{C_1 r\pi_{\max} \beta^2 p \log p }{\pi_{\min}\alpha ^ 3n}, \frac{\xi \pi_{\min}}{rp \pi_{\max} \log p}\biggr)\biggr\} \le C_2e^{- \xi}, 
		\]
		and 
		\[
			\PP\biggl\{\fnorm{\widehat \bP^r - \bP} ^ 2 \ge \max\biggl(\frac{C_1 r\pi_{\max} \beta^4 \log p }{\pi ^ 2_{\min}\alpha ^ 4n}, \frac{\xi \beta ^ 2}{\alpha rp ^ 2 \pi_{\max}\log p}\biggr)\biggr\} \le C_2 e^{-\xi}. 
		\]
\end{theorem}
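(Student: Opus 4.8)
\emph{Proof plan.} The target is Theorem \ref{thm:rank}. My approach parallels the analysis of the nuclear-norm estimator in Theorem \ref{thm:nuclear}, but is in fact cleaner, because the hard rank constraint supplies the \emph{deterministic} bound $\rank(\widehat\bP^r - \bP) \le \rank(\widehat\bP^r) + \rank(\bP) \le 2r$, sparing us any cone/compatibility condition on the error matrix. I would prove the KL bound first and then transfer it to the Frobenius bound through the converter of Lemma \ref{lem:kl_to_l2}. This is natural because, under that converter, the signal term $\tfrac{C_1 r\pi_{\max}\beta^2 p\log p}{\pi_{\min}\alpha^3 n}$ and the resolution floor $\tfrac{\xi\pi_{\min}}{rp\pi_{\max}\log p}$ of the KL statement map exactly onto the two terms $\tfrac{C_1 r\pi_{\max}\beta^4\log p}{\pi_{\min}^2\alpha^4 n}$ and $\tfrac{\xi\beta^2}{\alpha rp^2\pi_{\max}\log p}$ of the Frobenius statement, so it suffices to establish the KL inequality and invoke Lemma \ref{lem:kl_to_l2}.

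First I would record the \textbf{basic inequality}. Writing the empirical excess risk $\widetilde D_{\KL}(\bP,\Q):=\ell_n(\Q)-\ell_n(\bP)$ and noting that $\bP$ has rank $r$ and (after a feasibility correction for the entries with $P_{ij}=0$, which I handle by comparing against a feasible rank-$\le r$ proxy $\bar\bP$ whose bias $\epsilon_{\mathrm{bias}}:=\widetilde D_{\KL}(\bP,\bar\bP)$ I bound separately and argue is of lower order) lies in the feasible region, optimality of $\widehat\bP^r$ yields $\widetilde D_{\KL}(\bP,\widehat\bP^r)\le\epsilon_{\mathrm{bias}}$. The key algebraic identity is then
\[
D_{\KL}(\bP,\widehat\bP^r) = \widetilde D_{\KL}(\bP,\widehat\bP^r) + \sum_{i,j}\Bigl(\pi_i P_{ij}-\tfrac{n_{ij}}{n}\Bigr)\log\!\bigl(P_{ij}/\widehat P^r_{ij}\bigr),
\]
so that bounding $D_{\KL}(\bP,\widehat\bP^r)$ reduces to controlling the centered-frequency sum on the right. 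Linearizing the logarithm at $\bP$, its leading part equals $-\inprod{\nabla\ell_n(\bP)}{\widehat\bP^r-\bP}$, since $\bigl(\pi_i P_{ij}-n_{ij}/n\bigr)/P_{ij}=\pi_i+\bigl(\nabla\ell_n(\bP)\bigr)_{ij}$, i.e. the centered gradient $\nabla\ell_n(\bP)+\pi\1^\top$.

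The \textbf{linear term} is where the low-rank gain materializes. Because $\widehat\bP^r$ and $\bP$ are both row-stochastic, $(\widehat\bP^r-\bP)\1=0$, so the rank-one mean direction cancels: $\inprod{\pi\1^\top}{\widehat\bP^r-\bP}=0$. Hence, by trace duality and $\nnorm{\widehat\bP^r-\bP}\le\sqrt{2r}\,\fnorm{\widehat\bP^r-\bP}$,
\[
\bigl|\inprod{\nabla\ell_n(\bP)}{\widehat\bP^r-\bP}\bigr| \le \opnorm{\nabla\ell_n(\bP)+\pi\1^\top}\,\nnorm{\widehat\bP^r-\bP} \le \sqrt{2r}\,\opnorm{\nabla\ell_n(\bP)+\pi\1^\top}\,\fnorm{\widehat\bP^r-\bP},
\]
and Lemma \ref{lem:gradient} (matrix Freedman applied to the matrix martingale $\nabla\ell_n(\bP)+\pi\1^\top$, whose increments have a single entry of size $\lesssim p/(n\alpha)$) controls $\opnorm{\nabla\ell_n(\bP)+\pi\1^\top}$ with high probability. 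The \emph{remainder} of the log-linearization, controlled uniformly over the feasible $\widehat\bP^r$ via the uniform law of Lemma \ref{lem:uniform_law}, contributes the non-vanishing floor $\tfrac{\xi\pi_{\min}}{rp\pi_{\max}\log p}$. Converting $\fnorm{\widehat\bP^r-\bP}$ back to $\sqrt{D_{\KL}(\bP,\widehat\bP^r)}$ through Lemma \ref{lem:kl_to_l2} turns the above into a self-bounding quadratic inequality in $\sqrt{D_{\KL}(\bP,\widehat\bP^r)}$; solving it gives $D_{\KL}(\bP,\widehat\bP^r)\lesssim r\,\opnorm{\cdot}^2\cdot(\text{converter factor})+\text{floor}+\epsilon_{\mathrm{bias}}$, which reproduces the two stated terms in a $\max(\cdot,\cdot)$ form obtained by a standard peeling over scales. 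Finally I would union-bound the two failure events (the Freedman bound, contributing $p^{-c}$, and the Markov-chain Bernstein/uniform-law bound from \citet{JFS18}, contributing $e^{-\xi}$) and absorb them into $C_2e^{-\xi}$, with the hypothesis $n\pi_{\max}(1-\rho_+)>\max(20\log p,\log n)$ placing both concentration inequalities in their effective regime.

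\textbf{The main obstacle} is the uniform control of the log-linearization remainder over the \emph{non-convex} rank-$\le r$, bounded-entry feasible set for a \emph{dependent} (Markov) sample: this requires pairing a covering/metric-entropy bound for bounded rank-$\le r$ matrices (entropy $\asymp rp\log(\cdot)$) with a Bernstein-type inequality for the status counts $n_{ij}$ of the chain, and it is the resolution of the resulting net that forces the non-vanishing floor term. A secondary technical point is absorbing the infeasibility of $\bP$ (its zero entries violate $Q_{ij}\ge\alpha/p$) into $\epsilon_{\mathrm{bias}}$ via a feasible low-rank proxy $\bar\bP$, which I expect to verify is of strictly lower order than the two displayed terms.
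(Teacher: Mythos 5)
Your plan contains the correct skeleton, but it is wrapped in machinery that the paper's proof deliberately avoids, and that machinery both rests on a mis-reading of Lemma \ref{lem:uniform_law} and would deliver a strictly weaker conclusion than the theorem states. The paper's argument is shorter than your plan: optimality of $\widehat\bP^r$ gives $\widetilde D_{\KL}(\bP,\widehat\bP^r)=\ell_n(\widehat\bP^r)-\ell_n(\bP)\le 0$ outright (there is no penalty term, hence no linear/gradient term to control), so $D_{\KL}(\bP,\widehat\bP^r)\le D_{\KL}(\bP,\widehat\bP^r)-\widetilde D_{\KL}(\bP,\widehat\bP^r)$; since $\rank(\widehat\bP^r-\bP)\le 2r$ implies $\nnorm{\widehat\bP^r-\bP}\le(2r)^{1/2}\fnorm{\widehat\bP^r-\bP}$, the estimator lies in the class $\cC$, and Lemma \ref{lem:uniform_law} applied with $\eta=\xi\pi_{\min}/(rp\pi_{\max}\log p)$ bounds the \emph{entire} deviation $D_{\KL}-\widetilde D_{\KL}$ by $\tfrac12 D_{\KL}+C_1\pi_{\max}\beta^2 rp\log p/(\pi_{\min}\alpha^3 n)$ on an event of probability $1-C_2e^{-\xi}$; rearranging and invoking Lemma \ref{lem:kl_to_l2} finishes both displays. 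Your plan instead splits $D_{\KL}-\widetilde D_{\KL}$ into a linear term (handled by trace duality and Lemma \ref{lem:gradient}) plus a log-linearization remainder which you claim Lemma \ref{lem:uniform_law} controls. That is not what the lemma says: it bounds the full centered sum $\sum_{ij}(\pi_iP_{ij}-n_{ij}/n)\log(P_{ij}/Q_{ij})$, not the second-order remainder after the linear part is subtracted. Used for the remainder it is a category error; used for the full deviation it makes your linear-term analysis redundant double-counting. (Your own ``main obstacle'' paragraph, which calls for a fresh covering/Bernstein argument for the remainder, implicitly concedes this.)

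Even if you repaired the remainder control with new machinery (essentially rebuilding the paper's alternative analysis of Section \ref{sec:alt} with Lemma \ref{lem:restricted_strong_convexity}), the route through Lemma \ref{lem:gradient} cannot produce the theorem as stated. That lemma's failure probability is $4p^{-(\xi-1)}+\exp\bigl(-n\pi_{\max}(1-\rho_+)/2\bigr)$; the second term is independent of $\xi$, so for large $\xi$ it cannot be absorbed into $C_2e^{-\xi}$, and your leading error term would inherit a factor of $\xi$ (as in Theorem \ref{thm:nuclear}) rather than the fixed constant $C_1$ in Theorem \ref{thm:rank}. This is precisely the point of the paper's remark that the rank-constrained proof ``requires fewer inequality steps'' than the nuclear-norm one: dropping the gradient step is what yields the clean $C_2e^{-\xi}$ tail and the $\xi$-free signal term. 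One minor credit: your feasibility proxy $\bar\bP$ for the zero entries of $\bP$ addresses a point the paper silently glosses over (when $P_{ij}=0$ the true $\bP$ is infeasible for \eqref{prob:nonconvex-lowrank}), though since $P_{ij}=0$ forces $n_{ij}=0$ almost surely, the issue is benign.
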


\begin{remark}
	The proof of the rank constrained method requires fewer inequality steps and is more straightforward than the that of the nuclear method. Although our upper bounds of the nuclear norm regularized method and the rank constrained one have the same rate, the difference of their proofs may implicitly suggest the advantage of the rank constrained method in the constant, as futher illustrated by our numerical studies.
\end{remark}

To assess the quality of the established statistical guarantee, we further provide a lower bound result below. It shows that when $\alpha, \beta$ are constants, both estimators $\widehat\bP$ and $\widehat \bP^r$ are rate-optimal up to a logarithmic factor. Informally speaking, they are not improvable for estimating the class of rank-$r$ Markov chains. 
\begin{theorem}[Minimax error lower bound for estimating low-rank Markov models]
	\label{thm:lower_bound}
	Consider the following set of low-rank transition matrices
	$$\Theta := \bigl\{\bP: \forall j, k \in [p], P_{jk} \in \{0\} \cup [\alpha / p, +\infty),~\bP1_p = 1_p,~\rank(\bP) \le r\bigr\}.$$
	There exists a universal constant $c > 0$ such that when $p(r -1) \ge 192\log 2$, we have
	\[
		\inf_{\widehat{\bP}}\sup_{\bP\in \Theta}\mathbb{E}\|\widehat{\bP} - \bP\|_F^2 \geq \frac{cp(r - 1)}{n\alpha}. 
	\]
\end{theorem}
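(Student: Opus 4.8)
The plan is to prove this minimax lower bound by the standard reduction to a multiple-hypothesis testing problem, combining a Varshamov--Gilbert packing with Fano's inequality, with the hypotheses engineered so that the chain spends as little time as the constraint $P_{jk}\ge\alpha/p$ permits in the states whose rows we perturb. Conceptually, a rank-$r$ stochastic matrix has, on top of one common ``background'' row, about $r-1$ rows that are free; estimating $\bP$ is therefore at least as hard as estimating $r-1$ conditional distributions, and the factor $1/\alpha$ will arise because each of these rows can be tied to a state of minimal stationary mass $\approx\alpha/p$, so that a trajectory of length $n$ reveals it through only $\approx n\alpha/p$ transitions.

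Concretely I would fix a background distribution $q_0$ that places the minimal admissible mass $\alpha/p$ on each of the $r-1$ designated states $1,\dots,r-1$, declare rows $r,\dots,p$ of every hypothesis equal to $q_0$, and let each designated row $i\le r-1$ equal $q_0+\Delta_i$ for a zero-sum perturbation $\Delta_i$ supported away from those $r-1$ columns. This forces $\bP\1_p=\1_p$, keeps all entries feasible for a small enough perturbation scale, and caps $\rank(\bP)\le r$ since every row lies in $\mathrm{span}(q_0,\Delta_1,\dots,\Delta_{r-1})$. Because columns $1,\dots,r-1$ are then constant equal to $\alpha/p$ across all rows, the stationary equation gives $\pi_i=\alpha/p$ exactly for each designated state $i$, which is precisely the source of the $1/\alpha$ factor. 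Indexing the hypotheses by the signs of the $\Delta_i$ and invoking the Varshamov--Gilbert bound produces a packing $\mathcal W\subset\Theta$ with $\log|\mathcal W|\gtrsim p(r-1)$ and pairwise separation $\fnorm{\bP_\omega-\bP_{\omega'}}^2$ proportional to the perturbation energy; the hypothesis $p(r-1)\ge192\log2$ is what makes $|\mathcal W|$ large enough for Fano's inequality to be effective.

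For the information side the key device is the exact decomposition, valid when $X_0\sim\pi$,
\[
\DKL(\PP_{\bP}\,\|\,\PP_{\bP'})=\DKL(\pi\,\|\,\pi')+n\sum_{i=1}^p\pi_i\,\DKL(\bP_{i\cdot}\,\|\,\bP'_{i\cdot}),
\]
which converts the dependent-sample problem into a weighted sum of per-row divergences. Only the $r-1$ designated rows contribute, each weighted by $\pi_i=\alpha/p$; bounding $\DKL(\bP_{i\cdot}\|\bP'_{i\cdot})$ by the corresponding $\chi^2$-divergence and checking that $\DKL(\pi\|\pi')$ is of lower order, the trajectory KL is of order $n\alpha$ times the perturbation energy. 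Choosing the perturbation scale to balance this KL against $\log|\mathcal W|$ then yields $\fnorm{\bP_\omega-\bP_{\omega'}}^2\gtrsim p(r-1)/(n\alpha)$, and Fano's inequality converts this into the claimed bound on $\inf_{\widehat\bP}\sup_{\bP\in\Theta}\EE\fnorm{\widehat\bP-\bP}^2$.

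The step I expect to be most delicate is choosing the perturbation support and magnitude so that the hypotheses remain simultaneously rank-$r$, row-stochastic, and entrywise at least $\alpha/p$, while still saturating the KL budget; the competition between the number of perturbed coordinates per row (which inflates the packing) and the per-coordinate magnitude allowed by the simplex and by the lower bound $\alpha/p$ (which is capped) is what ultimately fixes the rate and the threshold $p(r-1)\ge192\log2$. The second genuine difficulty, namely that the observations form one dependent trajectory rather than i.i.d.\ samples, is dissolved by the chain decomposition above, whose stationary weights $\pi_i=\alpha/p$ are exact for the designated states and therefore require no mixing-time correction.
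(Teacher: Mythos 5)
Your high-level strategy coincides with the paper's: reduction to multiple testing, a sign-pattern packing, Fano's inequality, the exact stationary-chain tensorization $\DKL(\PP_{\bP}\|\PP_{\bP'}) = \DKL(\pi\|\pi') + n\sum_i \pi_i \DKL(\bP_{i\cdot}\|\bP'_{i\cdot})$, and the device of attaching all perturbed rows to states of stationary mass exactly $\alpha/p$ so that the factor $1/\alpha$ appears. (Minor differences: the paper builds its packing probabilistically from i.i.d.\ Rademacher matrices plus Bernstein's inequality rather than by Varshamov--Gilbert, and its hypotheses all share literally the same invariant distribution $\mu$, so the $\DKL(\pi\|\pi')$ term you must separately control vanishes identically.)

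There is, however, a genuine quantitative gap in your construction, and it sits exactly at the step you flagged as delicate. By concentrating the entire perturbation in $r-1$ rows, you force each perturbed row to carry energy $\asymp p/(n\alpha)$ (the target separation $p(r-1)/(n\alpha)$ split over $r-1$ rows), i.e.\ per-entry deviations $\epsilon \asymp (n\alpha)^{-1/2}$ spread over $\approx p$ columns. But the simplex constraint together with $P_{jk}\ge \alpha/p$ leaves only room $O(1/p)$ around the background value $\approx 1/p$, so your hypotheses are feasible at the required scale only when $n\alpha \gtrsim p^2$. In the intermediate regime $p(r-1)\lesssim n\alpha \lesssim p^2$ --- which contains the statistically most relevant scaling $n \asymp rp\,\mathrm{polylog}(p)/\alpha$ for $r\ll p$, precisely where the lower bound is meant to match the upper bounds of Theorems \ref{thm:nuclear} and \ref{thm:rank} --- capping $\epsilon$ at $c/p$ yields only $\fnorm{\bP_\omega - \bP_{\omega'}}^2 \gtrsim (r-1)/p$, strictly weaker than the claimed $p(r-1)/(n\alpha)$. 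The paper's construction avoids this by decoupling the number of perturbed rows from the rank of the perturbation: it perturbs all $p/2$ low-mass rows, but the perturbation matrix repeats ($\pm$) a single $(p/4)\times(r-1)$ sign matrix $\bR^{(k)}$ across $l_0=p/(4(r-1))$ column blocks, so it still has rank $\le r-1$ while each entry has magnitude only $\asymp \eta/p$ with $\eta \asymp \{p(r-1)/(n\alpha)\}^{1/2}$; feasibility then needs only $\eta \lesssim 1$, i.e.\ $n\alpha \gtrsim p(r-1)$. To repair your argument you need the same idea: share the $r-1$ perturbation directions (up to sign) across $\Theta(p)$ low-mass rows instead of assigning one direction per row. (Both constructions are silently infeasible when $n\alpha \lesssim p(r-1)$, a regime the theorem statement glosses over, but that regime lies below the sample size needed for consistent estimation anyway.)
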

\begin{remark}	
	Theorem \ref{thm:lower_bound} shows that a smaller $\alpha$ makes the estimation problem harder. It still remains to be an open problem whether $\beta$ in Assumption \ref{asp:1} should be in this minimax risk or not. 
\end{remark}

Besides the full transition matrix $\bP$, the leading left and right singular vectors of $\bP$, denoted by $\bU, \bV \in \OO^{p\times r}$, also play important roles in Markov chain analysis. For example, performing $k$-means on reliable estimate of $\bU$ or $\bV$ can give rise to state aggregation of the Markov chain \citep{zhang2018optimal}. In the following, we further establish the statistical rate of estimating the singular subspace of the Markov transition matrix, based on the previous results. 

\begin{theorem}\label{thm:uv}
	Under the setting of Theorem \ref{thm:nuclear}, let $\widehat{\bU}, \widehat \bV \in \OO^{p\times r}$ be the left and right singular vectors of $\widehat{\bP}$  respectively. Then there exist universal constants $C_1, C_2$, such that for any $\xi > 0$, we have that 
	\begin{equation*}
	\max\left\{\|\sin\Theta(\widehat{\bU}, \bU)\|_F^2, \|\sin\Theta(\widehat{\bV}, \bV)\|_F^2\right\} 
	\le  \min\biggl\{\max\biggl(\frac{C r\pi_{\max} \beta^4\log p }{\pi ^ 2_{\min}\alpha ^ 4n\sigma_r^2(\bP)}, \frac{\xi \beta ^ 2}{\alpha rp ^ 2 \pi_{\max}(\log p)\sigma_r^2(\bP) }\biggr), r\biggr\}
	\end{equation*}
	with probability at least $1 - C_2(e^{- \xi} + p^{-(\xi - 1)} + p^{-10})$. Here, $\sigma_r(\bP)$ is the $r$-th largest singular value of $\bP$ and $\|\sin\Theta(\widehat{\bU}, \bU)\|_F := (r - \|\widehat{\bU}^\top\bU\|_F^2)^{1/2}$ is the Frobenius norm $\sin \Theta$ distance between $\widehat{\bU}$ and $\bU$.
\end{theorem}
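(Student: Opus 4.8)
The plan is to convert the Frobenius-norm error bound of Theorem \ref{thm:nuclear} into a bound on the singular subspaces through a deterministic matrix-perturbation argument, namely Wedin's $\sin\Theta$ theorem. The key structural observation is that $\bP$ has rank exactly $r$, so $\sigma_{r+1}(\bP) = 0$ and the only singular-value gap separating the signal subspaces from their orthogonal complements is $\sigma_r(\bP)$ itself. Writing $\bE := \widehat{\bP} - \bP$, Wedin's theorem then bounds both the left and right subspace errors simultaneously:
\[
\max\bigl\{\|\sin\Theta(\widehat{\bU}, \bU)\|_F,\ \|\sin\Theta(\widehat{\bV}, \bV)\|_F\bigr\} \le \frac{\sqrt{2}\,\fnorm{\bE}}{\sigma_r(\bP) - \sigma_{r+1}(\widehat{\bP})}.
\]
By Weyl's inequality and $\sigma_{r+1}(\bP)=0$ we have $\sigma_{r+1}(\widehat{\bP}) \le \opnorm{\bE}$, so on the event $\opnorm{\bE} \le \sigma_r(\bP)/2$ the denominator is at least $\sigma_r(\bP)/2$, and squaring yields $\max\{\|\sin\Theta(\widehat{\bU}, \bU)\|_F^2, \|\sin\Theta(\widehat{\bV}, \bV)\|_F^2\} \le 8\,\fnorm{\bE}^2/\sigma_r^2(\bP)$.

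Next I would substitute the high-probability bound of Theorem \ref{thm:nuclear} for $\fnorm{\widehat{\bP} - \bP}^2$ and divide by $\sigma_r^2(\bP)$; merging the two additive terms into their maximum at the cost of a universal constant reproduces exactly the first argument of the $\min$ in the statement. Independently, the deterministic identity $\|\sin\Theta(\widehat{\bU}, \bU)\|_F^2 = r - \|\widehat{\bU}^\top\bU\|_F^2 \le r$ (and its analogue for $\bV$) holds because $\widehat{\bU}, \bU \in \OO^{p\times r}$, supplying the second argument $r$. Taking the smaller of the two estimates then produces the claimed $\min$, and since the entire argument runs on the single good event of Theorem \ref{thm:nuclear}, the failure probability $1 - C_2(e^{-\xi} + p^{-(\xi - 1)} + p^{-10})$ is inherited verbatim with no fresh probabilistic input.

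The main obstacle is the gap requirement $\opnorm{\bE} \le \sigma_r(\bP)/2$ that legitimizes the Wedin step, since it can fail when the perturbation is comparable to $\sigma_r(\bP)$, and the crude estimate $\opnorm{\bE} \le \fnorm{\bE}$ is too lossy to cover the full regime in which the first term of the $\min$ is already smaller than $r$. I would resolve this by a case split keyed to the $\min$: whenever the first argument $C\fnorm{\bE}^2/\sigma_r^2(\bP)$ is at least $r$, the bound collapses to the trivial estimate $r$ and there is nothing to prove; in the complementary regime one must show the gap condition actually holds, which I expect to require controlling $\opnorm{\bE}$ directly via the spectral-norm martingale concentration already developed in Lemma \ref{lem:gradient}, rather than through $\fnorm{\bE}$. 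Getting this operator-norm control to align with the informative regime $C\fnorm{\bE}^2/\sigma_r^2(\bP) < r$ — so that Wedin applies with a genuinely universal constant — is the delicate point of the argument.
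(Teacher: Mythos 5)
Your reduction to Theorem \ref{thm:nuclear}, the deterministic bound $\|\sin\Theta(\widehat\bU,\bU)\|_F^2\le r$, and the inheritance of the failure probability are all fine; the genuine gap is exactly at the point you flag, and the repair you sketch does not close it. Wedin needs $\opnorm{\widehat\bP-\bP}\le\sigma_r(\bP)/2$, but Lemma \ref{lem:gradient} concentrates $\opnorm{\Pi_{\cN}(\nabla\ell_n(\bP))}$ --- the score evaluated at the \emph{true} $\bP$ --- and nothing in the paper (or in your sketch) converts that into operator-norm control of the error $\bE:=\widehat\bP-\bP$; the only available control on $\bE$ is the Frobenius bound $\fnorm{\bE}^2\le B$ of Theorem \ref{thm:nuclear}. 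Your case split then leaves an uncovered regime: the informative case of the $\min$ is $CB/\sigma_r^2(\bP)<r$, i.e.\ $\fnorm{\bE}<(r/C)^{1/2}\sigma_r(\bP)$, and this implies the gap condition through $\opnorm{\bE}\le\fnorm{\bE}$ only if $C\ge 4r$, which cannot be a universal constant. So whenever $r$ exceeds a fixed constant and $\sigma_r^2(\bP)/4\le B<r\sigma_r^2(\bP)/C$, neither Wedin nor the trivial bound $r$ yields the claim, and the argument as proposed breaks down.

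The paper avoids this entirely with a gap-free, purely deterministic perturbation bound that exploits the exact rank-$r$ structure of $\bP$ (rather than any spectral property of $\widehat\bP$). Writing $\bP=\bU\bD\bV^\top$ and letting $\widehat\bU_\perp$ be the orthocomplement of $\widehat\bU$, one has $\|\widehat\bU_\perp^\top\bU\|_F=\|\sin\Theta(\widehat\bU,\bU)\|_F$ and
\[
\|\sin\Theta(\widehat\bU,\bU)\|_F\,\sigma_r(\bP)\;\le\;\|\widehat\bU_\perp^\top\bU\bD\|_F\;=\;\|\widehat\bU_\perp^\top\bP\|_F\;\le\;\fnorm{\bE}+\Bigl(\sum_{j>r}\sigma_j^2(\widehat\bP)\Bigr)^{\!1/2}\;\le\;2\fnorm{\bE},
\]
where the last step is Eckart--Young: $\sum_{j>r}\sigma_j^2(\widehat\bP)=\min_{\rank(\bM)\le r}\fnorm{\widehat\bP-\bM}^2\le\fnorm{\widehat\bP-\bP}^2$, since $\bP$ is itself a rank-$r$ competitor. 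The same holds for $\bV$. This requires no eigengap and no condition on $\opnorm{\bE}$, so taking the minimum with $r$ and substituting Theorem \ref{thm:nuclear} goes through unconditionally --- precisely the step your Wedin route cannot supply. (The paper writes this chain with constant $1$; the argument above gives constant $2$, which is absorbed into $C$.) If you insist on Wedin, you would need a separate high-probability operator-norm bound on $\widehat\bP-\bP$, which the paper's lemmas do not provide.
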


%


\subsection{Proof outline of Theorems \ref{thm:nuclear}, \ref{thm:rank}}
\label{sec:statistical_analysis}
In this section, we elucidate the roadmap to proving Theorems \ref{thm:nuclear} and \ref{thm:rank}. Complete proofs are deferred to the supplementary materials. We mainly focus on Theorem \ref{thm:nuclear} for the nuclear-norm penalized MLE $\widehat \bP$, as we use similar strategies to prove Theorem \ref{thm:rank}. 

We first show in the forthcoming Lemma \ref{lem:large_lambda} that when the regularization parameter $\lambda$ is sufficiently large, the statistical error $\widehat \bDelta := \widehat \bP - \bP$ falls in a restricted nuclear-norm cone. This cone structure is crucial to establishing strong  statistical guarantee for estimation of low-rank matrices with high-dimensional scaling \citep{NWa11}. Define a linear subspace $\cN := \{ \bQ: \bQ 1_p = 1_p \}$ 
and denote the corresponding projection operator by $\Pi_{\cN}$. In other words, for any $\bQ \in \cN$ and any $j = 1, \ldots, p$, the summation of all the entries in the $j$th row of $\bQ$ equals one. One can verify that for any $\bQ \in \Re^{p \times p}$, $\Pi_{\cN}(\bQ) = \bQ - \bQ 11^\top / p$. Let $\bP=\bU\bD\bV^\top$ be an SVD of $\bP$, where $\bU, \bV\in \Re^{p \times r}$ are orthonormal and the diagonals of $\bD$ are in the non-increasing order. Define
\[
\begin{aligned}
	& \cM:=\{\bQ \in \Re^{p \times p}\ |\ \text{row}(\bQ)\subseteq \text{col}(\bV), \text{col}(\bQ)\subseteq \text{col}(\bU)\}, \\
	& \overline{\cM}^{\perp}:=\{\bQ \in \Re^{p \times p}\ |\ \text{row}(\bQ)\perp \text{col}(\bV), \text{col}(\bQ)\perp \text{col}(\bU)\},
\end{aligned}
\]
where col$(\cdot)$ and row$(\cdot)$ denote the column space and row space respectively. We can write any $\bDelta\in \Re^{p \times p}$ as
\[
\bDelta=[\bU, \bU^\perp]\left[
\begin{array}{cc}
\bGamma_{11} & \bGamma_{12} \\
\bGamma_{21} & \bGamma_{22}
\end{array} \right] [\bV, \bV^\perp]^\top.
\]
Define $\bDelta_{\cW}$ as the projection of $\bDelta$ onto any Hilbert space $\cW\subseteq\Re^{p \times p}$. Then,
\be
\begin{aligned}
	\bDelta_{\cM} =\bU\bGamma_{11}{\bV}^\top,\quad\bDelta_{\overline\cM^\perp} = \bU^{\perp} \bGamma_{22}(\bV^{\perp})^{\top}, \quad \bDelta_{\overline\cM} = [\bU, \bU^{\perp}]\left[
	\begin{array}{cc}
		\bGamma_{11} & \bGamma_{12} \\
		\bGamma_{21} & \bzero
	\end{array} \right] [\bV, \bV^{\perp}]^{\top}.
\end{aligned}
\ee
The lemma below shows that $\widehat \bDelta := \widehat\bP - \bP$ falls in a nuclear-norm cone if $\lambda$ is sufficiently large. 
\begin{lemma}
	\label{lem:large_lambda}
	If $\lambda\ge 2 \opnorm{\Pi_{\cN}(\nabla \ell_n(\bP))}$ in \eqref{prob:convex-nuclear}, then we have that 
	\[
		\nnorm{ \widehat\bDelta_{\overline\cM^\perp}} \le 3\nnorm{\widehat\bDelta_{\overline\cM}} + 4\nnorm{\bP_{\cM^\perp}}. 
	\]
	In particular, when $\bP \in \cM$, we have that $\nnorm{\bP_{\cM^{\perp}}} = 0$ and that
	\be
	\label{eq:converter}
	\nnorm{\widehat \bDelta} \le \nnorm{\bDelta_{\overline\cM^{\perp}}} + \nnorm{\bDelta_{\overline \cM}} \le 4 \nnorm{\widehat \bDelta_{\overline \cM}} \le 4(2r)^{1 / 2}\fnorm{\widehat \bDelta}.
	\ee
\end{lemma}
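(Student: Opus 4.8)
The plan is to follow the standard restricted-cone argument for nuclear-norm penalized M-estimators \citep{NWa11}, combining the optimality of $\widehat\bP$, convexity of $\ell_n$, nuclear/operator-norm duality, and the decomposability of the nuclear norm across $\cM$ and $\overline{\cM}^{\perp}$. First I would write down the \emph{basic inequality}: comparing the penalized objective at $\widehat\bP$ with that at $\bP$ and using that $\widehat\bP$ is the minimizer over the feasible set gives $\ell_n(\widehat\bP) + \lambda\nnorm{\widehat\bP} \le \ell_n(\bP) + \lambda\nnorm{\bP}$, i.e. $\ell_n(\widehat\bP) - \ell_n(\bP) \le \lambda(\nnorm{\bP} - \nnorm{\widehat\bP})$. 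Since the averaged negative log-likelihood $\ell_n$ is convex, $\ell_n(\widehat\bP) - \ell_n(\bP) \ge \langle \nabla\ell_n(\bP),\, \widehat\bDelta\rangle$, so that $\langle \nabla\ell_n(\bP),\, \widehat\bDelta\rangle \le \lambda(\nnorm{\bP} - \nnorm{\widehat\bP})$.

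Next I would exploit the row-sum constraint to sharpen the right-hand side. Because $\bP 1_p = \widehat\bP 1_p = 1_p$, we have $\widehat\bDelta 1_p = 0$, so $\widehat\bDelta$ lies in the subspace onto which $\Pi_{\cN}$ projects and hence $\Pi_{\cN}(\widehat\bDelta) = \widehat\bDelta$. As $\Pi_{\cN}$ is a self-adjoint projection, $\langle \nabla\ell_n(\bP),\, \widehat\bDelta\rangle = \langle \Pi_{\cN}(\nabla\ell_n(\bP)),\, \widehat\bDelta\rangle$, and the duality between the nuclear and operator norms yields $|\langle \Pi_{\cN}(\nabla\ell_n(\bP)),\, \widehat\bDelta\rangle| \le \opnorm{\Pi_{\cN}(\nabla\ell_n(\bP))}\,\nnorm{\widehat\bDelta}$. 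Inserting the hypothesis $\lambda \ge 2\opnorm{\Pi_{\cN}(\nabla\ell_n(\bP))}$ gives
\[
\nnorm{\widehat\bP} - \nnorm{\bP} \le \tfrac{1}{\lambda}\opnorm{\Pi_{\cN}(\nabla\ell_n(\bP))}\,\nnorm{\widehat\bDelta} \le \tfrac{1}{2}\nnorm{\widehat\bDelta}.
\]

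The heart of the proof is then a matching lower bound on $\nnorm{\widehat\bP} - \nnorm{\bP}$ via decomposability. Writing $\widehat\bP = \bP + \widehat\bDelta = (\bP_{\cM} + \widehat\bDelta_{\overline{\cM}^{\perp}}) + (\bP_{\cM^\perp} + \widehat\bDelta_{\overline{\cM}})$, the reverse triangle inequality together with the decomposability identity $\nnorm{\bP_{\cM} + \widehat\bDelta_{\overline{\cM}^{\perp}}} = \nnorm{\bP_{\cM}} + \nnorm{\widehat\bDelta_{\overline{\cM}^{\perp}}}$ — valid because $\bP_{\cM}\in\cM$ and $\widehat\bDelta_{\overline{\cM}^{\perp}}\in\overline{\cM}^{\perp}$ have mutually orthogonal row and column spaces — and the bound $\nnorm{\bP} \le \nnorm{\bP_{\cM}} + \nnorm{\bP_{\cM^\perp}}$ give
\[
\nnorm{\widehat\bP} - \nnorm{\bP} \ge \nnorm{\widehat\bDelta_{\overline{\cM}^{\perp}}} - \nnorm{\widehat\bDelta_{\overline{\cM}}} - 2\nnorm{\bP_{\cM^\perp}}.
\]
Combining this with the Step-2 upper bound and $\nnorm{\widehat\bDelta} \le \nnorm{\widehat\bDelta_{\overline{\cM}}} + \nnorm{\widehat\bDelta_{\overline{\cM}^{\perp}}}$, then rearranging, yields $\tfrac12\nnorm{\widehat\bDelta_{\overline{\cM}^{\perp}}} \le \tfrac32\nnorm{\widehat\bDelta_{\overline{\cM}}} + 2\nnorm{\bP_{\cM^\perp}}$, which is exactly the claimed cone inequality $\nnorm{\widehat\bDelta_{\overline{\cM}^{\perp}}} \le 3\nnorm{\widehat\bDelta_{\overline{\cM}}} + 4\nnorm{\bP_{\cM^\perp}}$.

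Finally, for the ``in particular'' statement, when $\bP\in\cM$ (which holds here since $\cM$ is built from the exact rank-$r$ SVD of $\bP$) we have $\bP_{\cM^\perp}=0$, hence $\nnorm{\widehat\bDelta_{\overline{\cM}^{\perp}}} \le 3\nnorm{\widehat\bDelta_{\overline{\cM}}}$ and therefore $\nnorm{\widehat\bDelta} \le \nnorm{\widehat\bDelta_{\overline{\cM}^{\perp}}} + \nnorm{\widehat\bDelta_{\overline{\cM}}} \le 4\nnorm{\widehat\bDelta_{\overline{\cM}}}$. Since every element of $\overline{\cM}$ has its middle block supported on $r$ rows and $r$ columns, it has rank at most $2r$; thus $\widehat\bDelta_{\overline{\cM}}$ has rank $\le 2r$ and $\nnorm{\widehat\bDelta_{\overline{\cM}}} \le \sqrt{2r}\,\fnorm{\widehat\bDelta_{\overline{\cM}}} \le \sqrt{2r}\,\fnorm{\widehat\bDelta}$, giving $\nnorm{\widehat\bDelta} \le 4(2r)^{1/2}\fnorm{\widehat\bDelta}$. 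I expect the main obstacle to be Step~3: one must choose the correct split of $\widehat\bP$ and invoke the nuclear-norm decomposability across $\cM$ and $\overline{\cM}^{\perp}$ with the right orthogonality bookkeeping, whereas the convexity step, the $\Pi_{\cN}$-duality step, and the rank-$2r$-to-Frobenius conversion are routine.
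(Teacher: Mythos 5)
Your proof is correct and takes essentially the same route as the paper's: optimality of $\widehat\bP$, convexity of $\ell_n$ combined with self-adjointness of $\Pi_{\cN}$ and nuclear/operator-norm duality, and the decomposability lower bound $\nnorm{\widehat\bP} - \nnorm{\bP} \ge \nnorm{\widehat\bDelta_{\overline\cM^\perp}} - \nnorm{\widehat\bDelta_{\overline\cM}} - 2\nnorm{\bP_{\cM^\perp}}$, combined in the same way. The only difference is cosmetic: the paper cites inequality (52) of Lemma 3 in Negahban and Wainwright (2012) for that lower bound, whereas you derive it inline via the split $\widehat\bP = (\bP_{\cM} + \widehat\bDelta_{\overline\cM^\perp}) + (\bP_{\cM^\perp} + \widehat\bDelta_{\overline\cM})$, and you also spell out the rank-$2r$ argument behind the final display \eqref{eq:converter}, which the paper's proof leaves implicit.
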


Lemma \ref{lem:large_lambda} implies that the converting factor between the nuclear and Frobenius norms of $\widehat \bDelta$ is merely $4(2r)^{1 / 2}$ when $\P\in \mathcal{M}$, which is much smaller than the worst-case factor $p^{1 / 2}$ between nuclear and Frobenius norms of general $p$-by-$p$ matrices. This property of $\widehat \bDelta$ is one cornerstone for establishing Theorem \ref{thm:nuclear}.

Next, we derive the rate of $\opnorm{\Pi_{\cN}(\nabla \ell_n(\bP))}$ to determine the order of $\lambda$ that ensures the condition of Lemma \ref{lem:large_lambda} to hold. 

\begin{lemma}
	\label{lem:gradient} 
	Under Assumption \ref{asp:1}, whenever $n \pi_{\max} (1 - \rho_+) \ge 2 \log p$, for any $\xi > 1$,  
	\[
		\PP\biggl\{ \opnorm{\Pi_{\cN}(\nabla \ell_n(\bP))} \gtrsim \biggl(\frac{\xi p ^ 2\pi_{\max}\log p}{n\alpha}\biggr)^{1 / 2} + \frac{\xi p\log p}{n \alpha}\biggr\} \le 4p^{-(\xi - 1)} + \exp\biggl(- \frac{n\pi_{\max} (1 - \rho_+)}{2}\biggr). 
	\]
\end{lemma}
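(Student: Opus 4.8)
The plan is to recognize $\Pi_{\cN}(\nabla\ell_n(\bP))$ as the time-average of a matrix martingale and to bound its spectral norm with the matrix Freedman inequality \citep[Corollary~1.3]{tropp2011freedman}. Writing $\nabla\ell_n(\bP) = -\frac1n\sum_{k=1}^n \bX_k/\langle\bP,\bX_k\rangle$ and setting $\bZ_k := \Pi_{\cN}(\bX_k/\langle\bP,\bX_k\rangle)$, I would first verify that $\{\bZ_k\}$ is a martingale-difference sequence with respect to $\cF_k := \sigma(X_0,\dots,X_k)$. Conditioning on $X_{k-1}=s_i$, the next state is drawn from the $i$th row of $\bP$, so $\EE[\bX_k/\langle\bP,\bX_k\rangle\mid\cF_{k-1}] = e_i 1_p^\top$ (summing $e_ie_j^\top/P_{ij}$ against $P_{ij}$ over the support of row $i$); since $\Pi_{\cN}(\bQ)=\bQ-\bQ 11^\top/p$ annihilates every matrix with constant rows, $\EE[\bZ_k\mid\cF_{k-1}]=\Pi_{\cN}(e_i 1_p^\top)=0$. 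Hence $\Pi_{\cN}(\nabla\ell_n(\bP)) = -\frac1n\sum_k\bZ_k$ with $\sum_k\bZ_k$ a matrix martingale. (If $\bP$ has exact zeros the conditional mean is $e_i 1_{S_i}^\top$ with $S_i$ the support of row $i$; in the $\alpha/p$-minorated regime $S_i=[p]$ and this extra bookkeeping disappears.)

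Next I would assemble the two inputs to matrix Freedman: a uniform bound $R$ on $\opnorm{\bZ_k}$ and a bound $\sigma^2$ on the operator norms of the predictable quadratic variations $\sum_k\EE[\bZ_k\bZ_k^\top\mid\cF_{k-1}]$ and $\sum_k\EE[\bZ_k^\top\bZ_k\mid\cF_{k-1}]$. Each increment is rank one, $\bZ_k = e_{i_k}\widetilde w_{j_k}^\top$ with $\widetilde w_{j} = e_j/P_{i_kj} - (pP_{i_kj})^{-1}1_p$, so $\opnorm{\bZ_k} = \|\widetilde w_{j_k}\|_2 \le \max_{i,j}P_{ij}^{-1}\le p/\alpha =: R$ under Assumption~\ref{asp:1}. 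Since $\bZ_k\bZ_k^\top = \|\widetilde w_{j_k}\|_2^2\, e_{i_k}e_{i_k}^\top$, its conditional mean given $X_{k-1}=s_i$ is $\bigl(\sum_j P_{ij}\|\widetilde w_j\|_2^2\bigr)e_ie_i^\top \le (p^2/\alpha)\,e_ie_i^\top$, and summing over $k$ yields a diagonal matrix of operator norm $\le (p^2/\alpha)\max_i n_i$, where $n_i=\sum_j n_{ij}$ counts departures from $s_i$. The column version $\sum_k\EE[\bZ_k^\top\bZ_k\mid\cF_{k-1}]$ has leading part $\sum_k\sum_j P_{ij}^{-1}e_je_j^\top$ of operator norm $\lesssim np/\alpha$, which is of smaller order. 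Hence on the event that the visit counts obey $\max_i n_i \lesssim n\pi_{\max}$, both quadratic variations have operator norm at most $\sigma^2 \asymp n\pi_{\max}p^2/\alpha$.

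The remaining ingredient is the high-probability control of the visit counts. Because the $X_k$ are dependent, I would invoke the Bernstein inequality for Markov chains \citep[Theorem~1.2]{JFS18} applied to the additive functional $n_i = \sum_{k} 1_{\{X_{k-1}=s_i\}}$, whose mean is $n\pi_i$ and whose concentration rate is governed by the spectral gap $1-\rho_+$; a union bound over $i\in[p]$ gives $\max_i n_i\lesssim n\pi_{\max}$ off an event of probability $\exp(-n\pi_{\max}(1-\rho_+)/2)$, which is precisely why the hypothesis $n\pi_{\max}(1-\rho_+)\ge 2\log p$ keeps this failure below $p^{-1}$. Conditioning on this event so that $\sigma^2$ is deterministic, matrix Freedman gives $\PP\{\opnorm{\sum_k\bZ_k}\ge t\}\le 2p\exp\{-(t^2/2)/(\sigma^2+Rt/3)\}$; choosing $t\asymp \sqrt{\xi\sigma^2\log p} + \xi R\log p$ makes the right-hand side $\lesssim p^{-(\xi-1)}$, and dividing by $n$ turns $\sqrt{\xi\sigma^2\log p}/n$ and $\xi R\log p/n$ into the two advertised terms $(\xi p^2\pi_{\max}\log p/(n\alpha))^{1/2}$ and $\xi p\log p/(n\alpha)$, with the two failure probabilities adding up to the stated bound.

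The main obstacle is the control of the predictable quadratic variation, which is itself random and couples to the correlated visit counts $n_i$; decoupling it through the Markov-chain Bernstein bound, rather than a classical i.i.d. Bernstein estimate, is what forces the spectral-gap hypothesis and produces the $\exp(-n\pi_{\max}(1-\rho_+)/2)$ term. A secondary point is lining up the two regimes of Freedman (the sub-Gaussian $\sqrt{\cdot}$ term and the sub-exponential linear term) with the two summands that define $\lambda$ in Theorem~\ref{thm:nuclear}; and, when $\bP$ is not fully supported, one must verify that the projected compensator $\Pi_{\cN}(e_{i_k}1_{S_{i_k}}^\top)$ is accounted for, the cleanest route being to work in the minorated regime where it vanishes.
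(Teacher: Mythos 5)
Your proposal follows essentially the same route as the paper's proof: both identify $\Pi_{\cN}(\nabla \ell_n(\bP))$ with the average of a matrix martingale, bound each increment by $R \asymp p/\alpha$, bound the two predictable quadratic variations by $\sigma^2 \asymp n\pi_{\max}p^2/\alpha$ (one side deterministically of smaller order $np/\alpha$, the other via the visit counts controlled by the Markov-chain Bernstein inequality of \citet{JFS18} plus a union bound over states, which is exactly where $n\pi_{\max}(1-\rho_+)\ge 2\log p$ and the $\exp(-n\pi_{\max}(1-\rho_+)/2)$ term enter), and conclude with the matrix Freedman inequality via an event intersection (the paper's $\PP(\opnorm{\bS_n}\ge t) \le \PP(\opnorm{\bS_n}\ge t, W_n\le\sigma^2)+\PP(W_n>\sigma^2)$, which is the rigorous version of your ``conditioning'' step). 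The only substantive differences are cosmetic or in your favor: you apply $\Pi_{\cN}$ to each increment so the compensator vanishes identically, rather than centering first and projecting at the end, and your caveat about rows of $\bP$ with zero entries is a genuine subtlety that the paper's own computation $\EE(\bZ_i \mid \bX_{i-1}) = -e_{X_{i-1}}1^\top$ silently glosses over.
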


\begin{remark}
	Lemma \ref{lem:gradient} is essentially due to concentration of a matrix martingale. Many existing results on measure concentration of dependent random variables \citep{Mar96, Kon07, KRa08, Pau15} are not directly applicable because of the matrix structure of $\nabla \ell_n(\bP)$. The main probabilistic tool we use here is the matrix Freedman inequality \citep[][Corollary~1.3]{tropp2011freedman} that characterizes concentration behavior of a matrix martingale (See \eqref{eq:matrix_freedman} for details). We notice two recent works, \citet{WKo19} and \citet{WKon19}, that use the same matrix Freedman inequality. Specifically, \citet{WKon19} applied the matrix Freedman inequality to derive a confidence interval for the mixing time of a Markov chain based on its single trajectory, and \citet{WKo19} used the same inequality to establish an upper bound for the sample complexity of learning a Markov chain. Finally, we also use an variant of Bernstein's inequality for general Markov chains \citep[][Theorem~1.2]{JFS18} to derive an exponential tail bound for the status counts of the Markov chain $\cX$ (See \eqref{eq:mc_bernstein} for details).
\end{remark}

Let $\cC := \{\bQ \in \R^{p \times p}: \nnorm{\bQ - \bP} \le 4\times 2^{1 /2} \fnorm{\bQ - \bP}, \bQ 1_p = 1_p, \alpha / p \le Q_{jk} \le \beta / p, \forall (j, k) \in  [p] \times [p]\}$. For any $\bQ \in \cC$, define $\cL(\bQ) := \mathbb{E} \{- \log (\inn{\bQ, \bX_i})\}$ and $\ell_n(\bQ) := n^{-1}\sum_{i = 1}^n -\log(\inn{\bQ, \bX_i})$. Recall that $\DKL(\bP, \bQ) = \cL(\bQ) - \cL(\bP) = \sum_{i=1}^p \pi_i \DKL(P_{i\cdot}, Q_{i\cdot}) = \sum_{i = 1}^p \sum_{j=1}^p \pi_iP_{ij} \log(P_{ij}/Q_{ij})$. Define the empirical KL divergence of $\bQ$ from $\bP$ as
\[
\widetilde{D}_{\KL}(\bP,\bQ) := \frac{1}{n}\sum_{i=1}^n \langle \log (\bP) -\log(\bQ), \bX_i\rangle = \ell_n(\bQ) - \ell_n(\bP). 
\]
The final ingredient of the analysis is the uniform converegence of $\widetilde D_{\KL}(\bP, \bQ)$ to $D_{\KL}(\bP, \bQ)$ when $\DKL(\bP, \bQ)$ is large. 
\begin{lemma}
	\label{lem:uniform_law}
	Suppose that $n\pi_{\max}(1 - \rho_+) \ge \max(20\log p, \log n)$. For any $\eta > \pi_{\min}  / (2\pi_{\max} rp \log p)$, define $\cC(\eta) := \{\bQ \in \cC: \DKL(\bP, \bQ) \ge \eta\}$. Then there exist universal constants $C_1, C_2 > 0$ such that
	\begin{equation}\label{ineq:to-show-1}
	\begin{aligned}
	\PP\biggl\{\forall \bQ \in \cC(\eta), ~|\widetilde D_{\KL}(\bP, \bQ) - D_{\KL}(\bP, \bQ)| \leq \frac{1}{2}\DKL(\bP, \bQ) + &\frac{C_1\pi_{\max} \beta ^ 2rp\log p}{\pi_{\min}\alpha ^ 3n}\biggr\} \\
	& \geq 1- C_2 \exp\biggl(- \frac{\eta \pi_{\max} rp \log p}{\pi_{\min}}\biggr).
	\end{aligned}
	\end{equation}
\end{lemma}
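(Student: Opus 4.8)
The plan is to view $\widetilde{D}_{\KL}(\bP,\bQ) - \DKL(\bP,\bQ)$ as a centered empirical process indexed by $\bQ \in \cC(\eta)$ and to control its supremum through a peeling (slicing) argument in the KL radius, a metric-entropy bound for the nuclear-norm cone, and the Bernstein inequality for Markov chains of \citet{JFS18}. Writing $g_\bQ(x,y) := \log(P_{xy}/Q_{xy})$, we have $\widetilde{D}_{\KL}(\bP,\bQ) = n^{-1}\sum_{k=1}^n g_\bQ(X_{k-1},X_k)$ and $\DKL(\bP,\bQ) = \EE_\pi[g_\bQ(X_0,X_1)]$, so the object of interest is exactly the deviation of an additive functional of the chain from its stationary mean. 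The entrywise bounds of Assumption \ref{asp:1} give the uniform bound $|g_\bQ| \le \log(\beta/\alpha)=:b$, and a second-order expansion of the logarithm yields a variance proxy $\Var_\pi(g_\bQ) \lesssim (\pi_{\max}\beta p/\alpha^2)\fnorm{\bP-\bQ}^2 \asymp (\beta^2\pi_{\max}/(\alpha^2\pi_{\min}))\,\DKL(\bP,\bQ)$, where the last comparison invokes the KL--$\ell_2$ converter of Lemma \ref{lem:kl_to_l2}. Thus the per-$\bQ$ variance is comparable to $\DKL(\bP,\bQ)$ itself, which is precisely what makes a multiplicative bound of the form $\tfrac12\DKL$ attainable.

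First I would carry out the peeling. Partition $\cC(\eta)$ into geometric shells $\cC_\ell := \{\bQ \in \cC: 2^\ell\eta \le \DKL(\bP,\bQ) < 2^{\ell+1}\eta\}$ for $\ell = 0,1,\dots$; since $\DKL \le \log(\beta/\alpha)$ on $\cC$, only $O(\log p)$ shells are nonempty. On $\cC_\ell$ it suffices to show $\sup_{\bQ\in\cC_\ell}|\widetilde{D}_{\KL}-\DKL| \le 2^{\ell-1}\eta + a$ with $a := C_1\pi_{\max}\beta^2 rp\log p/(\pi_{\min}\alpha^3 n)$, because $2^{\ell-1}\eta \le \tfrac12\DKL(\bP,\bQ)$ throughout the shell. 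By Lemma \ref{lem:kl_to_l2}, every $\bQ\in\cC_\ell$ obeys $\fnorm{\bP-\bQ}^2 \lesssim \beta\,2^{\ell+1}\eta/(\pi_{\min}p) =: R_\ell^2$, so $\cC_\ell$ lies in the intersection of the nuclear-norm cone $\{\nnorm{\bP-\bQ}\le 4\sqrt{2}\,\fnorm{\bP-\bQ}\}$ with the Frobenius ball of radius $R_\ell$.

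Next I would discretize and concentrate. Because the cone restricts the effective dimension to $\asymp rp$, this cone--ball intersection admits an $\epsilon$-net of cardinality $\exp(C rp\log(R_\ell/\epsilon))$; and since $Q\mapsto\log Q$ has entrywise derivative at most $p/\alpha$, replacing $\bQ$ by its nearest net point perturbs both $\widetilde{D}_{\KL}$ and $\DKL$ by at most $(p/\alpha)\epsilon$, so a polynomially small $\epsilon$ controls the discretization error while keeping $\log(R_\ell/\epsilon)\lesssim \log p$ (here the requirement $n\pi_{\max}(1-\rho_+)\ge\log n$ enters). For each net point I would apply the Markov-chain Bernstein inequality of \citet{JFS18} with deviation $s = 2^{\ell-1}\eta + a$, variance proxy $V_\ell \asymp (\beta^2\pi_{\max}/(\alpha^2\pi_{\min}))2^\ell\eta$, uniform bound $b$, and spectral-gap factor $(1-\rho_+)$, then union-bound over the net and over the $O(\log p)$ shells; the resulting probabilities form a convergent geometric series dominated by its $\ell=0$ term, producing the stated $C_2\exp(-\eta\pi_{\max}rp\log p/\pi_{\min})$.

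The crux is calibrating the constant $C_1$ (equivalently $a$) so that on every shell the Bernstein exponent dominates the cone entropy, of order $rp\log p$. For large shells the variance term, proportional to $n(1-\rho_+)2^\ell\eta$ up to $\alpha,\beta,\pi$ factors, grows geometrically and easily beats the scale-free entropy; the binding case is the smallest shell $\ell=0$, where the multiplicative budget $2^{-1}\eta$ is negligible and the entire entropy must be carried by the additive budget $a$. Determining which regime of the Bernstein bound (the sub-Gaussian ``$s^2/V$'' versus the sub-exponential ``$s/b$'' regime) governs this smallest shell, and verifying that $a$ as defined suffices in both, is exactly what pins down the $\alpha^3$, $\beta^2$, and $\pi_{\max}/\pi_{\min}$ dependence of the additive term; the hypothesis $n\pi_{\max}(1-\rho_+)\ge\max(20\log p,\log n)$ is what guarantees the exponent clears the $\log p$ and $\log n$ overheads. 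I expect this calibration --- simultaneously matching the multiplicative term on large shells and the additive term on the smallest shell, while threading the Markov dependence through the nonstandard Bernstein inequality --- to be the main obstacle; the peeling and covering steps are otherwise routine once the variance proxy and the KL--$\ell_2$ comparison are in hand.
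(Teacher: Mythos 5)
Your peeling into dyadic KL shells and your use of Lemma \ref{lem:kl_to_l2} to convert each shell into a nuclear-norm/Frobenius ball reproduce the paper's first two steps exactly. The core concentration mechanism, however, is where you diverge, and where your plan breaks. The paper controls the shell supremum $\gamma_l$ in two pieces: (i) its deviation from $\EE\gamma_l$ is bounded by Adamczak's inequality for suprema of empirical processes of Markov chains \citep[Theorem~7]{Ada08}, whose regeneration-based variance proxy is $8\log^2(\beta/\alpha)$ and which involves \emph{no metric entropy and no spectral-gap factor}; and (ii) $\EE\gamma_l$ itself is bounded by symmetrization plus the Ledoux--Talagrand contraction principle, reducing it to the shell's nuclear radius times $\EE\opnorm{n^{-1}\sum_i\varepsilon_i\bX_i}$, which is $\lesssim(\pi_{\max}\log p/n)^{1/2}$ by the matrix Freedman inequality \citep{tropp2011freedman} combined with the Bernstein bound of \citet{JFS18} for the predictable variance terms. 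Adamczak's bound is then invoked with a \emph{free} tail parameter $\xi=2^l\eta\pi_{\max}rp\log p/\pi_{\min}$, and the resulting $\log(\beta/\alpha)\sqrt{\xi/n}$ deviation is split by AM--GM between the multiplicative budget $2^{l-2}\eta$ and the additive budget; that is how the stated failure probability $\exp(-\eta\pi_{\max}rp\log p/\pi_{\min})$ is actually reached.

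Your replacement of (i)--(ii) by a single-scale $\epsilon$-net, pointwise Markov--Bernstein bounds, and a union bound has a genuine gap, in fact two. First, since $\log(\cdot)$ has Lipschitz constant $p/\alpha$ on the feasible entries and your additive budget is $a\asymp\pi_{\max}\beta^2rp\log p/(\pi_{\min}\alpha^3n)$, the net mesh must satisfy $\epsilon\lesssim a\alpha/p\asymp 1/n$, so the log-cardinality of the net is of order $rp\log(n\cdot\mathrm{poly}(p))$ and contains a $\log n$ factor; but the exponent Bernstein can extract from the additive budget is $n\,a$ divided by the variance-to-KL ratio, i.e.\ of order $rp\log p/\alpha$, in which $n$ cancels. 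When $\log n\gg\log p/\alpha$, the union bound cannot absorb the entropy unless the variance-regime exponent, of order $n(1-\rho_+)2^{l}\eta\,\alpha^3\pi_{\min}/(\beta^3\pi_{\max})$, compensates --- and that is not guaranteed by the lemma's hypotheses, e.g.\ when $\eta$ sits near its minimal value $\pi_{\min}/(2\pi_{\max}rp\log p)$ and $\pi_{\min}/\pi_{\max}$ is polynomially small in $p$. Second, and independently, every pointwise Bernstein exponent carries the factor $(1-\rho_+)$, which the hypothesis $n\pi_{\max}(1-\rho_+)\ge 20\log p$ permits to be as small as $20\log p/(n\pi_{\max})$; in that regime your per-point exponent collapses to order $rp\log^2 p/(n\pi_{\max}\beta)$, far below both the net entropy and the target exponent $\eta\pi_{\max}rp\log p/\pi_{\min}$. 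The paper's architecture avoids both problems by design: the supremum is handled by a regeneration-based inequality with no entropy cost and no gap deflation, and the spectral gap enters only through the expectation of the Rademacher matrix average. You correctly flagged the calibration on the smallest shell as the main obstacle, but a fixed-deviation, single-scale net cannot clear it; repairing your route would require multi-scale chaining with mixed Bernstein increments and a chain concentration inequality whose effective sample size does not degrade by $(1-\rho_+)$ --- at which point you have essentially rebuilt the paper's argument.
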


Theorem \ref{thm:nuclear} follows immediatley after one combines Lemmas \ref{lem:large_lambda}, \ref{lem:gradient} and \ref{lem:uniform_law}. As for the rank-constrained MLE $\widehat\bP^r$, let $\widehat {\bDelta}(r) := \widehat \bP^r - \bP$. Note that the rank constraint in \eqref{prob:nonconvex-lowrank} implies that $\rank(\widehat \bDelta(r)) \le 2r$. Thus, $\nnorm{\widehat\bDelta(r)} \le (2r)^{1 / 2} \fnorm{\widehat\bDelta(r)}$ and Lemma \ref{lem:uniform_law} remains applicable in the statistical anlaysis of $\widehat \bP^r$.

\section{Computing Markov models using low-rank optimization}

In this section we develop efficient optimization methods to compute the proposed estimators for the low-rank Markov model. From now on, we drop the constraint that $\alpha / p \le Q_{ij}\le \beta / p$, which is used only to derive the statistical guarantees. In other words, $\alpha$ and $\beta$ are motivated by statistical theory, and do not need to be taken into account in the optimization. 

\subsection{Optimization methods for the nuclear-norm regularized likelihood problem}
\label{sec:optNuc}

We first consider the nuclear-norm regularized likelihood problem \eqref{prob:convex-nuclear}. It is a special case of the following linearly constrained optimization problem:
\begin{equation}\label{prob:gen-convex-nuc}
\min \, \left\{ g(\X) + c \norm{\X}_{*}\,\mid\, \cA(\bX) = b \right\}, 
\end{equation}
where $g:\Re^{p \times p} \to (-\infty,+\infty]$ is a closed, convex, but possibly non-smooth function, $\cA:\Re^{p\times p} \to \Re^m$ is a linear map, $b\in\Re^m$ and $c > 0$ are given data. If we take $\alpha=0, \beta=p$ in  problem \eqref{prob:convex-nuclear}, it becomes a special case of the general problem \eqref{prob:gen-convex-nuc} with $g(\X) = -\ell_n(\X) + \delta(\X\ge 0)$, $\cA(\X) = \X {\bf 1}_p$, $b = {\bf 1}_p$, and $\delta(\cdot)$ being the indicator function. 

Despite of its convexity, problem \eqref{prob:gen-convex-nuc} is highly nontrivial due to the nonsmoothness of $g$ and the presence of the nuclear norm regularizer. Here, we propose to solve it via the dual approach.  
The dual of problem \eqref{prob:gen-convex-nuc} is
\begin{equation} 
\label{prob:D}
\begin{array}{rll}
\min  & g^*(-{\bf\Xi}) - \inprod{b}{y}  \\
\mbox{s.t.} & {\bf\Xi} + \cA^*(y) + \S = 0,\quad \norm{\S}_2 \le c,
\end{array}
\end{equation}
where  $\norm{\cdot}_2$ denotes the spectral norm, and $g^*$ is the conjugate function of $g$ given by
\begin{align*} 
g^*({\bf\Xi}) {}
= \sum_{(i,j)\in\Omega} \frac{n_{ij}}{n}(\log \frac{n_{ij}}{n} - 1 - \log(-\Xi_{ij})) + \delta( \Xi \le 0) \quad \forall\,{\bf \Xi}\,\in\Re^{p\times p}
\end{align*}
with $\Omega = \{(i,j) \mid n_{ij} \neq 0\}$ and $\overline \Omega = \{(i,j) \mid n_{ij} = 0\}$. 
Given $\sigma >0$, the augmented Lagrangian function $\cL_{\sigma}$ associated with \eqref{prob:D} is
\begin{align*}
\cL_{\sigma}({\bf\Xi}, y, \S; \X) = g^*(-{\bf\Xi}) - \inprod{b}{y} + \frac{\sigma}{2} \norm{{\bf\Xi} + \cA^*(y) + \S + \X/\sigma}^2 - \frac{1}{2\sigma}\norm{\X}^2.
\end{align*}
We consider popular ADMM type methods for solving problem \eqref{prob:D} (A comprehensive numerical study has been conducted in \citep{li2016schur} and justifies our procedure).
Since there are three separable blocks in \eqref{prob:D} (namely ${\bf\Xi}$, $y$, and $\S$), the direct extended ADMM is not applicable. Indeed, it has been shown in \citep{chen2016direct} that the direct extended ADMM for multi-block convex minimization problem is not necessarily convergent. Fortunately, the functions corresponding to block $y$ in the objective of \eqref{prob:D} is linear. Thus we can apply the multi-block symmetric Gauss-Sediel based ADMM (sGS-ADMM) \citep{li2016schur}.
In literature \citep{chen2017efficient,ferreira2017semidefinite,lam2017fast,li2016schur,wang2018another}, extensive numerical experiments demonstrate that sGS-ADMM is not only convergent but also faster than the directly extended multi-block ADMM and its many other variants. Specifically, the algorithmic framework of sGS-ADMM for solving \eqref{prob:D} is presented in Algorithm \ref{alg:sGS-ADMM}. 

\begin{algorithm} 
	\caption{An sGS-ADMM for solving \eqref{prob:D}}
	\label{alg:sGS-ADMM}
	\begin{algorithmic}
		\STATE {\bfseries Input:} initial point $({\bf\Xi}^0, y^0, \S^0, \X^0)$, penalty parameter $\sigma > 0$, maximum iteration number $K$, and the step-length $\gamma \in (0,(1+ \sqrt{5})/2)$
		\FOR{$k=0$ {\bfseries to} $K$}
		\STATE $y^{k+\frac{1}{2}} = \argmin_{y} \cL_{\sigma}({\bf\Xi}^k, y, \S^k; \X^k)$
		\STATE ${\bf\Xi}^{k+1} = \argmin_{{\bf\Xi}} \cL_{\sigma}({\bf\Xi} ,y^{k+\frac{1}{2}}, \S^k; \X^k) $
		\STATE $y^{k+1} = \argmin_{y} \cL_{\sigma}({\bf\Xi}^{k+1}, y, \S^k; \X^k)$
		\STATE $\S^{k+1} = \argmin_{\S} \cL_{\sigma}({\bf\Xi}^{k+1}, y^{k+1}, \S, ; \X^k)$
		\STATE
		$\X^{k+1} = \X^k + \gamma\sigma({\bf\Xi}^{k+1} + \cA^*(y^{k+1}) + \S^{k+1})$
		\ENDFOR
	\end{algorithmic}
\end{algorithm}

Next, we discuss how the $k$-th iteration of Algorithm \ref{alg:sGS-ADMM} is performed:
\begin{description}
	\item[{\bf Computation of $y^{k+\frac{1}{2}}$ and $y^{k+1}$}.]
	Simple calculations show that $y^{k+\frac{1}{2}}$ and $y^{k+1}$ can be obtained by solving the following linear systems:
	\begin{equation*}
	\left\{ 
	\begin{aligned}
	&y^{k+\frac{1}{2}} = \frac{1}{\sigma}(\cA \cA^*)^{-1}\big( b - X^k - \sigma({\bf\Xi}^k + \S^k)\big ), \\[5pt]
	&y^{k+1} =  \frac{1}{\sigma}(\cA \cA^*)^{-1}\big( b - X^k - \sigma({\bf\Xi}^{k+1} + \S^k)\big ).
	\end{aligned}
	\right. 
	\end{equation*}
	In our estimation problem, it is not difficult to verify that 
	$ \cA \cA^* y = p y$ for any $y\in \Re^p$.
	Thanks to this special structure, the above formulas can be further reduced to 
	\begin{equation*}
	y^{k+\frac{1}{2}} = \frac{1}{\sigma p}\big( b - \X^k - \sigma({\bf\Xi}^k + \S^k)\big) \mbox{ and }
	y^{k+1} =  \frac{1}{\sigma p}\big( b - \X^k - \sigma({\bf\Xi}^{k+1} + \S^k)\big).
	\end{equation*}
	\item [{\bf Computation of ${\bf \Xi}^{k+1}$}.] 
	To compute ${\bf \Xi}^{k+1}$, we need to solve the following optimization problem:
	\[
	\min_{\bf\Xi} \left\{ g^*(-{\bf\Xi}) +\frac{\sigma}{2}\norm{{\bf\Xi} + {\bf R}^k}^2 \right\},
	\]
	where ${\bf R}^k \in \Re^{p\times p}$ is given. Careful calculations, together with the Moreau identity \citep[Theorem 31.5]{rockafellar2015convex}, show that
	\[
	{\bf \Xi}^{k+1} = \frac{1}{\sigma} [\Z^k - \sigma {\bf R}^k]\, \mbox{ and } \,
	\Z^k = \argmin_{\Z} \left\{\sigma g(\Z) + \frac{1}{2}\norm{\Z - \sigma {\bf R}^k}^2 \right\}.
	\]
	For our estimation problem, i.e., $g(\X) = \ell_n(\X) + \delta( \X\ge 0)$, it is easy to see that 
	$Z^k$ admits the following form:
	\[
	Z^k_{ij} = \frac{\sigma R_{ij}^k + \sigma\sqrt{(R^k_{ij})^2 + 4 n_{ij}/(n\sigma)}}{2} \quad \mbox{if } (i,j)\in \Omega \, \mbox{ and } \,
	Z^k_{ij} = \sigma \max(R_{ij}^k,0) \quad \mbox{if }  (i,j)\in \overline{\Omega}.
	\]
	\item [{\bf Computation of $\S^{k+1}$}.]
	The computation of $\S^{k+1}$ can be simplified as:
	\[
	\S^{k+1} = \argmin_{\S} \left\{
	\frac{\sigma}{2} \norm{\S + {\bf\Xi}^{k+1} + \cA^* y^{k+1} + \X^k/\sigma}^2 \mid 
	\norm{\S}_2 \le c  
	\right\}.
	\]
	Let ${\bf W}_{k} := -({\bf\Xi}^{k+1} + \cA^* y^{k+1} + X^k/\sigma)$ admit the following singular value decomposition (SVD)
	$
	{\bf W}_{k} = {\bf U}_{k} {\bf\Sigma}_k {\bf V}_k^\top,
	$
	where ${\bf U}_{k}$ and ${\bf V}_k$ are orthogonal matrices,
	${\bf \Sigma}_k = {\rm Diag}(\alpha_1^k,\ldots, \alpha_p^k)$ is
	the diagonal matrix of singular values of $W_k$, with $\alpha_1^k\ge\ldots\ge \alpha_p^k\ge 0.$ Then, by Lemma 2.1 in \citep{jiang2014partial}, we know that
	\[
	\S^{k+1} = {\bf U}_k\min({\bf \Sigma}_k,c){\bf V}_k^\top,
	\]
	where  $\min({\bf \Sigma}_k,c) = {\rm Diag}\big(\min(\alpha_1^k,c),\ldots, \min(\alpha_p^k,c)\big)$.
	We also note that in the implementation, only partial SVD, which is much cheaper than full SVD, is needed as $r\ll p$.
\end{description}
The nontrivial convergence results and the sublinear non-ergodic iteration
complexity of Algorithm \ref{alg:sGS-ADMM} can be  obtained from \cite{li2016schur} and \cite{chen2017efficient}. 
{We put the convergence theorem and a sketch of the proof in the supplementary material.}

\subsection{Optimization methods for the rank-constrained likelihood problem}
\label{sec:6}

Next we develop the optimization method for computing the rank-constrained likelihood maximizer from \eqref{prob:nonconvex-lowrank}. In Subsection \ref{sec:pen}, a penalty approach is applied to transform the original intractable rank-constrained problem into a DC programming problem. Then we solve this problem by a proximal DC (PDC) algorithm  in Subsection \ref{subsec:proxdca}. 
We also discuss the solver for the subproblems involved in the proximal DC algorithm. Lastly, a unified convergence analysis of a class of majorized indefinite-proximal DC (Majorized iPDC) algorithms is provided in Subsection \ref{sec:DCA}.

\subsubsection{A penalty approach for problem \eqref{prob:nonconvex-lowrank}. } 
\label{sec:pen}
Recall \eqref{prob:nonconvex-lowrank} is intractable due to the non-convex rank constraint, we introduce a penalty approach to relax. We particularly study the following optimization problem:
\begin{equation}\label{prob:gen-nonconvex-lowrank}
\min \, \left\{ f(\X) \,\mid\, \cA(\X) = b,\, {\rm rank}(\X)\le r \right\}, 
\end{equation}
where $f:\Re^{p \times p} \to (-\infty,+\infty]$ is a closed proper convex, but possibly non-smooth, function. 
%
The original rank-constraint maximum likelihood problem \eqref{prob:nonconvex-lowrank} can be viewed as a special case of the general model \eqref{prob:gen-nonconvex-lowrank}. 


%

Given $\X\in\Re^{p\times p}$, let $\sigma_1(\X) \geq \cdots \geq \sigma_p(\X)\geq 0$ be the singular values of $\X$. Since ${\rm rank}(\X)\le r$ if and only $\sigma_{r+1}(\X) + \ldots + \sigma_{p}(\X) = \|\X\|_\ast - \|\X\|_{(r)} = 0$ ($\|\X\|_{(r)} = \sum_{i=1}^r \sigma_i(\X)$ is the Ky Fan $r$-norm of $\X$), \eqref{prob:gen-nonconvex-lowrank} can be equivalently formulated as
\begin{equation*}
\min \left\{ f(\X)  \mid \|\X\|_\ast-\|\X\|_{(r)}=0, \, \cA(\X) = b \right\}.
\end{equation*}
See also \citep[Equation (29)]{sun2010majorized}.
The penalized formulation of problem \eqref{prob:gen-nonconvex-lowrank} is
\begin{equation}\label{prob:nonconvex-pen-DC}
\min
\left\{ f(\X) +  c (\|\X\|_\ast - \|\X\|_{(r)}) \mid  \cA(\X) = b \right\},
\end{equation}
where  $c>0$ is a penalty parameter. 
Since $\norm{\cdot}_{(r)}$ is convex, the objective in problem \eqref{prob:nonconvex-pen-DC} is a difference
of two convex functions: $f(\X) + c\norm{\X}_{*}$ and $c\norm{\X}_{(r)}$, i.e., \eqref{prob:nonconvex-pen-DC} is a DC program.

Let $\X_c^*$ be an optimal solution to the penalized problem \eqref{prob:nonconvex-pen-DC}. The following proposition shows that $\X_c^\ast$ is also the optimizer to \eqref{prob:gen-nonconvex-lowrank} when it is low-rank.
\begin{proposition}\label{prop:penlowrank}
	If ${\rm rank}(\X_c^*)\le r$, then $\X_c^*$ is also an optimal {solution} to the original problem \eqref{prob:gen-nonconvex-lowrank}.
\end{proposition}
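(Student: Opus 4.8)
The plan is to turn this into a standard exact-penalty argument, exploiting that the penalty term $\|\X\|_* - \|\X\|_{(r)}$ is nonnegative everywhere and vanishes precisely on the rank constraint set. The starting point is the identity already recorded just before \eqref{prob:nonconvex-pen-DC}, namely $\|\X\|_* - \|\X\|_{(r)} = \sum_{i=r+1}^p \sigma_i(\X) \ge 0$, with equality if and only if $\rank(\X) \le r$.

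First I would check that $\X_c^*$ is feasible for the original problem \eqref{prob:gen-nonconvex-lowrank}. As an optimal solution of the penalized problem \eqref{prob:nonconvex-pen-DC}, it satisfies $\cA(\X_c^*) = b$; together with the hypothesis $\rank(\X_c^*) \le r$, this places $\X_c^*$ in the feasible set of \eqref{prob:gen-nonconvex-lowrank}.

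The heart of the argument is then a comparison of objective values. I would take an arbitrary feasible point $\X$ of \eqref{prob:gen-nonconvex-lowrank}, so that $\cA(\X) = b$ and $\rank(\X) \le r$. Such an $\X$ is automatically feasible for the penalized problem, whose only constraint is $\cA(\X) = b$, and its penalty term vanishes by the rank bound. Invoking optimality of $\X_c^*$ for \eqref{prob:nonconvex-pen-DC}, and using that the penalty vanishes at both $\X_c^*$ (by hypothesis) and $\X$, I obtain
\[
f(\X_c^*) = f(\X_c^*) + c\bigl(\|\X_c^*\|_* - \|\X_c^*\|_{(r)}\bigr) \le f(\X) + c\bigl(\|\X\|_* - \|\X\|_{(r)}\bigr) = f(\X).
\]
Since this holds for every feasible $\X$ and $\X_c^*$ is itself feasible, $\X_c^*$ attains the minimum of $f$ over the feasible set of \eqref{prob:gen-nonconvex-lowrank}, which is exactly the claim.

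I do not anticipate a genuine obstacle: the whole statement rests on the sign of the penalty and its vanishing set. The one point that deserves care is the inclusion of feasible sets --- the constrained problem's feasible region sits inside that of the penalized problem --- which is what lets optimality of $\X_c^*$ over the larger region descend to optimality over the smaller one.
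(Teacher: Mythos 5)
Your proposal is correct and coincides with the paper's own proof: both establish feasibility of $\X_c^*$ for \eqref{prob:gen-nonconvex-lowrank}, note that the penalty $\|\cdot\|_* - \|\cdot\|_{(r)}$ vanishes at $\X_c^*$ and at every rank-$\le r$ feasible point, and then compare objective values via the optimality of $\X_c^*$ for the penalized problem. No gaps; your added remark about the inclusion of feasible sets is implicit in the paper's argument as well.
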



In practice, one can gradually increase the penalty parameter $c$ to obtain a sufficient low rank solution $\X_c^*$. In our numerical experiments, we can obtain solutions with the desired rank with a properly chosen parameter $c$.

\subsubsection{A PDC algorithm for the penalized problem \eqref{prob:nonconvex-pen-DC}.}
\label{subsec:proxdca}

The central idea of the DC algorithm \citep{tao1997convex} is as follows: at each iteration, one approximates the concave part of the objective function by its affine majorant, then solves the resulting convex optimization problem. In this subsection, we present a variant of the classic DC algorithm for solving \eqref{prob:nonconvex-pen-DC}. For the execution of the algorithm, we recall that the sub-gradient of Ky Fan $r$-norm at a point $\X\in\Re^{p\times p}$ \citep{watson1993matrix} is
\[\partial \norm{\X}_{(r)}=\left\{ 
\U \, {\rm Diag}(q^*)\V^\top \, \mid q^*\in \Delta \right\},
\]
where {$\bU$ and $\bV$ are the singular vectors of $\X$,} and $\Delta$ is the optimal solution set of the following problem
\begin{equation*}
\max_{q\in \Re^{p}} \left\{ 
\sum_{i=1}^p \sigma_i(\X)q_i \mid 
\inprod{{\bf 1}_p}{q} \le r, \, 0\le q \le 1
\right\}.
\end{equation*}
Note that one can efficiently obtain a component of $\partial \norm{\X}_{(r)}$ by computing the
SVD of $X$ and picking up the SVD vectors corresponding to the $r$ largest singular values.
After these preparations, we are ready to state the PDC algorithm for problem \eqref{prob:nonconvex-pen-DC} in Algorithm \ref{alg:dc}.
Different from the classic DC algorithm, an additional proximal term is added to ensure that solutions of subproblems \eqref{subprob:mmalg} exist and the difference of two consecutive iterations converges. {See Theorem \ref{thm:convergence-alg-MM} and Remark \ref{remk:pdc} for more details.} 
%

\begin{algorithm}
	\caption{A PDC algorithm for solving \eqref{prob:nonconvex-pen-DC}}
	\label{alg:dc}
	Given $c>0$, $\alpha \ge 0$, and the stopping tolerance $\eta$, choose initial point $\X^0\in \Re^{p\times p}$.
	Iterate the following steps for $k=0,1,\ldots:$
	
	{\bf 1.}  Choose  $\W_k\in \partial \norm{\X^k}_{(r)}$. Compute
	\begin{equation}\label{subprob:mmalg}
	\begin{split}
	\bX^{k+1}\quad = \quad & \argmin f(\X) + c\left(\norm{\X}_* - \inprod{\W_k}{\X - \X^k} - \norm{\X^k}_{(r)}\right) 
	+ \frac{\alpha}{2}\norm{\X - \X^k}_F^2  \\
	& \text{subject to } \cA(\X) = b.
	\end{split}
	\end{equation}
	{\bf 2.} If $\norm{\X^{k+1} - \X^k}_F\le \eta$, stop.
\end{algorithm}
We say that $\X$ is a critical point of problem \eqref{prob:nonconvex-pen-DC} if 
\[\partial (f(\X) + c\norm{\X}_* + \delta(\cA(X) = b) ) \cap (c\partial \norm{\X}_{(r)}) \ne \emptyset.\]
We have the following convergence results for Algorithm \ref{alg:dc}. 
\begin{theorem}[Convergence of Algorithm \ref{alg:dc}]
	\label{thm:convergence-alg-MM}
	Let $\{\X^k\}$ be the sequence generated by Algorithm \ref{alg:dc} and $\alpha \ge 0$. Then $\{ f(\X^k) + c(\norm{\X^k}_* - \norm{\X^k}_{(r)})\}$ is a non-increasing sequence. If $\X^{k+1} = \X^k$ for some integer $k\ge 0$, then $\X^k$ is a critical point of \eqref{prob:nonconvex-pen-DC}. Otherwise, it holds that
	\begin{align*} 
	&\big(f(\X^{k+1}) + c(\norm{\X^{k+1}}_* - \norm{\X^{k+1}}_{(r)})\big) 
	- \big( f(\X^k) + c(\norm{\X^k}_* - \norm{\X^k}_{(r)} )\big)
	\le {} -\frac{\alpha}{2}\norm{\X^{k+1} - \X^k}^2_F.
	\end{align*}
	Moreover, 
	any accumulation point of the bounded sequence $\{\X^k\}$ is a critical point of problem \eqref{prob:nonconvex-pen-DC}.
	In addition, if $\alpha >0$, it holds that $\lim_{k\to \infty}\norm{\X^{k+1} - \X^k}_F = 0$.
\end{theorem}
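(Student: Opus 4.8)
The plan is to read Algorithm~\ref{alg:dc} as a majorize--minimize scheme for the DC objective
\[
F(\X) := f(\X) + c\norm{\X}_* - c\norm{\X}_{(r)}
\]
on the affine feasible set $\cA(\X)=b$, and to track the monotone behaviour of $F$ along the iterates. Grouping the convex part as $f(\X)+c\norm{\X}_*$ and the concave part as $-c\norm{\X}_{(r)}$, subproblem~\eqref{subprob:mmalg} minimizes over $\cA(\X)=b$ the surrogate
\[
G_k(\X) := f(\X) + c\norm{\X}_* - c\norm{\X^k}_{(r)} - c\inprod{\W_k}{\X-\X^k} + \frac{\alpha}{2}\norm{\X-\X^k}_F^2 ,
\]
where $\W_k\in\partial\norm{\X^k}_{(r)}$. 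Since $\norm{\cdot}_{(r)}$ is convex, the subgradient inequality gives $\norm{\X}_{(r)}\ge\norm{\X^k}_{(r)}+\inprod{\W_k}{\X-\X^k}$, so $G_k(\X)\ge F(\X)+\tfrac{\alpha}{2}\norm{\X-\X^k}_F^2$ for every feasible $\X$, with equality $G_k(\X^k)=F(\X^k)$ at the current iterate. Thus $G_k$ is a proximal majorant of $F$ that is tight at $\X^k$.

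First I would derive the descent estimate. Because $\X^{k+1}$ minimizes $G_k$ over the feasible set, the majorant inequality at $\X^{k+1}$ and optimality give
\[
F(\X^{k+1}) + \frac{\alpha}{2}\norm{\X^{k+1}-\X^k}_F^2 \le G_k(\X^{k+1}) \le G_k(\X^k) = F(\X^k),
\]
which is exactly the claimed decrease $F(\X^{k+1})-F(\X^k)\le -\tfrac{\alpha}{2}\norm{\X^{k+1}-\X^k}_F^2$ and, in particular, the monotone non-increase of $\{f(\X^k)+c(\norm{\X^k}_*-\norm{\X^k}_{(r)})\}$. Next, the optimality condition for the convex subproblem~\eqref{subprob:mmalg} reads
\[
c\W_k - \alpha(\X^{k+1}-\X^k)\in \partial\bigl(f + c\norm{\cdot}_* + \delta(\cA(\cdot)=b)\bigr)(\X^{k+1}),
\]
using the sum rule (valid under the mild qualification that $\ri(\dom f)$ meets $\{\cA(\X)=b\}$) and the identification of the normal cone of the affine set with $\range(\cA^*)$. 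If $\X^{k+1}=\X^k$, the proximal term vanishes and $c\W_k$ lies simultaneously in $\partial(f+c\norm{\cdot}_*+\delta(\cA(\cdot)=b))(\X^k)$ and in $c\,\partial\norm{\X^k}_{(r)}$, which is precisely the definition of a critical point of \eqref{prob:nonconvex-pen-DC}.

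For the asymptotic statements I would telescope the descent estimate. Since $\{\X^k\}$ is bounded and $f$ is lower semicontinuous convex, $F$ is bounded below on the (compact) closure of the iterates, so summing gives $\sum_k \tfrac{\alpha}{2}\norm{\X^{k+1}-\X^k}_F^2<\infty$; when $\alpha>0$ this forces $\norm{\X^{k+1}-\X^k}_F\to 0$, proving the last assertion. Let $\X^*$ be an accumulation point with $\X^{k_j}\to\X^*$. The subgradients $\W_{k_j}=\U\diag(q)\V^\top$ with $0\le q\le 1$ are uniformly bounded, so along a further subsequence $\W_{k_j}\to\W^*$, and the outer semicontinuity of $\partial\norm{\cdot}_{(r)}$ yields $\W^*\in\partial\norm{\X^*}_{(r)}$. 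With $\alpha>0$ we also have $\X^{k_j+1}\to\X^*$ and the proximal residual $\alpha(\X^{k_j+1}-\X^{k_j})\to 0$; passing to the limit in the displayed optimality condition and invoking the closed graph of the (maximal monotone) subdifferential of $f+c\norm{\cdot}_*+\delta(\cA(\cdot)=b)$ gives $c\W^*\in\partial(f+c\norm{\cdot}_*+\delta(\cA(\cdot)=b))(\X^*)$. Combined with $c\W^*\in c\,\partial\norm{\X^*}_{(r)}$, this makes $\X^*$ critical.

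I expect the main obstacle to be this last limit-passing step. Two ingredients must be secured: (i) the subdifferential map of the nonsmooth Ky Fan norm has closed graph and bounded values, so that $\W_{k_j}\to\W^*$ implies $\W^*\in\partial\norm{\X^*}_{(r)}$; and (ii) one must reconcile the mismatch that $\W_k$ is evaluated at $\X^k$ while the subproblem optimality lives at $\X^{k+1}$. When $\alpha>0$ this is handled cleanly because $\norm{\X^{k+1}-\X^k}_F\to 0$ simultaneously gives $\X^{k_j+1}\to\X^*$ and kills the proximal residual. For $\alpha=0$, where $\norm{\X^{k+1}-\X^k}_F\to 0$ is no longer automatic, aligning the limits of $\X^{k_j}$ and $\X^{k_j+1}$ is the delicate point and is where a more careful subsequential/continuity argument (subsumed by the general majorized indefinite-proximal DC analysis developed subsequently) is needed.
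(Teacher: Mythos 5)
Your descent estimate, the monotonicity of the objective values, and the criticality of a fixed point $\X^{k+1}=\X^k$ are all correct, and they coincide with how the paper argues: Theorem \ref{thm:convergence-alg-MM} is proved there by specializing the general majorized iPDC result (Theorem \ref{thm: MMconvergence}) with $g\equiv 0$, $p=f+c\norm{\cdot}_*+\delta(\cA(\cdot)=b)$, $q=c\norm{\cdot}_{(r)}$, $\cT=\alpha\cI$, and Lemma \ref{lemma:decrease} there is exactly your proximal-majorant descent inequality (the paper's optimality-condition derivation actually yields the stronger decrease $-\alpha\norm{\X^{k+1}-\X^k}_F^2$; your majorization argument gives the $-\frac{\alpha}{2}$ constant, which is all the statement claims). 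Where you genuinely diverge from the paper is the accumulation-point argument: you pass subgradient \emph{inclusions} to the limit, whereas the paper passes \emph{function-value} inequalities to the limit.

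That divergence is precisely where the gap sits. The theorem asserts criticality of accumulation points for every $\alpha\ge 0$; only the final claim $\norm{\X^{k+1}-\X^k}_F\to 0$ is restricted to $\alpha>0$. Your limit argument needs $\X^{k_j+1}\to\X^*$ in the inclusion $c\W_{k_j}-\alpha(\X^{k_j+1}-\X^{k_j})\in \partial\bigl(f+c\norm{\cdot}_*+\delta(\cA(\cdot)=b)\bigr)(\X^{k_j+1})$, and you can only extract that from $\norm{\X^{k+1}-\X^k}_F\to 0$, i.e., from $\alpha>0$; you flag this yourself, but deferring the $\alpha=0$ case to ``the general iPDC analysis'' does not close the hole, since that analysis is what is being proved. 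The paper's mechanism avoids the issue entirely: writing $\theta=p-q$ and $\widehat{\theta}(\cdot\,;\X^k)$ for the linearized surrogate, monotonicity of $\theta$ along the \emph{whole} sequence gives $\theta(\X^{k_{j+1}})\le\theta(\X^{k_j+1})$ (since $k_{j+1}\ge k_j+1$), and then majorization plus optimality of $\X^{k_j+1}$ in subproblem \eqref{subprob:mmalg} yield, for every $\X$,
\[
\theta(\X^{k_{j+1}})\;\le\;\theta(\X^{k_j+1})\;\le\;\widehat{\theta}(\X^{k_j+1};\X^{k_j})+\frac{\alpha}{2}\norm{\X^{k_j+1}-\X^{k_j}}_F^2\;\le\;\widehat{\theta}(\X;\X^{k_j})+\frac{\alpha}{2}\norm{\X-\X^{k_j}}_F^2.
\]
Letting $j\to\infty$ along a further subsequence on which $\W_{k_j}\to\W^*\in\partial\norm{\X^*}_{(r)}$ (boundedness and closed graph of the subdifferential, the same facts you already invoke, plus continuity of $\norm{\cdot}_{(r)}$ and lower semicontinuity of $\theta$) shows that $\X^*$ minimizes its own convex surrogate $\widehat{\theta}(\cdot\,;\X^*)+\frac{\alpha}{2}\norm{\cdot-\X^*}_F^2$, and the first-order optimality condition of that convex problem gives $c\W^*\in\partial\bigl(f+c\norm{\cdot}_*+\delta(\cA(\cdot)=b)\bigr)(\X^*)$, i.e., criticality. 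The crucial point is that the comparison is between $\theta(\X^{k_{j+1}})$ --- the next member of the convergent subsequence --- and the surrogate value at $\X^{k_j+1}$, so one never needs to know where $\X^{k_j+1}$ itself converges. Replacing your limit-of-inclusions step by this limit-of-values step makes your proof cover the full range $\alpha\ge 0$.
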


\begin{remark}[Adjusting Parameters]
	\label{remk:pdc}
	{\rm In practice, a small $\alpha >0$ is suggested to ensure strict decrease of the objective value and convergence 
		of $\{ \norm{\X^{k+1} - \X^k}_F \}$; 
		if $f$ is strongly convex, one achieves these nice properties even if $\alpha =0$ based on the results of Theorem \ref{thm: MMconvergence}. The penalty parameter $c$ can be adaptively adjusted according to the rank of the sequence generated by Algorithm \ref{alg:dc}.} 
\end{remark}

\begin{remark}[Number of iterations of Algorithm \ref{alg:dc}]
	\label{prop:convergence-alg-MM}
	Let $\eta >0$ be the stopping tolerance and $F^*$ be the optimal value of problem \eqref{prob:nonconvex-pen-DC}. By using the inequality in Theorem \ref{thm:convergence-alg-MM}, it can be shown that if $\alpha >0$, then  Algorithm \ref{alg:dc} terminates in no more than $K$ iterations, where
	\[
	K = \left\lceil \frac{2\big(f(X^0) + c(\norm{X^0}_* - \norm{X^0}_{(r)}) - F^* \big)}{\alpha \eta^2}\right\rceil + 1.
	\] 
\end{remark}

\begin{remark}[Statistical properties]
		The statistical rate we derived in Theorem 2 does not carry over to the iterates of the DC algorithm here. Though we show in Theorem 5 that the DC algorithm can converge to a critical point, it remains unclear whethere this point is close to the global optimum and provably enjoys the statistical guarantee. Recently there have been many works conveying positive messages on the statistical properties of the non-convex optimization algorithms. For example, \citet{LWa15} showed that any stationary point of the composite objective function they considered lies within statistical precision of the true parameter. We hope to establish similar theory for the proposed DC approach in future research.
\end{remark}

Next, we discuss how to solve subproblems \eqref{subprob:mmalg}. \eqref{subprob:mmalg} is still a nuclear norm penalized convex optimization problem and is a special case of model \eqref{prob:gen-convex-nuc} with $g(\X) = f(\X) + \inprod{\W}{\X} + \frac{\alpha }{2}\norm{\X}_F^2$. Hence, Algorithm \ref{alg:sGS-ADMM} can directly solve these subproblems efficiently. When Algorithm \ref{alg:sGS-ADMM} is executed on this new function $g$, all computations, except for the update of $\bf\Xi$, have already been discussed in Section \ref{sec:optNuc}. To update $\bf\Xi$ in the process of executing Algorithm \ref{alg:sGS-ADMM} for solving \eqref{subprob:mmalg} with $g(\X) = \ell_n(\X) + \delta(\X\ge0) + \inprod{\W}{\X} + \frac{\alpha }{2}\norm{\X}_F^2$, we need to solve the following minimization problem for given ${\bf R}\in\Re^{p\times p}$ and $\sigma > 0$,
\[
\Z^* = \argmin_{\Z} \left\{ \sigma g(\Z) + \frac{1}{2} \norm{\Z - \sigma{\bf R}}^2\right\}.
\]  
$\Z^*$ here can be calculated by
\begin{equation*}
Z_{ij}^* = 
\left\{
\begin{aligned}
{}&\frac{(\sigma R_{ij} - W_{ij}) + \sigma\sqrt{(R_{ij} - W_{ij}/\sigma)^2 + 4 (\alpha + 1)n_{ij}/(n\sigma)}}{2(\alpha + 1)} \quad \mbox{if } (i,j)\in \Omega; \\[5pt]
{}&\sigma \max(R_{ij} - W_{ij}/\sigma,0) \quad  \mbox{if } (i,j)\in \overline{\Omega}.
\end{aligned}
\right.
\end{equation*}

\subsubsection {A unified analysis for the majorized iPDC algorithm.}
\label{sec:DCA}
Due to the presence of the proximal term $\frac{\alpha}{2}\norm{\X - \X^k}^2$ in Algorithm \ref{alg:dc}, the classical DC analyses cannot be applied directly. In this subsection, we provide a unified convergence analysis for the majorized indefinite-proximal DC (majorized iPDC) algorithm which includes Algorithm \ref{alg:dc} as a special instance. 
Let $\mathbb{X}$ be a finite-dimensional real Euclidean space endowed with inner product $\inprod{\cdot}{\cdot}$ and induced norm $\norm{\cdot}$.
Consider the following optimization problem
\begin{equation}\label{eq:dca model}
\min_{x\in \mathbb{X}}\; \theta(x)\triangleq g(x) + p(x) - q(x),
\end{equation}
where $g:\mathbb{X}\to \Re$ is a continuously differentiable function (not necessarily convex) with a Lipschitz continuous gradient and Lipschitz modulus $L_g >0$, i.e.,
\[\norm{\nabla f(x) - \nabla f(x')} \le L_g \norm{x - x'}\quad \forall \, x, x'\in \mathbb{X}, \]
$p:\mathbb{X}\to (-\infty, +\infty]$ and $q:\mathbb{X}\to(-\infty, +\infty]$ are two proper closed convex functions.
It is not difficult to observe that penalized problem \eqref{prob:nonconvex-pen-DC} is a special instance of problem \eqref{eq:dca model}. 
For general model \eqref{eq:dca model}, one can only expect the DC algorithm converges to a critical point $\bar{x}\in \mathbb{X}$ of  \eqref{eq:dca model} satisfying 
\[ \left(\nabla g(\bar{x}) + \partial p(\bar{x}) \right) \cap \partial q(\bar{x}) \neq \emptyset.\]

%


Since $g$ is continuously differentiable with Lipschitz continuous gradient, there exists a self-adjoint positive semidefinite linear operator $\cG:\mathbb{X} \to \mathbb{X}$ such that for any
$x, x'\in \mathbb{X}$,
\begin{equation*}\label{ineq:majorization}
g(x)\le \widehat{g}(x;x')\triangleq g(x') + \langle \nabla g(x'), x-x'\rangle  + \frac{1}{2}\|x - x'\|^2_{\mathcal{G}}.
\end{equation*}

\begin{algorithm}
	\caption{A majorized indefinite-proximal DC algorithm for solving problem \eqref{eq:dca model}}
	\label{alg:dca-general}
	Given initial point $x^0\in \mathbb{X}$ and stopping tolerance $\eta$, choose a self-adjoint, possibly
	indefinite, linear operator $\cT:\mathbb{X} \to \mathbb{X}$. 
	Iterate the following steps for $k=0,1,\ldots:$
	
	{\bf 1.} Choose $\xi^k \in \partial q({x}^k)$. Compute
	\begin{equation}\label{eq:subproblem}
	x^{k+1} \in \argmin_{x\in \mathbb{X}} \; \left\{ \widehat{\theta}(x;{x}^k) + \frac{1}{2}\|x - {x}^k\|^2_{\mathcal{T}}\right\}, 
	\end{equation}
	where $\widehat{\theta}(x;{x}^k) \triangleq \widehat{g}(x;{x}^k) + p(x) - \big(q({x}^k) + \langle x-{x}^k, \xi^k\rangle \big).$
	
	{\bf 2.}  If $\|{x}^{k+1}-{x}^k\|\le \eta$, stop.
\end{algorithm}
We present the majorized iPDC algorithm for solving \eqref{eq:dca model} in Algorithm \ref{alg:dca-general} and provide the following convergence results. 
\begin{theorem}[Convergence of iPDC]\label{thm: MMconvergence}
	Assume that {$\inf_{x\in \mathbb{X}}\theta(x)>-\infty$}. Let $\{x^k\}$ be the sequence generated by Algorithm \ref{alg:dca-general}. 
	If $x^{k+1} = x^k$ for some $k\ge 0$, then $x^k$ is a critical point of \eqref{eq:dca model}. If $\mathcal{G} + 2\mathcal{T}\succeq 0$, then
	any accumulation point of $\{x^{k}\}$, if exists, is a critical point of \eqref{eq:dca model}. In addition, if $ \mathcal{G} + 2\mathcal{T}\succ 0$, it holds that $\displaystyle\lim_{k\to\infty}\|{x}^{k+1} - {x}^k\| = 0$.
\end{theorem}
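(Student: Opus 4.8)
The plan is to follow the standard majorization–minimization analysis for DC-type algorithms, adapted to accommodate the possibly indefinite proximal operator $\cT$. First I would establish the fundamental sufficient-decrease inequality. By the definition of $x^{k+1}$ as a minimizer of the subproblem \eqref{eq:subproblem}, we have $\widehat{\theta}(x^{k+1}; x^k) + \frac{1}{2}\|x^{k+1} - x^k\|^2_{\cT} \le \widehat{\theta}(x^k; x^k) = \theta(x^k)$, where the last equality uses that the majorant $\widehat{g}(\cdot; x^k)$ and the linearization of $q$ are exact at $x = x^k$. Combining this with the majorization inequality $g(x^{k+1}) \le \widehat{g}(x^{k+1}; x^k)$ and the convexity bound $q(x^{k+1}) \ge q(x^k) + \langle x^{k+1} - x^k, \xi^k\rangle$ (since $\xi^k \in \partial q(x^k)$), I would chain these to derive
\[
\theta(x^{k+1}) \le \theta(x^k) - \tfrac{1}{2}\|x^{k+1} - x^k\|^2_{\cG + 2\cT}.
\]
This single inequality is the engine for all three conclusions.

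Next I would dispatch the stationarity claim in the case $x^{k+1} = x^k$. Writing the first-order optimality condition for the subproblem, namely $0 \in \nabla g(x^k) + \cG(x^{k+1} - x^k) + \partial p(x^{k+1}) - \xi^k + \cT(x^{k+1} - x^k)$, and substituting $x^{k+1} = x^k$ collapses the proximal and majorization correction terms to zero, yielding $\xi^k \in \nabla g(x^k) + \partial p(x^k)$ with $\xi^k \in \partial q(x^k)$, which is exactly the criticality condition $(\nabla g(x^k) + \partial p(x^k)) \cap \partial q(x^k) \neq \emptyset$.

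For the accumulation-point statement I would assume $\cG + 2\cT \succeq 0$ so that the decrease inequality guarantees $\{\theta(x^k)\}$ is non-increasing; together with $\inf_x \theta(x) > -\infty$ this shows $\theta(x^k)$ converges and the summed decrements $\sum_k \|x^{k+1}-x^k\|^2_{\cG+2\cT}$ are finite. Let $\bar{x}$ be the limit of a subsequence $x^{k_i} \to \bar{x}$. The delicate point is that finiteness of the sum only forces $\|x^{k+1} - x^k\|_{\cG + 2\cT} \to 0$ along the subsequence, which controls $x^{k+1}-x^k$ only in the seminorm induced by $\cG + 2\cT$ and need not control it in the full norm when $\cG + 2\cT$ is merely positive semidefinite. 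I would therefore pass to the limit in the optimality condition using the outer semicontinuity (closedness) of the subdifferentials $\partial p$ and $\partial q$ and the continuity of $\nabla g$, taking care to select $\xi^{k_i}$ in a bounded fashion so that a further subsequence of the multipliers converges to some $\bar{\xi} \in \partial q(\bar{x})$; the terms $\cG(x^{k_i+1}-x^{k_i})$ and $\cT(x^{k_i+1}-x^{k_i})$ must be shown to vanish, which is where the semidefiniteness hypothesis $\cG + 2\cT \succeq 0$ is used to kill exactly the right combination. Finally, when $\cG + 2\cT \succ 0$, the seminorm becomes a genuine norm, so the finite-sum conclusion upgrades directly to $\|x^{k+1} - x^k\| \to 0$.

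The main obstacle I anticipate is the indefinite-proximal accumulation-point argument: unlike classical proximal DC schemes where $\cT \succeq 0$ makes the proximal term individually dissipative, here $\cT$ may be indefinite and only the aggregate $\cG + 2\cT$ is assumed positive semidefinite. Consequently I cannot separately bound the proximal contribution, and the passage to the limit in the subdifferential inclusion must be done so that the $\cG$- and $\cT$-weighted increments are shown to vanish jointly rather than term by term. Handling this correctly — in particular ensuring the residual operator terms converge to zero in the appropriate sense before invoking closedness of the graph of $\partial q$ — is the crux of the proof.
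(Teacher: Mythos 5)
Your treatment of the fixed-point case ($x^{k+1}=x^k$ implies criticality) and of the positive-definite case are fine, but there are two genuine gaps. The first is in your very first step: subproblem minimality plus majorization does \emph{not} yield the claimed decrease $\theta(x^{k+1}) \le \theta(x^k) - \frac{1}{2}\|x^{k+1}-x^k\|^2_{\cG+2\cT}$. Comparing the subproblem objective at $x^{k+1}$ and at $x^k$ gives $\widehat{\theta}(x^{k+1};x^k) + \frac{1}{2}\|x^{k+1}-x^k\|^2_{\cT} \le \theta(x^k)$, and chaining this with $\theta(x^{k+1})\le\widehat{\theta}(x^{k+1};x^k)$ produces only $\theta(x^{k+1}) \le \theta(x^k) - \frac{1}{2}\|x^{k+1}-x^k\|^2_{\cT}$. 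Since $\cT$ may be indefinite, this is not even a decrease, so monotonicity of $\{\theta(x^k)\}$, summability, and everything downstream collapse. The missing ingredient --- exactly what the paper's auxiliary lemma uses --- is the first-order optimality condition of the subproblem: there exists $\eta^{k+1}\in\partial p(x^{k+1})$ with $0 = \nabla g(x^k) + (\cG+\cT)(x^{k+1}-x^k) + \eta^{k+1} - \xi^k$; plugging this into the convexity bound $p(x^{k+1}) - p(x^k) \le \langle \eta^{k+1}, x^{k+1}-x^k\rangle$ and the definition of $\widehat{\theta}$ yields the extra term $-\|x^{k+1}-x^k\|^2_{\cG+\cT}$, which combined with $+\frac{1}{2}\|x^{k+1}-x^k\|^2_{\cG}$ gives the $\cG+2\cT$ decrease. (Equivalently: since $2(\cG+\cT)=\cG+(\cG+2\cT)\succeq 0$ the subproblem is convex, and its value at $x^k$ exceeds its minimum by the quadratic-growth term $\frac{1}{2}\|x^k-x^{k+1}\|^2_{\cG+\cT}$.)

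The second gap is the accumulation-point argument under mere semidefiniteness, which you correctly flag as the crux but do not actually resolve: the assertion that ``the semidefiniteness hypothesis is used to kill exactly the right combination'' is not substantiated, and in fact it cannot be. When $\cG+2\cT\succeq 0$ only, summability of $\|x^{k+1}-x^k\|^2_{\cG+2\cT}$ controls neither $\|x^{k+1}-x^k\|$ nor $\cG(x^{k+1}-x^k)$ (only $(\cG+2\cT)(x^{k+1}-x^k)\to 0$), and --- more fatally --- $x^{k_j}\to\bar{x}$ does not imply $x^{k_j+1}\to\bar{x}$, so the subdifferential inclusion you want to pass to the limit is anchored at points $x^{k_j+1}$ that need not converge to $\bar{x}$; outer semicontinuity of $\partial p$ then gives you nothing at $\bar{x}$. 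The paper sidesteps the optimality-condition route entirely and argues through function values: using monotonicity of $\{\theta(x^k)\}$ and minimality of $x^{k_j+1}$, for every fixed $x$ one has $\theta(x^{k_{j+1}}) \le \theta(x^{k_j+1}) \le \widehat{\theta}(x^{k_j+1};x^{k_j}) \le \widehat{\theta}(x;x^{k_j})$ (with the corresponding proximal terms), and letting $j\to\infty$ shows $\widehat{\theta}(\bar{x};\bar{x})\le\widehat{\theta}(x;\bar{x})$ for all $x$, i.e.\ $\bar{x}$ minimizes its own convex model; the first-order condition of that convex problem at $\bar{x}$ then gives $\bigl(\nabla g(\bar{x})+\partial p(\bar{x})\bigr)\cap\partial q(\bar{x})\neq\emptyset$. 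This value-based sandwich needs no vanishing increments and no convergence of $x^{k_j+1}$, which is precisely what makes the semidefinite case work; to salvage your plan you would either have to assume $\cG+2\cT\succ 0$ (where your limit-passing is legitimate) or adopt this alternative argument.
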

The proof of Theorem \ref{thm: MMconvergence} is provided in the supplementary material.

\begin{remark}
Here, we discuss the roles of linear operators $\cG$ and $\cT$. 
First, $\cG$ makes the subproblems \eqref{eq:subproblem} in Algorithm \ref{alg:dca-general} more amenable to efficient computations. Theorem \ref{thm: MMconvergence} shows the algorithm is convergent if $\cG + 2\cT \succeq 0$. This indicates that instead of adding the commonly used positive semidefinte or positive definite proximal terms, we allow $\cT$ to be indefinite for better practical performance. The computational benefit of using indefinite proximal terms has also been observed in \citep{sun2010majorized,li2016majorized}. 
As far as we know, Theorem \ref{thm: MMconvergence} provides the first rigorous convergence proof of the DC algorithms with indefinite proximal terms. Second, $\cG$ and $\cT$ also help to guarantee that the solutions of the subproblems \eqref{eq:subproblem} exist.
Since $\cG + 2\cT \succeq 0$ and $\cG \succeq 0$, we have that $2\cG + 2\cT \succeq 0$, i.e., $\cG + \cT \succeq 0$. 
Hence, $\cG+2\cT \succeq 0$ ($\cG+2\cT \succ 0$) implies that subproblems \eqref{eq:subproblem} 
are (strongly) convex. Third, the choices of $\cG$ and $\cT$ are very much problem dependent. The general principle is that $\cG + \cT$ should be as small as possible while  ensuring {$x^{k+1}$} is relatively easy to compute.
\end{remark}

\section{Simulation results}
\label{sec:num}

In this section, we conduct numerical experiments to validate our theoretical results. We first compare the proposed nuclear-norm regularized estimator and the rank-constrained estimator with previous methods in literature using synthetic data. We then use the rank-constrained method to analyze a dataset of Manhattan taxi trips to reveal citywide traffic patterns. All of our computational results are obtained by running {\sc Matlab} (version 9.5) on a windows workstation (8-core, Intel Xeon W-2145 at 3.70GHz, 64 G RAM).

\subsection{Experiments with simulated data}
We randomly draw the transition matrix $\P$ as follows. Let $\U_0, \V_0 \in \Re^{p\times r}$ be random matrices with i.i.d. standard normal entries and let
\[
\widetilde \U_{[i,:]} = (\U_0 \circ \U_0)_{[i,:]} / \norm{(\U_0)_{[i,:]}}_2^2 \mbox{ and } \widetilde \V_{[:,j]} = (\V_0 \circ \bV_0)_{[:,j]} / \norm{(\V_0)_{[:,j]}}_2^2, \quad i=1,\ldots,p, j=1,\ldots, r, 
\]
where $\circ$ is the Hadamard product and $\widetilde \U_{[i,:]}$ denotes the $i$-th row of $\widetilde\U$. The transition matrix $\P$ is obtained via $\P = \widetilde\U \widetilde\V^\top$. Then we simulate a Markov chain trajectory of length $n = {\rm round} (krp \log(p))$ on $p$ states, $\{X_0,\ldots, X_n\}$, with varying values of $k$. 

We compare the performance of four procedures: the nuclear norm penalized MLE, rank-constrained MLE, empirical estimator and spectral estimator. Here, the empirical estimator is the empirical count distribution matrix defined as follows: 
$$\tilde{\P} = \left(\tilde{\P}_{ij}\right)_{1\leq i, j\leq p}, \quad \tilde{\P}_{ij} = \left\{\begin{array}{ll}
\frac{\sum_{k =1}^n 1_{\{X_{k-1} = i, X_k = j\}}}{\sum_{k =1}^n 1_{\{X_{k-1} = i\}}}, & \quad  \text{when }\sum_{k =1}^n 1_{\{X_{k-1} = i\}} \geq 1;\\
\frac{1}{p}, & \quad \text{when } \sum_{k =1}^n 1_{\{X_{k-1} = i\}} = 0.
\end{array}\right.$$
The empirical estimator is in fact the unconstrained maximum likelihood estimator without taking into account the low-rank structure. The spectral estimator \citep[Algorithm 1]{zhang2018optimal} is based on a truncated SVD. In the implementation of the nuclear norm penalized estimator, the regularization parameter $\lambda$ in  \eqref{prob:convex-nuclear} is set to be $C\sqrt{{p\log p/n}}$ with constant $C$ selected by cross-validation. 
For each method, let $\widehat{\U}$ and $\widehat{\V}$ be the leading $r$ left and right singular vectors of the resulting estimator $\widehat \P$. We measure the statistical performance of $\widehat{\bP}$ through three quantities: 
\begin{align*}
 \eta_F := \norm{\P - \widehat \P}_F^2, ~~ \eta_{KL}:= D_{\KL}(\bP,\widehat\bP), \mbox{ and } \eta_{UV} := \max\bigl\{ \norm{\sin \Theta(\widehat{\U},\U)}_F^2, \norm{\sin \Theta(\widehat{\V},\V)}_F^2 \bigr\}.
\end{align*}

We consider the following setting with $p=1000$, $r = 10$, and $k\in [10, 100]$.
The results are plotted in Figure \ref{figure:mlevsrank}.
One can observe from these results that for rank-constrained, nuclear norm penalized and spectral methods, $\eta_F, \eta_{KL}$ and $\eta_{UV}$ converge to zero quickly as the number of the state transitions $n$ increases, while the statistical error of the empirical estimator decreases in a much slower rate. Among the three estimators in the zoomed plots (second rows of Figure \ref{figure:mlevsrank}), the rank constrained estimator slightly outperforms the nuclear norm penalized estimator and the spectral estimator. 
This observation is consistent with our algorithmic design: the nuclear norm minimization procedure is actually the initial step of Algorithm \ref{alg:dc}; thus the rank-constrained estimator can be seen as a refined version of the nuclear norm regularized estimator.

We also consider the case where the invariant distribution $\pi$ is ``imbalanced'', i.e., we construct $\P$ such that $\min_{i=1,\ldots,p} \pi_i$ is quite small and the appearance of some states is significantly less than the others. Specifically, given $\gamma_1,\gamma_2 >0$, we generate a diagonal matrix $\D$ with i.i.d. beta-distributed (${\rm Beta}(\gamma_1, \gamma_2)$) diagonal elements.
After obtaining $\widetilde \U$ and $\widetilde \V$ in the same way as in the beginning of this subsection, we compute $\widetilde \P = \widetilde \U \widetilde \V^\top \D$. The ground truth transition matrix  $\P$ is obtained after a normalization of $\widetilde \P$. Then, we simulate a Markov chain trajectory of length $n = {\rm round}(krp\log(p))$ on $p$ states.
In our experiment, we set $p = 1000$, $r = 10$, $k\in [10,100]$, and $\gamma_1 = \gamma_2 = 0.5$. The detailed results are plotted in Figure \ref{figure:betamlevsrank}. As can be seen from the figure, under the imbalanced setting, the rank-constrained, nuclear norm penalized and spectral methods perform much better than the empirical approach in terms of all the three statistical performance measures ($\eta_F$, $\eta_{KL}$ and $\eta_{UV}$). In addition, the rank-constrained estimator exhibits a clear advantage over two other approaches.

\begin{figure*}
	\begin{subfigure}{.33\textwidth}
		\label{figure:mle-etaF}
		\includegraphics[width=\linewidth]{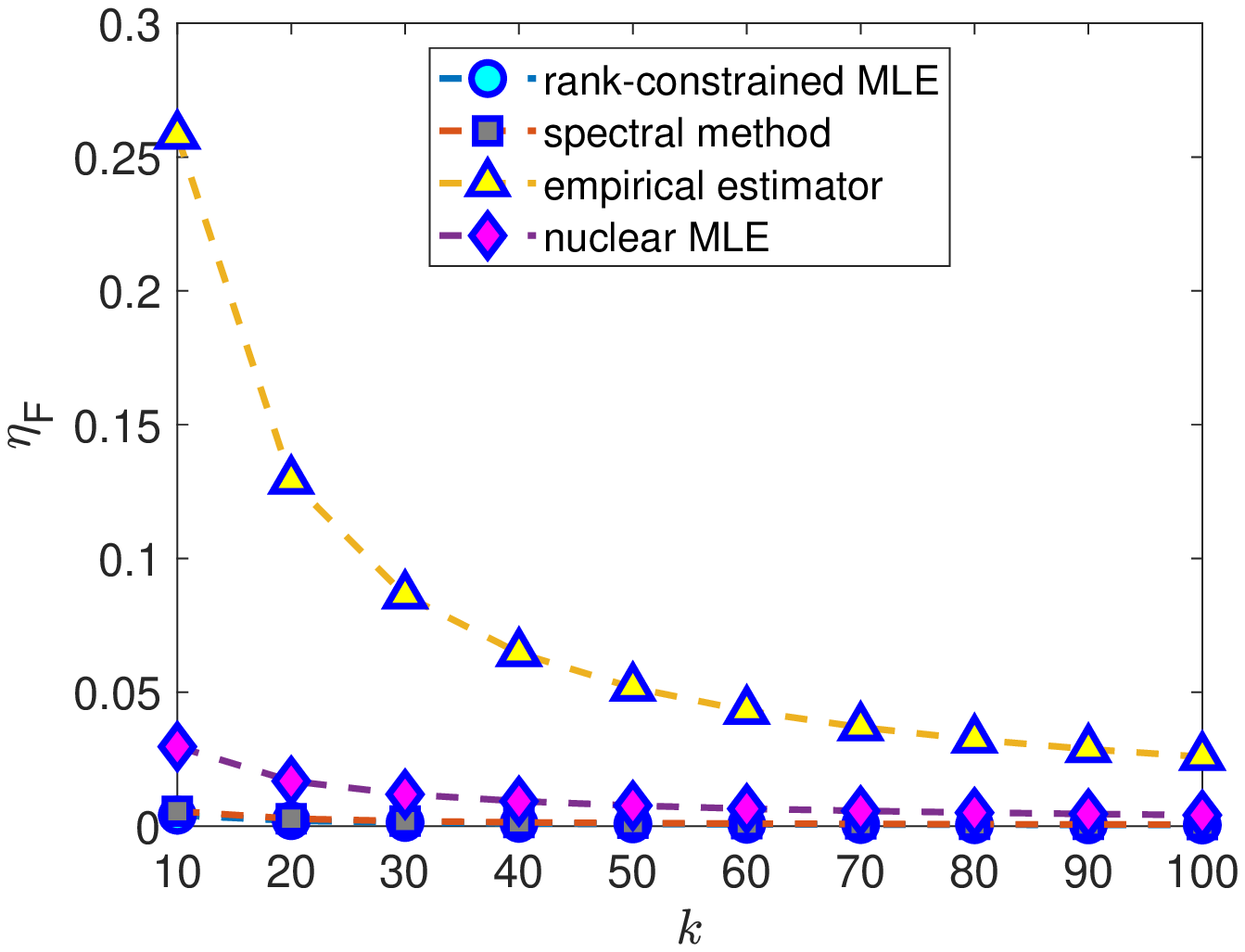}
	\end{subfigure}
	\begin{subfigure}{.33\textwidth}
		\label{figure:mle-etaKL}
		\includegraphics[width=\linewidth]{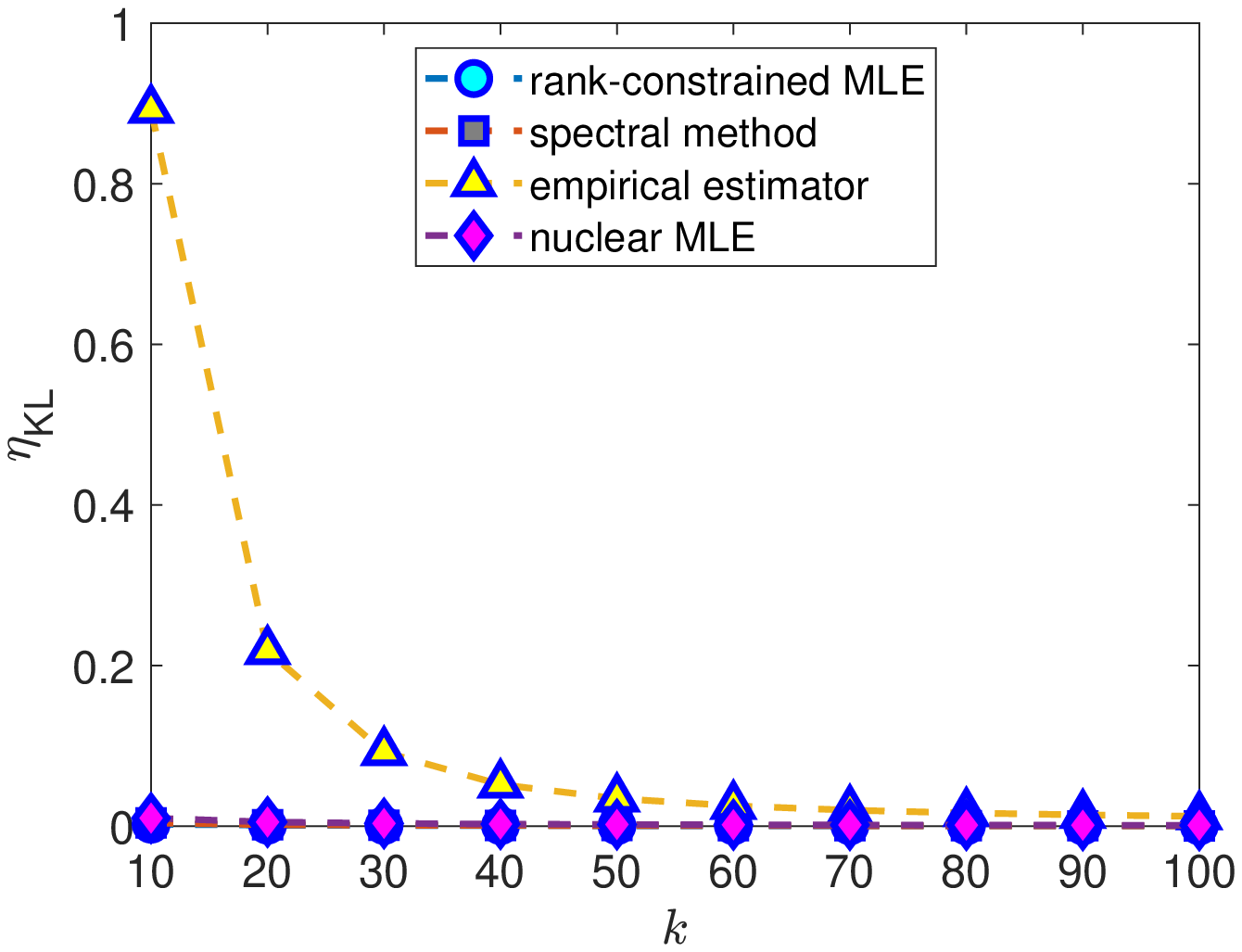}
	\end{subfigure}
	\begin{subfigure}{.33\textwidth}
		\label{figure:mle-etaUV}
		\includegraphics[width=\linewidth]{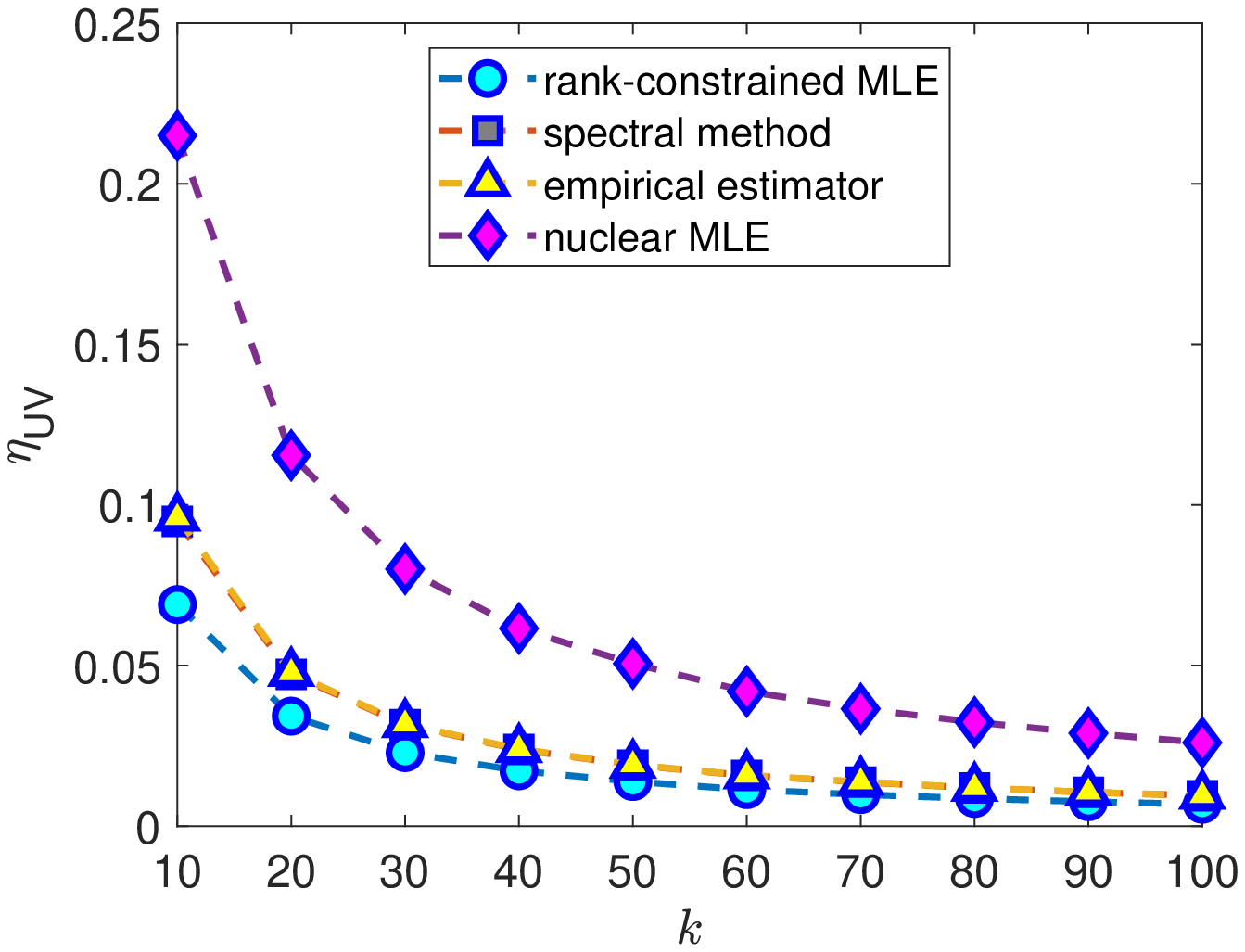}
	\end{subfigure}
	\\
	
	\begin{subfigure}{.33\textwidth}
		\label{figure:svd-etaF}
		\includegraphics[width=\linewidth]{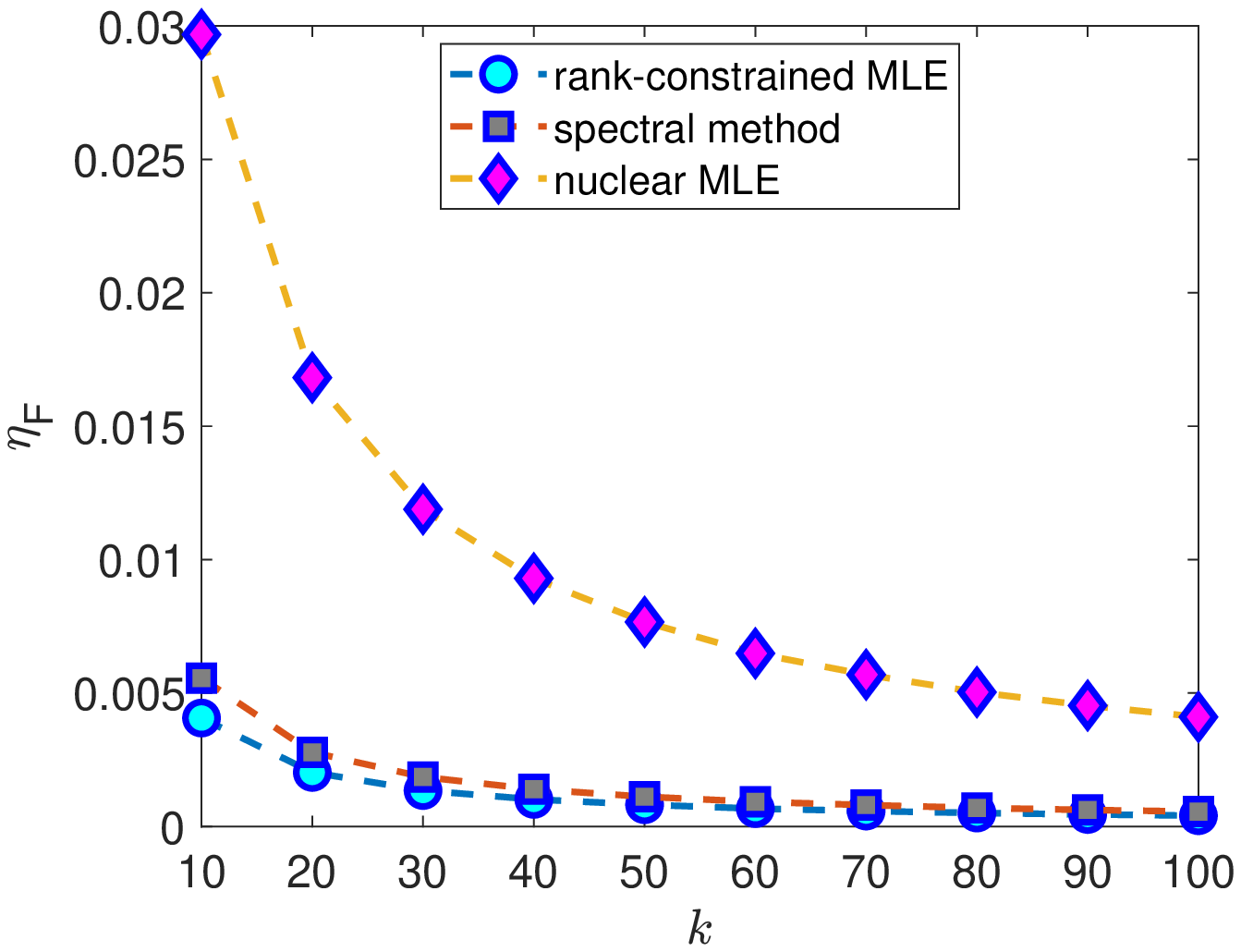}
	\end{subfigure}
	\begin{subfigure}{.33\textwidth}
		\label{figure:svd-etaKL}
		\includegraphics[width=\linewidth]{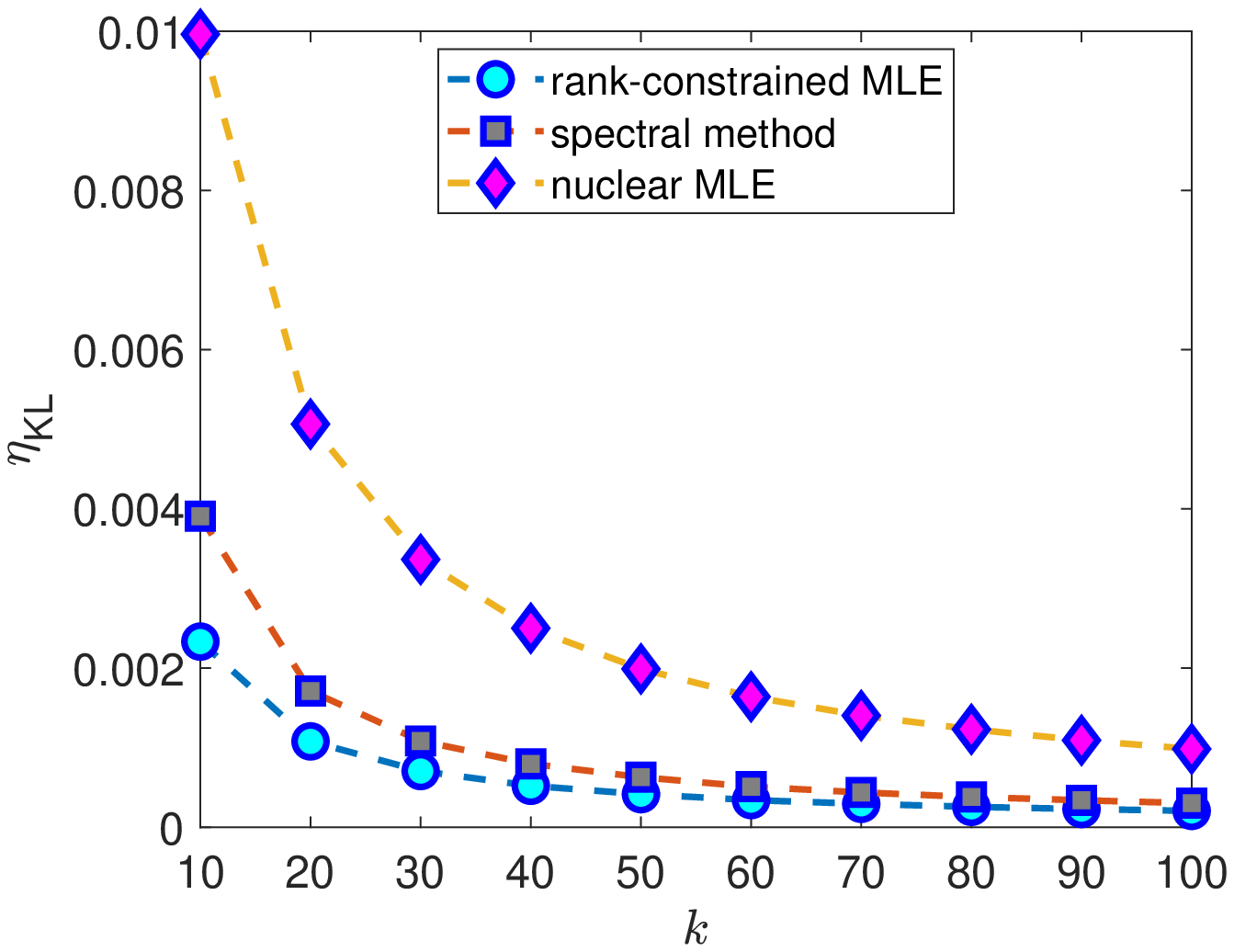}
	\end{subfigure}
	\begin{subfigure}{.33\textwidth}
		\label{figure:svd-etaUV}
		\includegraphics[width=\linewidth]{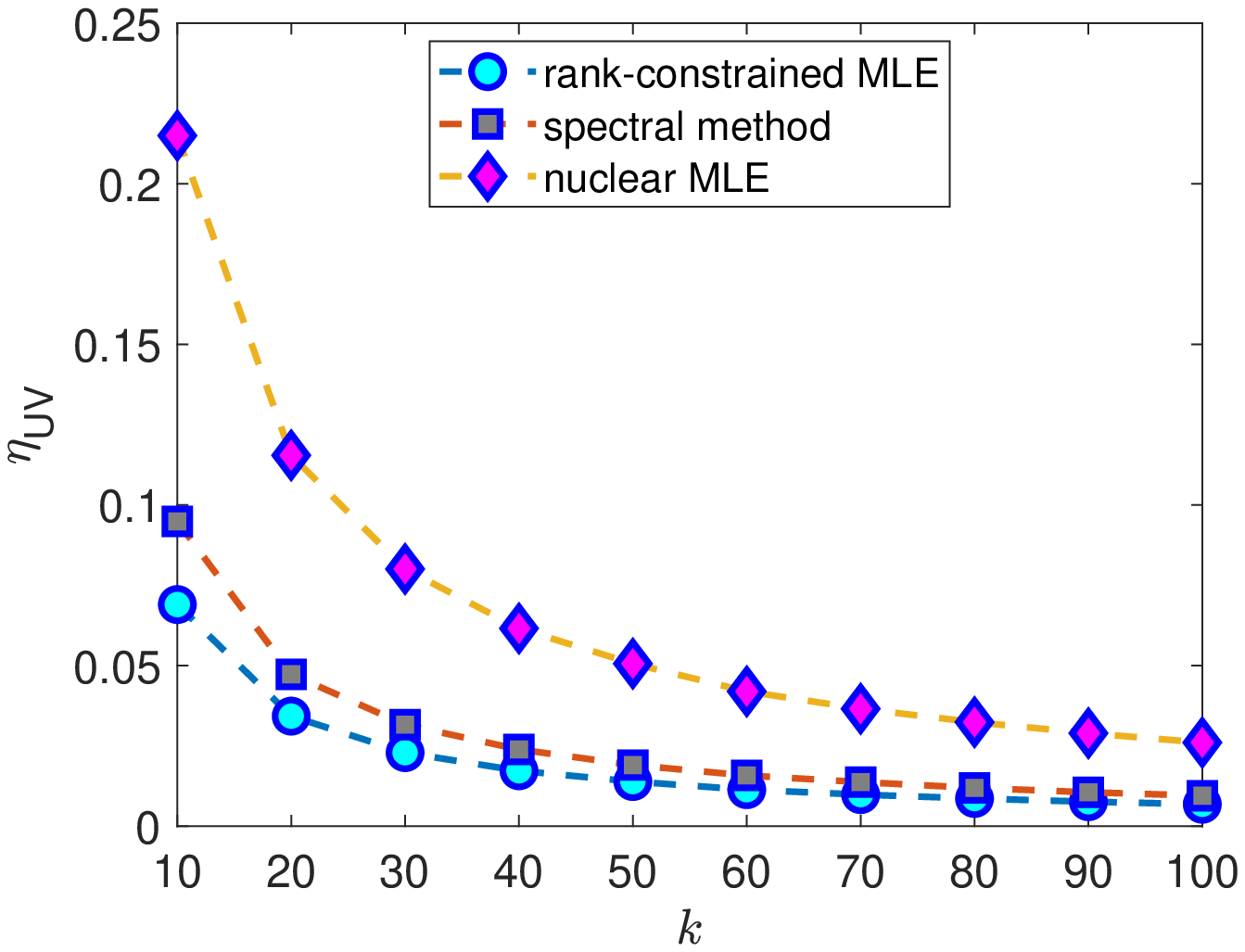}
	\end{subfigure}
	
	\caption{The first row compares the rank-constrained estimator, nuclear norm penalized estimator, spectral method, and empirical estimator in terms of $\eta_F = \norm{\P - \widehat \P}_F^2, \eta_{KL}= D_{\KL}(\bP,\widehat\bP)$, and $\eta_{UV} = \max\bigl\{ \norm{\sin \Theta(\widehat{\U},\U,)}_F^2, \norm{\sin \Theta(\widehat{\V},\V)}_F^2 \bigr\}$. The second row provides the zoomed plots of the first row without the empirical estimator. Here, $n = {\rm round}(k  rp\log p)$ with $p = 1,000$, $r = 10$ and $k$ ranging from $10$ to $100$. 
	}
	\label{figure:mlevsrank}
	
		\begin{subfigure}{.33\textwidth}
		\label{figure:betamle-etaF}
		\includegraphics[width=\linewidth]{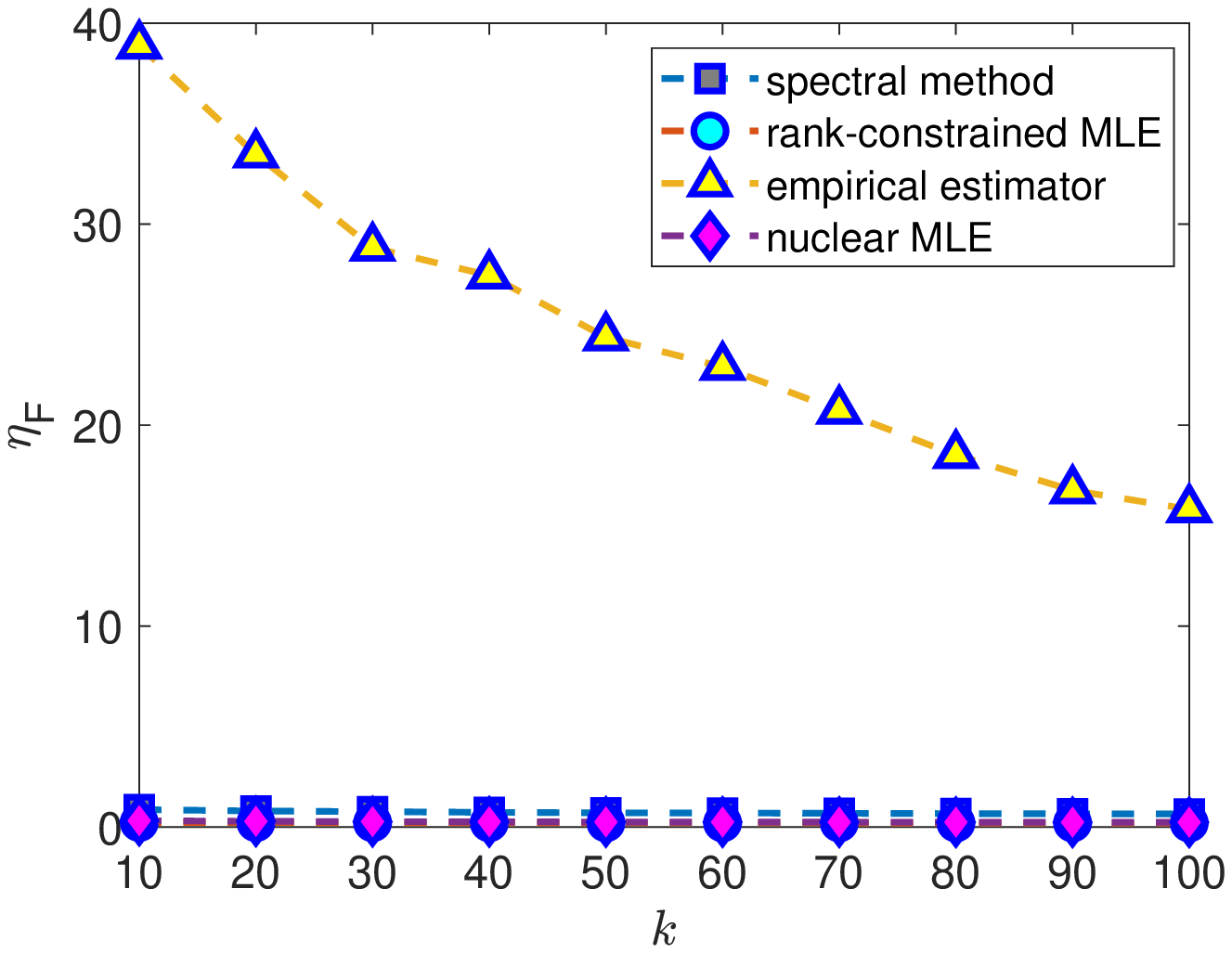}
	\end{subfigure}
	\begin{subfigure}{.33\textwidth}
		\label{figure:betamle-etaKL}
		\includegraphics[width=\linewidth]{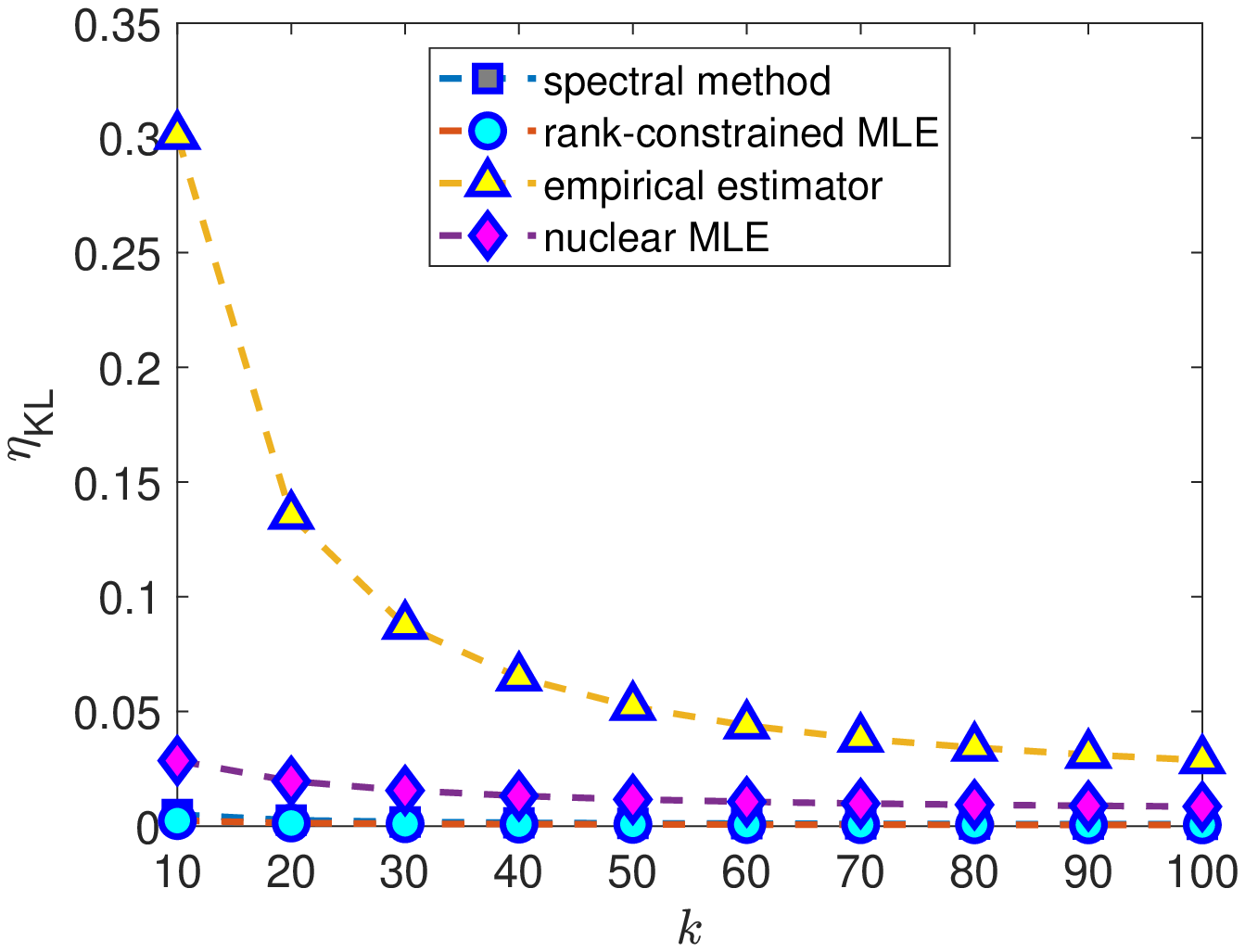}
	\end{subfigure}
	\begin{subfigure}{.33\textwidth}
		\label{figure:betamle-etaUV}
		\includegraphics[width=\linewidth]{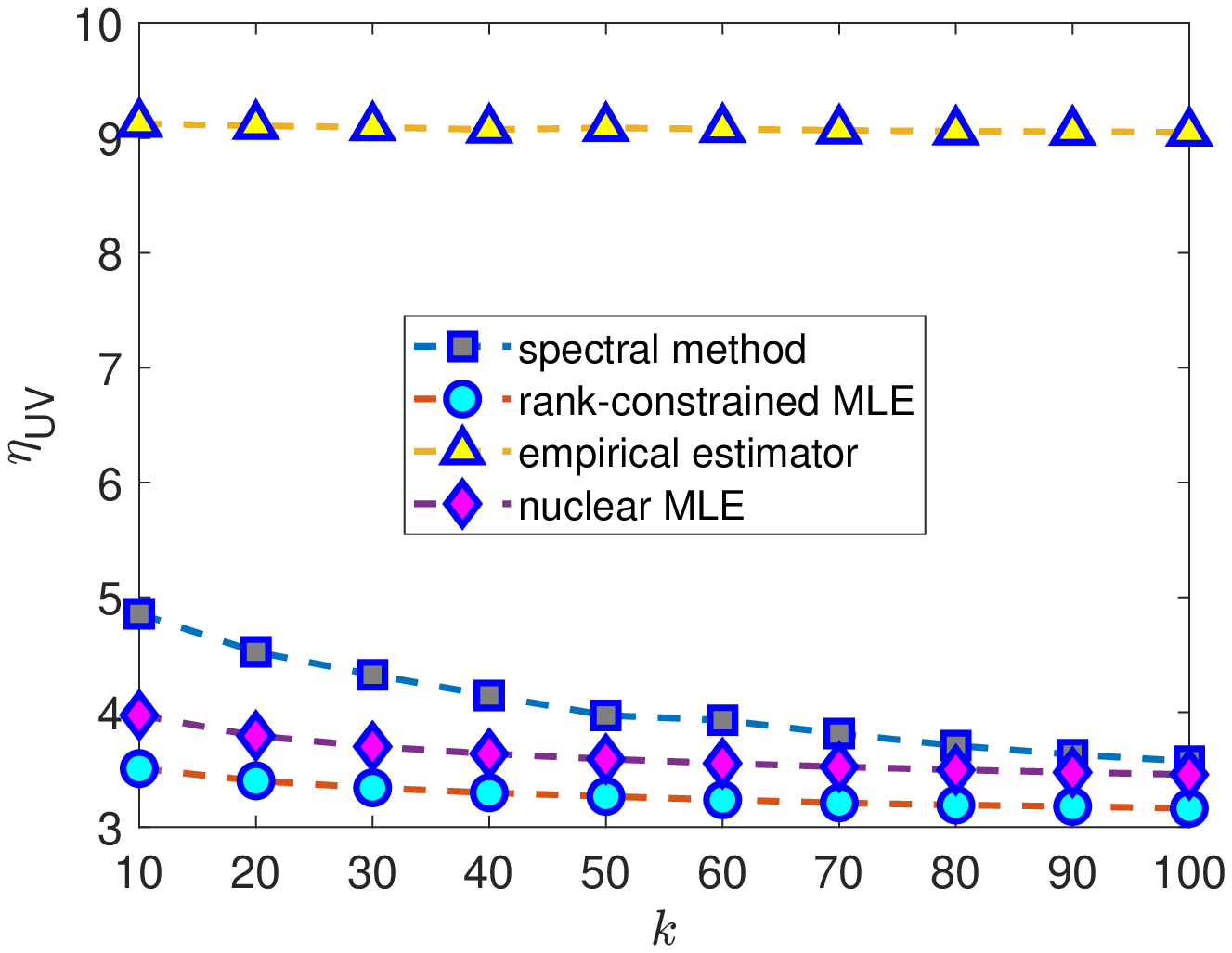}
	\end{subfigure}
	\\
	
	\begin{subfigure}{.33\textwidth}
		\label{figure:betasvd-etaF}
		\includegraphics[width=\linewidth]{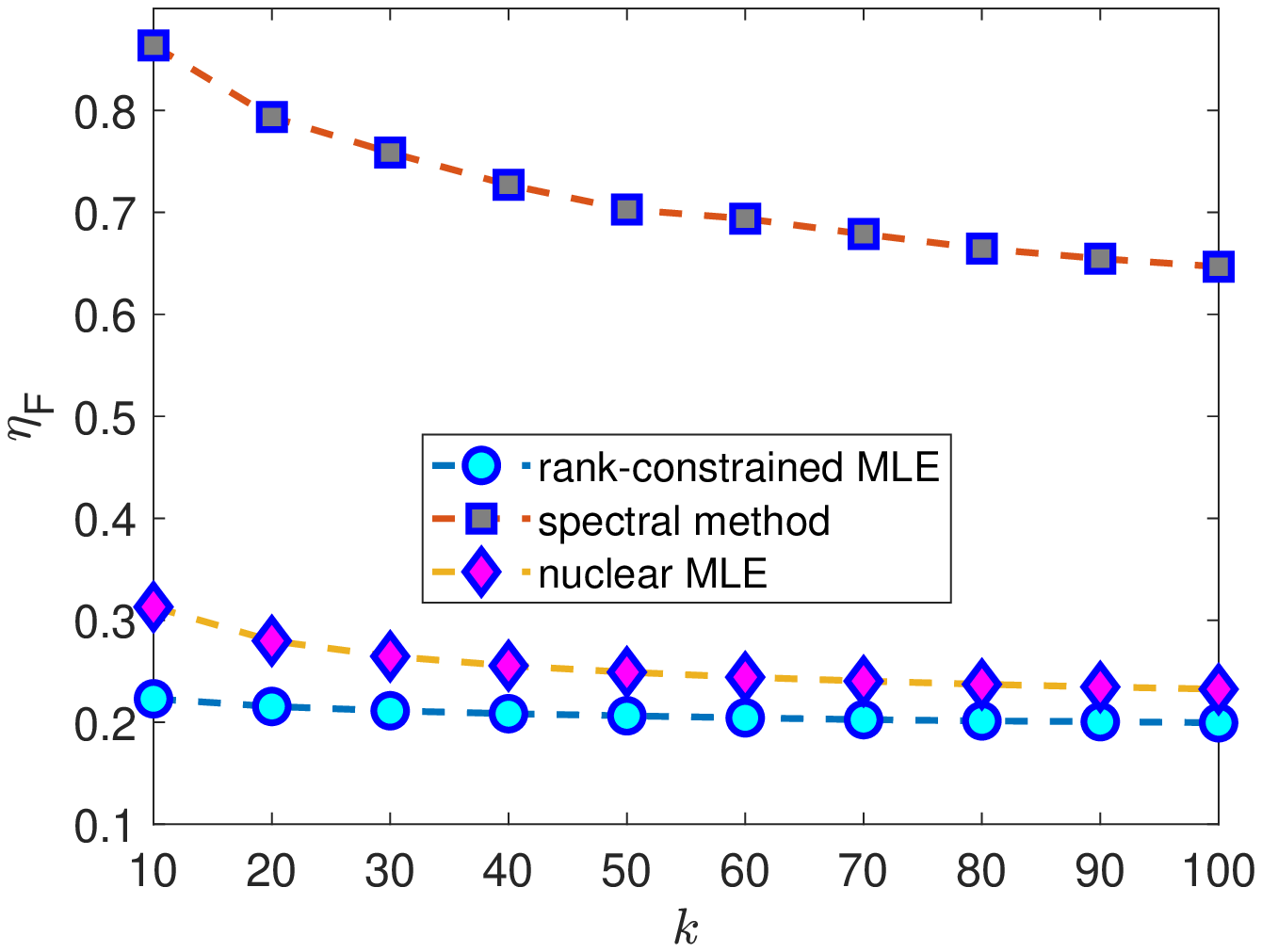}
	\end{subfigure}
	\begin{subfigure}{.33\textwidth}
		\label{figure:betasvd-etaKL}
		\includegraphics[width=\linewidth]{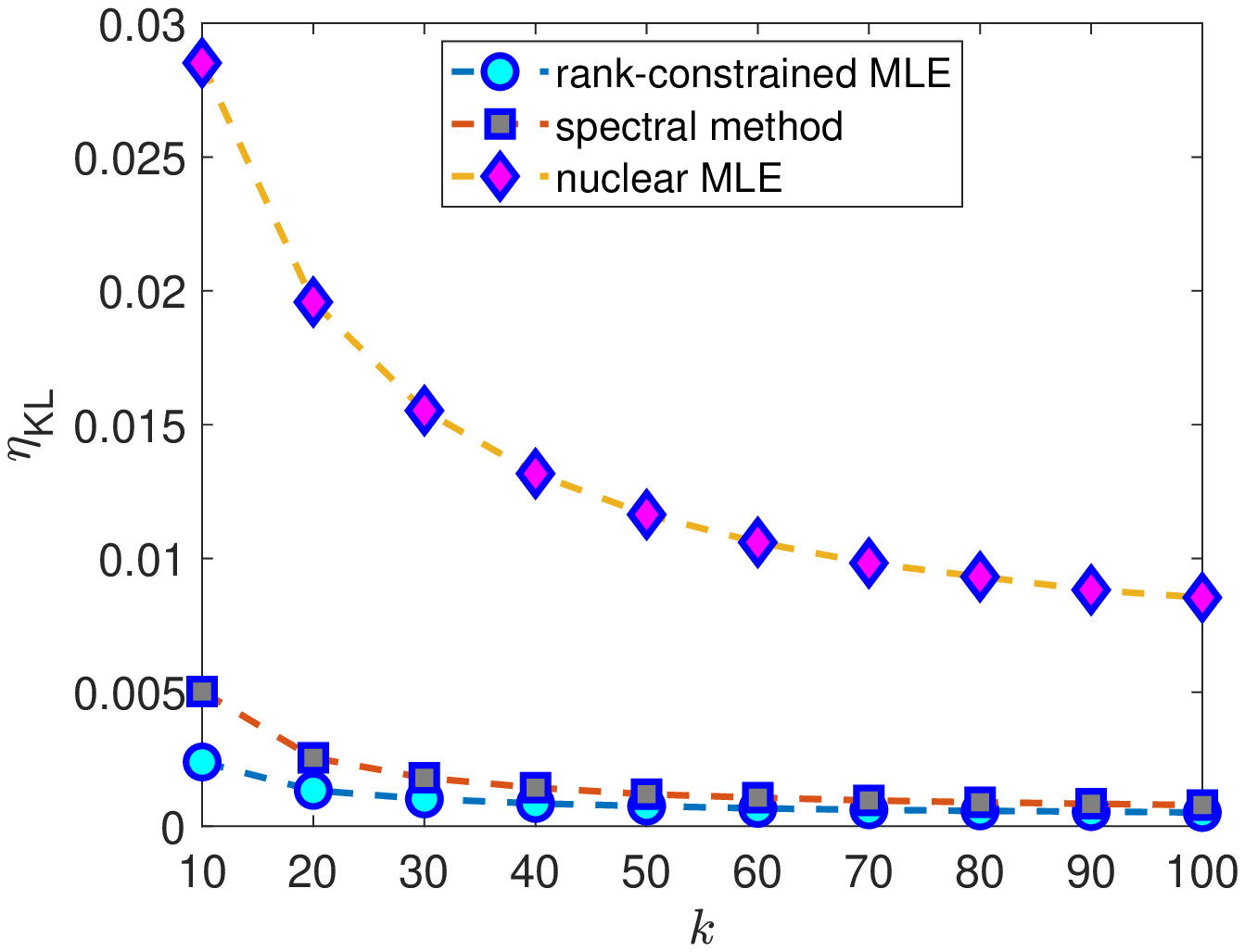}
	\end{subfigure}
	\begin{subfigure}{.33\textwidth}
		\label{figure:betasvd-etaUV}
		\includegraphics[width=\linewidth]{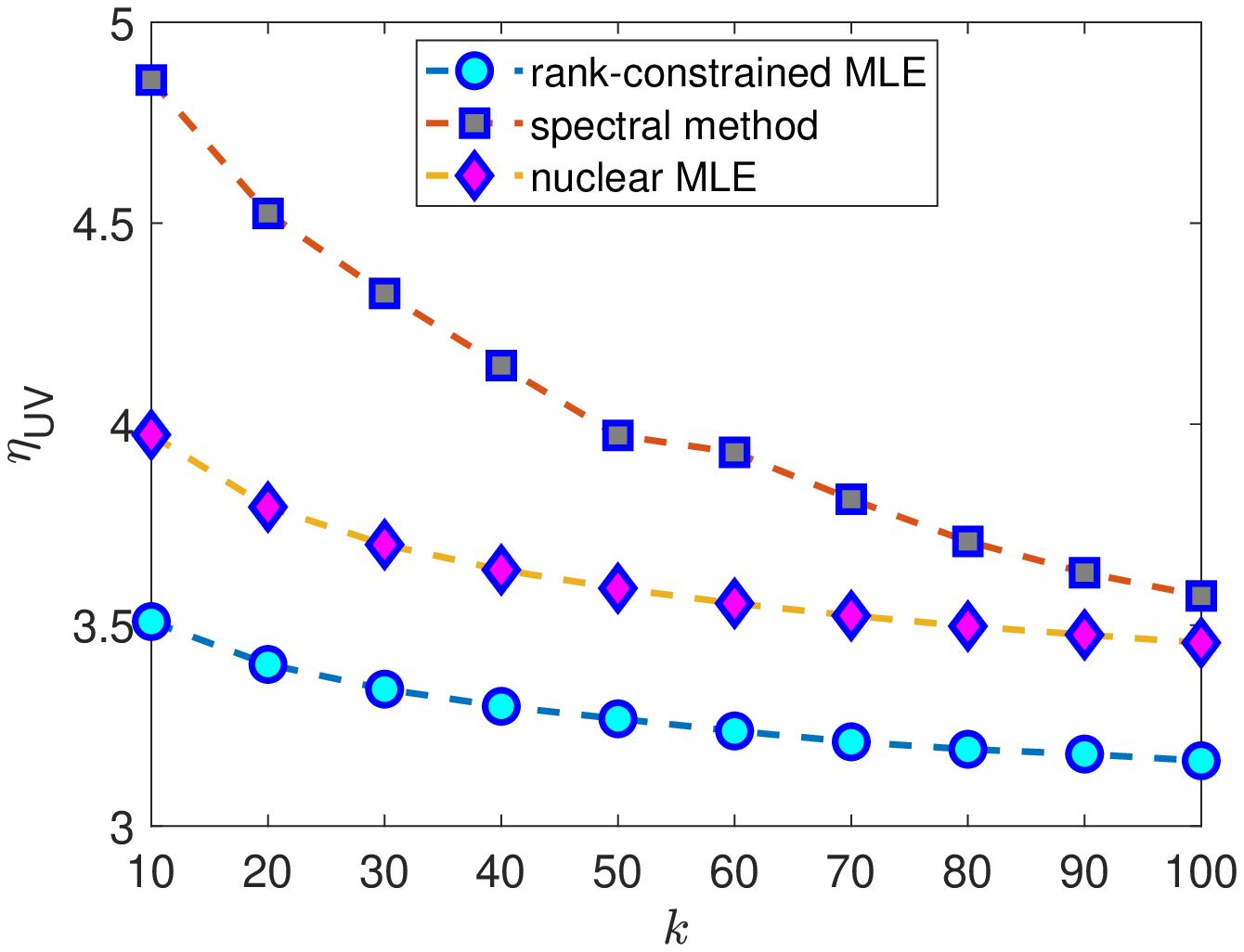}
	\end{subfigure}
	
	\caption{The first row compares the rank-constrained estimator, nuclear norm penalized estimator, spectral method, and empirical estimator in terms of $\eta_F = \norm{\P - \widehat \P}_F^2, \eta_{KL}= D_{\KL}(\bP,\widehat\bP)$, and $\eta_{UV} = \max\bigl\{ \norm{\sin \Theta(\widehat{\U},\U,)}_F^2, \norm{\sin \Theta(\widehat{\V},\V)}_F^2 \bigr\}$ with imbalanced
		invariant distribution. The second row provides the zoomed plots of the first row without the empirical estimator. Here, $n = {\rm round}(k  rp\log p)$ with $p = 1,000$, $r = 10$ and $k$ ranging from $10$ to $100$. 
	}
	\label{figure:betamlevsrank}
\end{figure*}

\subsection{Experiments with Manhattan Taxi data}
\label{sec:taxi}

In this experiment, we analyze a real dataset of $1.1\times 10^7$ trip records of NYC Yellow cabs (Link: \url{https://s3.amazonaws.com/nyc-tlc/trip+data/yellow_tripdata_2016-01.csv}) in January 2016. Our goal is to partition the Manhattan island into several areas, in each of which the taxi customers share similar destination preference. This can provide guidance for balancing the supply and demand of taxi service and optimizing the allocation of traffic resources.

We discretize the Manhattan island into a fine grid and model each cell of the grid as a state of the Markov chain; each taxi trip can thus be viewed as a state transition of this 
Markov chain \citep{yang2017dynamic, benson2017spacey, liu2012understanding}. For stability concerns, our model ignores the cells that have fewer than $1,000$ taxi visits. Given that the traffic dynamics typically vary over time, 
we fit the MC under three periods of a day, i.e., $06:00\sim 11:59$ (morning), $12:00 \sim 17:59$ (afternoon) and $18:00 \sim 23:59$ (evening), where the number of the active states $p = 803$, $999$ and $1,079$ respectively. 
We apply the rank-constrained likelihood approach to obtain the estimator $\widehat \bP^r$ of the transition matrix, and then apply $k$-means to the left singular subspaces of $\widehat{\bP}^r$ to classify all the states into several clusters. 
Figure \ref{figure:lr4} presents the clustering result with $r = 4$ 
and $k = 4$ 
for the three periods of a day. 

First of all, we notice that the locations within the same cluster are close with each other in geographical distance. This is non-trivial: we do not have exposure to GPS location in the clustering analysis. This implies that taxi customers in neighboring locations have similar destination preference, which is consistent with common sense. Furthermore, to track the variation of the traffic dynamics over time, 
Figure \ref{figure:pb4} visualizes the distribution of the destination choice that is correspondent to the center of the green cluster in the morning, afternoon and evening respectively. We identify the varying popular destinations in different periods of the day and provide corresponding explanations in the following table: 
\begin{table}[H]
	\centering 
	\def\arraystretch{0.7}	
	\begin{tabular}{c@{\hskip 1cm}p{7cm}@{\hskip 1cm}p{4cm}}
		\hline\hline
		Time & \hfil Popular Destinations & \hfil Explanation \\ \hline\hline
		Morning & New York--Presbyterian Medical Center,\vspace{-.3cm} \newline \hfil 42--59 St. Park Ave, Penn Station & hospitals, workplaces,\vspace{-.3cm} \newline \quad the train station \\ \hline
		Afternoon & \centering 66 St. Broadway & \hfil lunch, afternoon break,\vspace{-.3cm} \newline short trips \\ \hline
		Evening & \centering Penn Station & go home \\\hline\hline 
	\end{tabular}
\end{table}

Finally, it might be tempting to model the taxi trips by an HMM, where regions of Manhattan correspond to hidden states. However, such a region is always part of the current observation (i.e., location of taxi):  It is observable and is not a hidden state that has to be inferred from all past observations. As a result, although both HMM and the low-rank Markov model could apply to taxi trips, the low-rank Markov model is simpler and more accurate. 


\begin{figure*}[t!]
	\centering
	\begin{subfigure}{.32\textwidth}
		\label{figure:lrh1}
		\includegraphics[width=\linewidth]{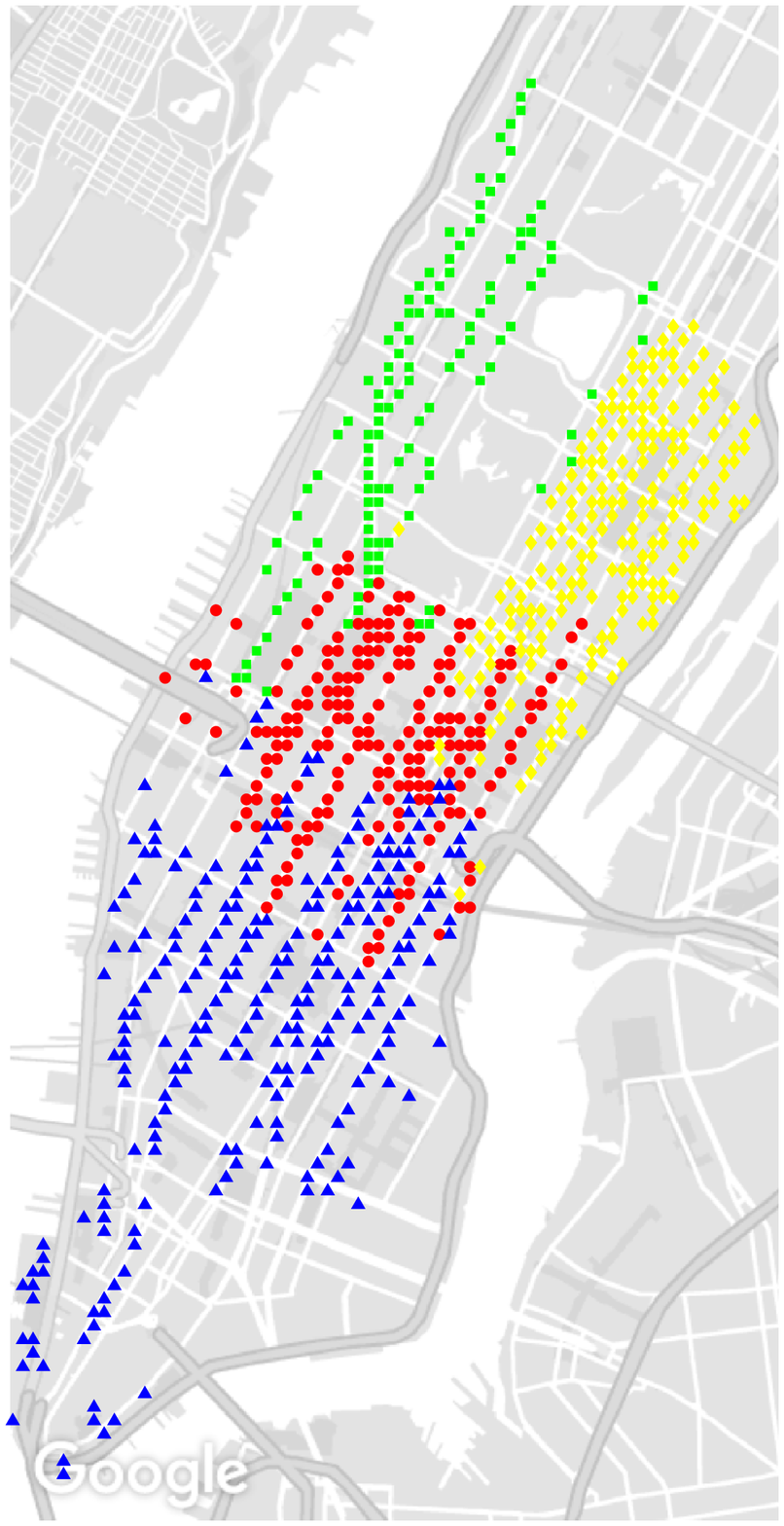}
	\end{subfigure} 
	\begin{subfigure}{.32\textwidth}
		\label{figure:lrh2}
		\includegraphics[width=\linewidth]{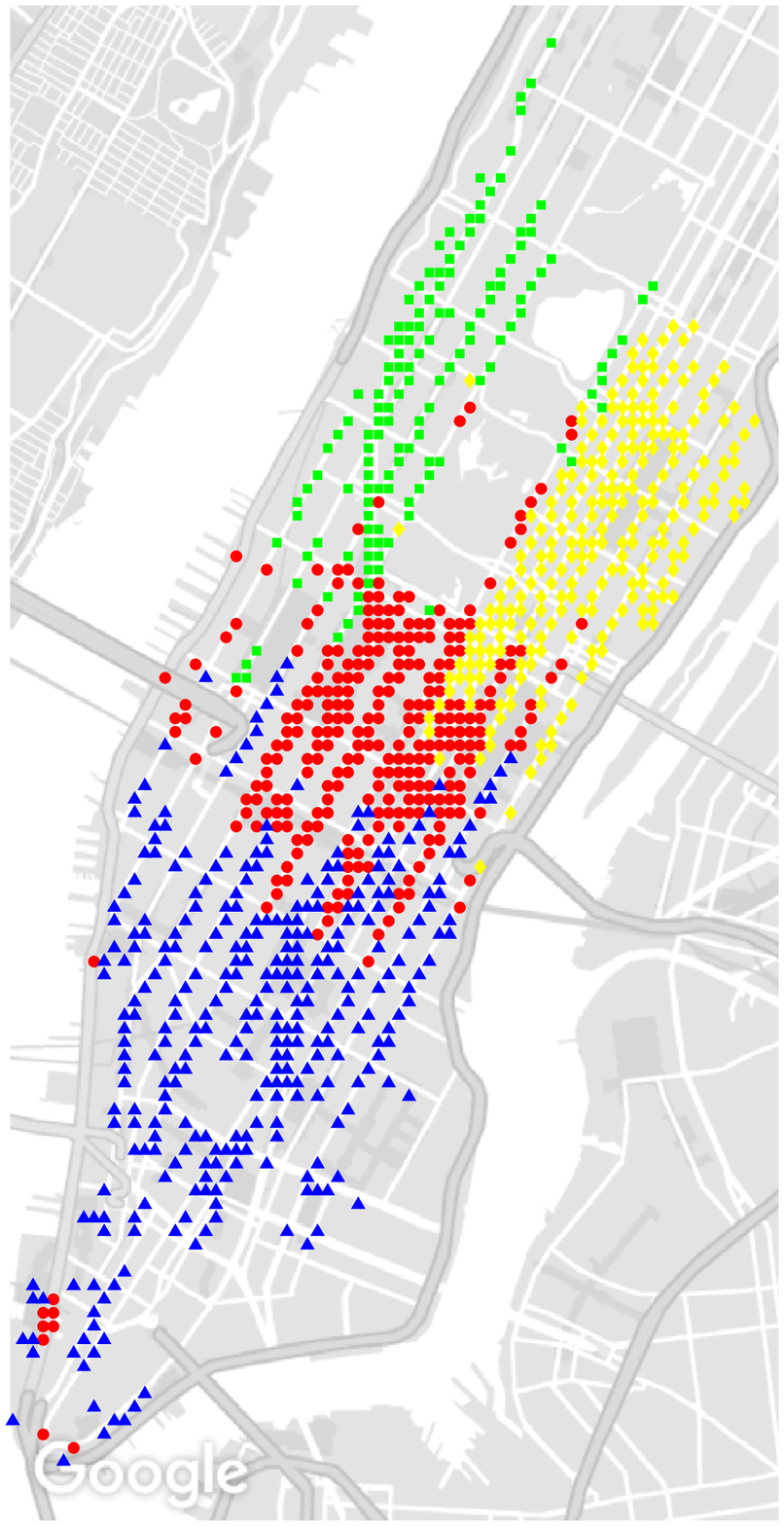}
	\end{subfigure}
	\begin{subfigure}{.32\textwidth}
		\label{figure:lrh3}
		\includegraphics[width=\linewidth]{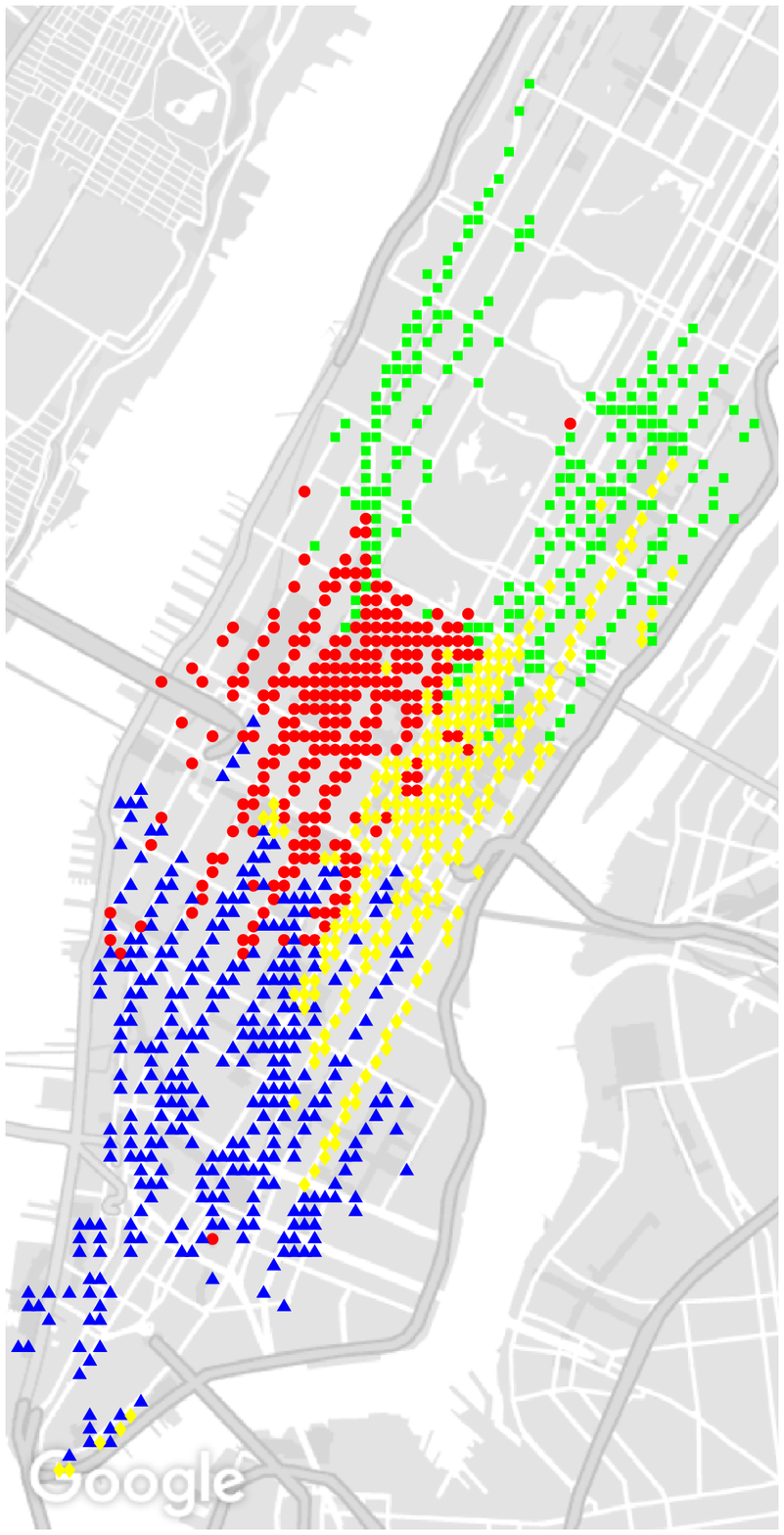}
	\end{subfigure}
	\caption{
		The meta-states compression of Manhattan traffic network via rank-constrained approach with $ r = 4$: mornings (left), afternoons (middle) and evenings (right). Each color or symbol represents a meta-state. One
		can see the day-time state aggregation
		results differ significantly from that of the evening time.}
	\label{figure:lr4}
\end{figure*}

\begin{figure*}[tb]
	\centering
	\begin{subfigure}{.32\textwidth}
		\label{figure:lr6h1}
		\includegraphics[width=\linewidth]{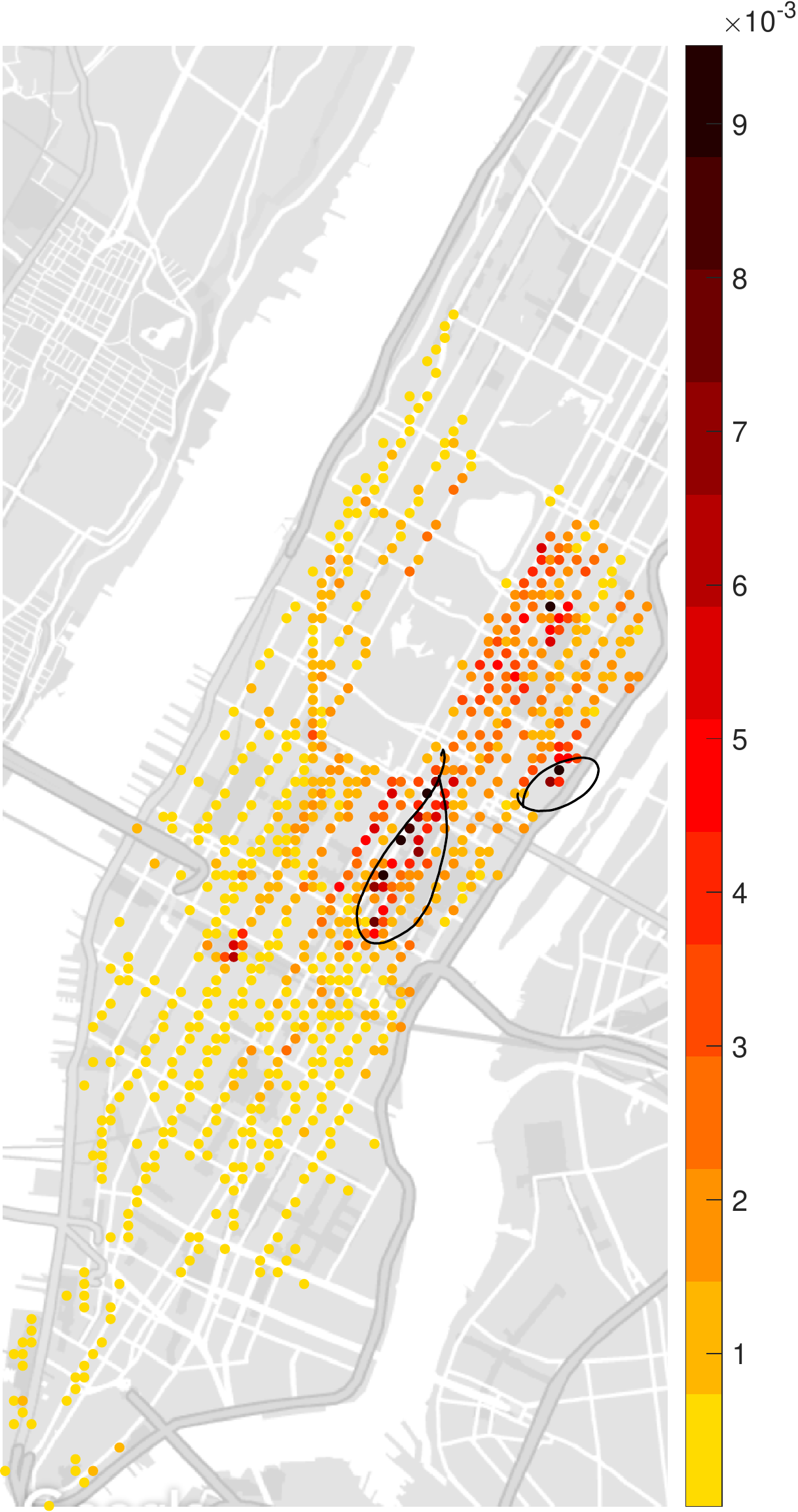}
	\end{subfigure} 
	\begin{subfigure}{.32\textwidth}
		\label{figure:lr6h2}
		\includegraphics[width=\linewidth]{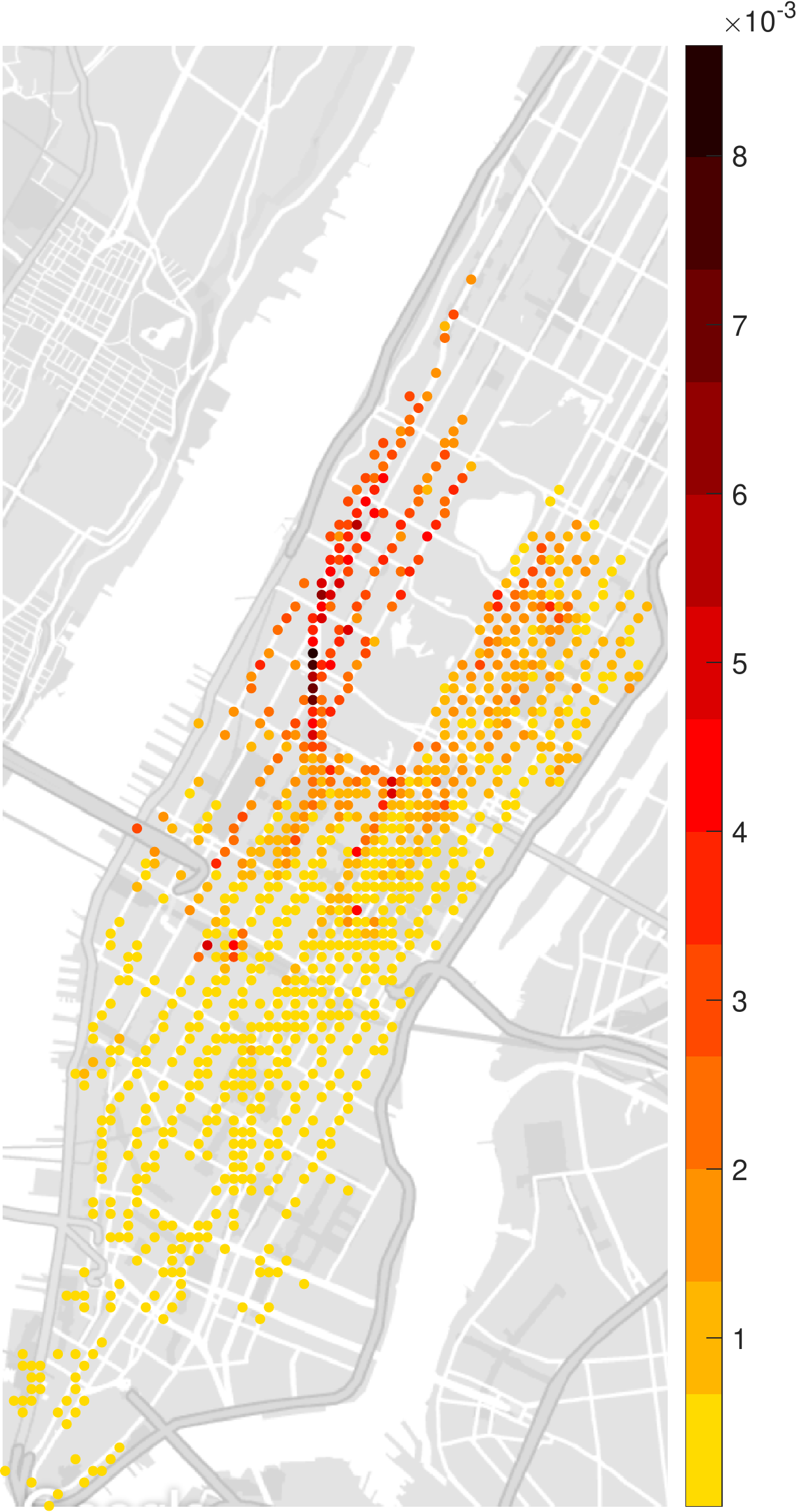}
	\end{subfigure}
	\begin{subfigure}{.32\textwidth}
		\label{figure:lr6h3}
		\includegraphics[width=\linewidth]{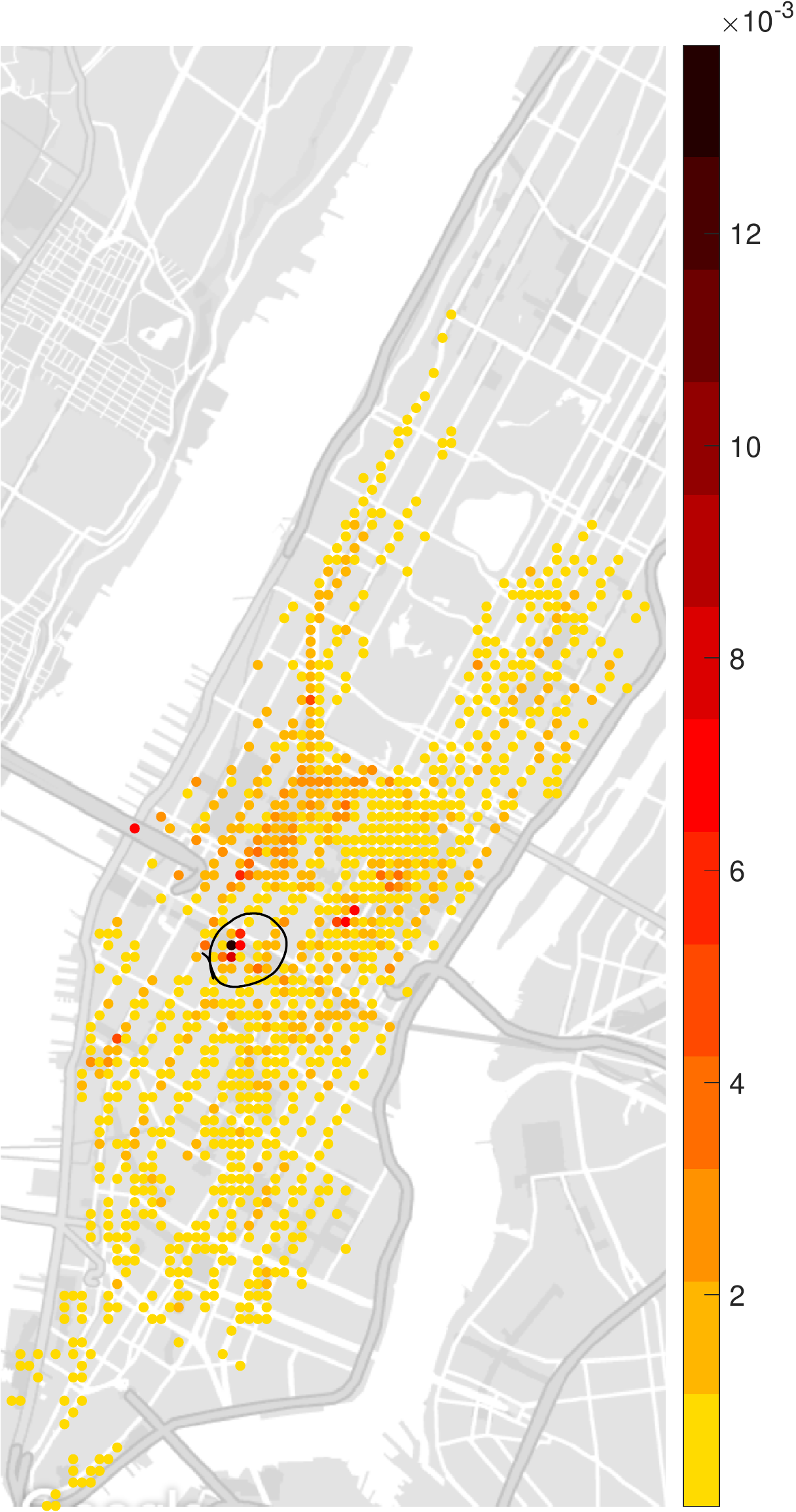}
	\end{subfigure}
	\caption{Visualization of the destination distributions corresponding to the pick-up locations in the green clusters in Figure \ref{figure:lr4}: mornings (left), afternoons (middle) and evenings (right).}
	\label{figure:pb4}
\end{figure*}

\section{Conclusion}
\label{sec:conclusion}
This paper studies the recovery and state compression of low-rank Markov chains from empirical trajectories via a rank-constrained likelihood approach. We provide statistical upper bounds for the $\ell_2$ risk and Kullback-Leiber divergence between the estimator and the true probability transition matrix for the proposed estimator. 
Then, a novel DC programming algorithm is developed to solve the associated rank-constrained optimization problem. The proposed algorithm non-trivially combines several recent optimization techniques, such as the penalty approach, the proximal DC algorithm, and the multi-block sGS-ADMM. We further study a new class of majorized indefinite-proximal DC algorithms for solving general non-convex non-smooth DC programming problems and provide a unified convergence analysis.
Experiments on simulated data illustrate the merits of our approach.





\bibliographystyle{informs2014}
\bibliography{./arXiv-lowrankMarkov}

\begin{thebibliography}{70}
\providecommand{\natexlab}[1]{#1}
\providecommand{\url}[1]{\texttt{#1}}
\providecommand{\urlprefix}{URL }

\bibitem[{Adamczak(2008)}]{Ada08}
Adamczak R (2008) A tail inequality for suprema of unbounded empirical
  processes with applications to {M}arkov chains. \emph{Electronic Journal of
  Probability} 13(34):1000--1034.

\bibitem[{Anandkumar et~al.(2014)Anandkumar, Ge, Hsu, Kakade, \protect\BIBand{}
  Telgarsky}]{anandkumar2014tensor}
Anandkumar A, Ge R, Hsu D, Kakade SM, Telgarsky M (2014) Tensor decompositions
  for learning latent variable models. \emph{The Journal of Machine Learning
  Research} 15(1):2773--2832.

\bibitem[{Azizzadenesheli et~al.(2016)Azizzadenesheli, Lazaric,
  \protect\BIBand{} Anandkumar}]{azizzadenesheli2016reinforcement1}
Azizzadenesheli K, Lazaric A, Anandkumar A (2016) Reinforcement learning in
  rich-observation {MDP}s using spectral methods. \emph{arXiv preprint
  arXiv:1611.03907} .

\bibitem[{Benson et~al.(2017)Benson, Gleich, \protect\BIBand{}
  Lim}]{benson2017spacey}
Benson AR, Gleich DF, Lim LH (2017) The spacey random walk: A stochastic
  process for higher-order data. \emph{SIAM Review} 59(2):321--345.

\bibitem[{Bertsekas(1995)}]{bertsekas1995dynamic}
Bertsekas DP (1995) \emph{Dynamic Programming and Optimal Control}, volume~1
  (Athena Scientific, Belmont, MA).

\bibitem[{Bertsekas \protect\BIBand{} Tsitsiklis(1995)}]{bertsekas1995neuro}
Bertsekas DP, Tsitsiklis JN (1995) Neuro-dynamic programming: an overview.
  \emph{Proceedings of the 34th IEEE Conference on Decision and Control},
  volume~1, 560--564 (IEEE).

\bibitem[{Boucheron et~al.(2013)Boucheron, Lugosi, \protect\BIBand{}
  Massart}]{BLM13}
Boucheron S, Lugosi G, Massart P (2013) \emph{Concentration Inequalities: A
  Nonasymptotic Theory of Independence} (Oxford university press).

\bibitem[{Br{\'e}maud(1999)}]{Bre99}
Br{\'e}maud P (1999) \emph{Markov chains: Gibbs fields, Monte Carlo simulation,
  and queues}, volume~31 (Springer Science \& Business Media).

\bibitem[{Buchholz(1994)}]{buchholz1994exact}
Buchholz P (1994) Exact and ordinary lumpability in finite {M}arkov chains.
  \emph{Journal of Applied Probability} 31(1):59--75.

\bibitem[{Cao \protect\BIBand{} Xie(2016)}]{cao2016poisson}
Cao Y, Xie Y (2016) Poisson matrix recovery and completion. \emph{IEEE
  Transactions on Signal Processing} 64(6):1609--1620.

\bibitem[{Chen et~al.(2016)Chen, He, Ye, \protect\BIBand{}
  Yuan}]{chen2016direct}
Chen C, He B, Ye Y, Yuan X (2016) The direct extension of admm for multi-block
  convex minimization problems is not necessarily convergent.
  \emph{Mathematical Programming} 155(1-2):57--79.

\bibitem[{Chen et~al.(2017)Chen, Sun, \protect\BIBand{}
  Toh}]{chen2017efficient}
Chen L, Sun D, Toh KC (2017) An efficient inexact symmetric {G}auss--{S}eidel
  based majorized {ADMM} for high-dimensional convex composite conic
  programming. \emph{Mathematical Programming} 161(1-2):237--270.

\bibitem[{Chodera et~al.(2007)Chodera, Singhal, Pande, Dill, \protect\BIBand{}
  Swope}]{chodera2007automatic}
Chodera JD, Singhal N, Pande VS, Dill KA, Swope WC (2007) Automatic discovery
  of metastable states for the construction of {M}arkov models of
  macromolecular conformational dynamics. \emph{The Journal of Chemical
  Physics} 126(15).

\bibitem[{Coifman et~al.(2008)Coifman, Kevrekidis, Lafon, Maggioni,
  \protect\BIBand{} Nadler}]{coifman2008diffusion}
Coifman RR, Kevrekidis IG, Lafon S, Maggioni M, Nadler B (2008) Diffusion maps,
  reduction coordinates, and low dimensional representation of stochastic
  systems. \emph{Multiscale Modeling \& Simulation} 7(2):842--864.

\bibitem[{Deng \protect\BIBand{} Huang(2012)}]{deng2012model}
Deng K, Huang D (2012) Model reduction of {M}arkov chains via low-rank
  approximation. \emph{American Control Conference (ACC), 2012}, 2651--2656
  (IEEE).

\bibitem[{Deng et~al.(2011)Deng, Mehta, \protect\BIBand{}
  Meyn}]{deng2011optimal}
Deng K, Mehta PG, Meyn SP (2011) Optimal {K}ullback-{L}eibler aggregation via
  spectral theory of {M}arkov chains. \emph{IEEE Transactions on Automatic
  Control} 56(12):2793--2808.

\bibitem[{E et~al.(2008)E, Li, \protect\BIBand{} Vanden-Eijnden}]{e2008optimal}
E W, Li T, Vanden-Eijnden E (2008) Optimal partition and effective dynamics of
  complex networks. \emph{Proceedings of the National Academy of Sciences}
  105(23):7907--7912.

\bibitem[{Fan et~al.(2018)Fan, Liu, Sun, \protect\BIBand{} Zhang}]{FLS18}
Fan J, Liu H, Sun Q, Zhang T (2018) {I-LAMM} for sparse learning: Simultaneous
  control of algorithmic complexity and statistical error. \emph{The Annals of
  Statsitics} 46(2):814--841.

\bibitem[{Ferreira et~al.(2017)Ferreira, Khoo, \protect\BIBand{}
  Singer}]{ferreira2017semidefinite}
Ferreira JFB, Khoo Y, Singer A (2017) Semidefinite programming approach for the
  quadratic assignment problem with a sparse graph. \emph{Computational
  Optimization and Applications} 1--36.

\bibitem[{Fill(1991)}]{Fil91}
Fill JA (1991) Eigenvalue bounds on convergence to stationarity for
  nonreversible {M}arkov chains, with an application to the exclusion process.
  \emph{The Annals of Applied Probability} 1(1):62--87.

\bibitem[{Gao \protect\BIBand{} Sun(2010)}]{sun2010majorized}
Gao Y, Sun D (2010) A majorized penalty approach for calibrating rank
  constrained correlation matrix problems. \emph{technical reprot} .

\bibitem[{Hall et~al.(2016)Hall, Raskutti, \protect\BIBand{}
  Willett}]{hall2016inference}
Hall EC, Raskutti G, Willett R (2016) Inference of high-dimensional
  autoregressive generalized linear models. \emph{arXiv preprint
  arXiv:1605.02693} .

\bibitem[{Han et~al.(2015)Han, Jiao, \protect\BIBand{}
  Weissman}]{han2015minimax}
Han Y, Jiao J, Weissman T (2015) Minimax estimation of discrete distributions
  under $\ell_1$ loss. \emph{IEEE Transactions on Information Theory}
  61(11):6343--6354.

\bibitem[{Hao et~al.(2018)Hao, Orlitsky, \protect\BIBand{} Pichapati}]{HOP18}
Hao Y, Orlitsky A, Pichapati V (2018) On learning {M}arkov chains.
  \emph{Advances in Neural Information Processing Systems}, 648--657.

\bibitem[{Hsu et~al.(2012)Hsu, Kakade, \protect\BIBand{}
  Zhang}]{hsu2012spectral}
Hsu D, Kakade SM, Zhang T (2012) A spectral algorithm for learning hidden
  {M}arkov models. \emph{Journal of Computer and System Sciences}
  78(5):1460--1480.

\bibitem[{Huang et~al.(2016)Huang, Kakade, Kong, \protect\BIBand{}
  Valiant}]{huang2016recovering}
Huang Q, Kakade SM, Kong W, Valiant G (2016) Recovering structured probability
  matrices. \emph{arXiv preprint arXiv:1602.06586} .

\bibitem[{Jiang et~al.(2018)Jiang, Fan, \protect\BIBand{} Sun}]{JFS18}
Jiang B, Fan J, Sun Q (2018) Bernstein's inequality for general markov chains.
  \emph{arXiv:1805.10721} .

\bibitem[{Jiang et~al.(2014)Jiang, Sun, \protect\BIBand{}
  Toh}]{jiang2014partial}
Jiang K, Sun D, Toh KC (2014) A partial proximal point algorithm for nuclear
  norm regularized matrix least squares problems. \emph{Mathematical
  Programming Computation} 6(3):281--325.

\bibitem[{Jiang et~al.(2015)Jiang, Raskutti, \protect\BIBand{}
  Willett}]{jiang2015minimax}
Jiang X, Raskutti G, Willett R (2015) Minimax optimal rates for poisson inverse
  problems with physical constraints. \emph{IEEE Transactions on Information
  Theory} 61(8):4458--4474.

\bibitem[{Keshavan et~al.(2010)Keshavan, Montanari, \protect\BIBand{}
  Oh}]{keshavan2010matrix}
Keshavan RH, Montanari A, Oh S (2010) Matrix completion from a few entries.
  \emph{IEEE Transactions on Information Theory} 56(6):2980--2998.

\bibitem[{Koltchinskii et~al.(2011)Koltchinskii, Lounici, \protect\BIBand{}
  Tsybakov}]{KLT11}
Koltchinskii V, Lounici K, Tsybakov AB (2011) Nuclear-norm penalization and
  optimal rates for noisy low-rank matrix completion. \emph{The Annals of
  Statistics} 39(5):2302--2329.

\bibitem[{Kontorovich(2007)}]{Kon07}
Kontorovich L (2007) \emph{Measure concentration of strongly mixing processes
  with applications}. Ph.D. thesis, Carnegie Mellon University, School of
  Computer Science, Machine Learning.

\bibitem[{Kontorovich \protect\BIBand{} Ramanan(2008)}]{KRa08}
Kontorovich LA, Ramanan K (2008) Concentration inequalities for dependent
  random variables via the martingale method. \emph{The Annals of Probability}
  36(6):2126--2158.

\bibitem[{Lam et~al.(2018)Lam, Marron, Sun, \protect\BIBand{}
  Toh}]{lam2017fast}
Lam XY, Marron JS, Sun D, Toh KC (2018) Fast algorithms for large-scale
  generalized distance weighted discrimination. \emph{Journal of Computational
  and Graphical Statistics} 27(2):368--379,
  \urlprefix\url{http://dx.doi.org/10.1080/10618600.2017.1366915}.

\bibitem[{Le~Thi et~al.(2017)Le~Thi, Le, Phan, \protect\BIBand{}
  Tran}]{le2017stochastic}
Le~Thi HA, Le HM, Phan DN, Tran B (2017) Stochastic {DCA} for the large-sum of
  non-convex functions problem and its application to group variable selection
  in classification. \emph{International Conference on Machine Learning},
  3394--3403.

\bibitem[{Le~Thi \protect\BIBand{} Pham~Dinh(2018)}]{lethi2018}
Le~Thi HA, Pham~Dinh T (2018) {DC} programming and {DCA}: thirty years of
  developments. \emph{Mathematical Programming} 1--64,
  \urlprefix\url{http://dx.doi.org/10.1007/s10107-018-1235-y}.

\bibitem[{Le~Thi et~al.(2012)Le~Thi, Pham~Dinh, \protect\BIBand{}
  Van~Ngai}]{le2012exact}
Le~Thi HA, Pham~Dinh T, Van~Ngai H (2012) Exact penalty and error bounds in
  {DC} programming. \emph{Journal of Global Optimization} 52(3):509--535.

\bibitem[{Ledoux \protect\BIBand{} Talagrand(2013)}]{ledoux2013probability}
Ledoux M, Talagrand M (2013) \emph{{P}robability in {B}anach {S}paces:
  {I}soperimetry and {P}rocesses} (Springer Science \& Business Media).

\bibitem[{Lehmann \protect\BIBand{} Casella(2006)}]{lehmann2006theory}
Lehmann EL, Casella G (2006) \emph{Theory of Point Estimation} (Springer
  Science \& Business Media).

\bibitem[{Levin \protect\BIBand{} Peres(2017)}]{LPe17}
Levin DA, Peres Y (2017) \emph{Markov Chains and Mixing Times}, volume 107
  (American Mathematical Soc.).

\bibitem[{Li et~al.(2016{\natexlab{a}})Li, Sun, \protect\BIBand{}
  Toh}]{li2016majorized}
Li M, Sun D, Toh KC (2016{\natexlab{a}}) A majorized {ADMM} with indefinite
  proximal terms for linearly constrained convex composite optimization.
  \emph{SIAM Journal on Optimization} 26(2):922--950.

\bibitem[{Li et~al.(2016{\natexlab{b}})Li, Sun, \protect\BIBand{}
  Toh}]{li2016schur}
Li X, Sun D, Toh KC (2016{\natexlab{b}}) A {S}chur complement based
  semi-proximal {ADMM} for convex quadratic conic programming and extensions.
  \emph{Mathematical Programming} 155(1-2):333--373.

\bibitem[{Liu et~al.(2012)Liu, Kang, Gao, Xiao, \protect\BIBand{}
  Tian}]{liu2012understanding}
Liu Y, Kang C, Gao S, Xiao Y, Tian Y (2012) Understanding intra-urban trip
  patterns from taxi trajectory data. \emph{Journal of Geographical Systems}
  14(4):463--483.

\bibitem[{Loh \protect\BIBand{} Wainwright(2015)}]{LWa15}
Loh PL, Wainwright MJ (2015) Regularized {}m-estimators with nonconvexity:
  Statistical and algorithmic theory for local optima. \emph{The Journal of
  Machine Learning Research} 16(1):559--616.

\bibitem[{Marton(1996)}]{Mar96}
Marton K (1996) Bounding $\bar d$-distance by informational divergence: A
  method to prove measure concentration. \emph{The Annals of Probability}
  24(2):857--866.

\bibitem[{Moore(1991)}]{moore1991variable}
Moore AW (1991) Variable resolution dynamic programming: Efficiently learning
  action maps in multivariate real-valued state-spaces. \emph{Machine Learning
  Proceedings 1991}, 333--337 (Elsevier).

\bibitem[{Negahban et~al.(2016)Negahban, Oh, \protect\BIBand{}
  Shah}]{negahban2016rank}
Negahban S, Oh S, Shah D (2016) Rank centrality: Ranking from pairwise
  comparisons. \emph{Operations Research} 65(1):266--287.

\bibitem[{Negahban \protect\BIBand{} Wainwright(2011)}]{NWa11}
Negahban S, Wainwright MJ (2011) Estimation of (near) low-rank matrices with
  noise and high-dimensional scaling. \emph{The Annals of Statistics}
  39(2):1069--1097.

\bibitem[{Negahban \protect\BIBand{} Wainwright(2012)}]{negahban2012restricted}
Negahban S, Wainwright MJ (2012) Restricted strong convexity and weighted
  matrix completion: Optimal bounds with noise. \emph{Journal of Machine
  Learning Research} 13(May):1665--1697.

\bibitem[{Negahban et~al.(2012)Negahban, Ravikumar, Wainwright,
  \protect\BIBand{} Yu}]{NRW12}
Negahban SN, Ravikumar P, Wainwright MJ, Yu B (2012) A unified framework for
  high-dimensional analysis of $ m $-estimators with decomposable regularizers.
  \emph{Statistical Science} 27(4):538--557.

\bibitem[{Newman(2013)}]{newman2013spectral}
Newman ME (2013) Spectral methods for community detection and graph
  partitioning. \emph{Physical Review E} 88(4):042822.

\bibitem[{Paulin(2015)}]{Pau15}
Paulin D (2015) Concentration inequalities for markov chains by {M}arton
  couplings and spectral methods. \emph{Electronic Journal of Probability} 20.

\bibitem[{Pham~Dinh \protect\BIBand{} Le~Thi(1997)}]{tao1997convex}
Pham~Dinh T, Le~Thi Ha (1997) Convex analysis approach to {DC} programming:
  Theory, algorithms and applications. \emph{Acta Mathematica Vietnamica}
  22(1):289--355.

\bibitem[{Pham~Dinh \protect\BIBand{} Le~Thi(2005)}]{tao2005dc}
Pham~Dinh T, Le~Thi Ha (2005) The {DC} (difference of convex functions)
  programming and {DCA} revisited with {DC} models of real world nonconvex
  optimization problems. \emph{Annals of Operations Research} 133(1-4):23--46.

\bibitem[{Rockafellar(2015)}]{rockafellar2015convex}
Rockafellar RT (2015) \emph{Convex Analysis} (Princeton University Press).

\bibitem[{Rohrdanz et~al.(2011)Rohrdanz, Zheng, Maggioni, \protect\BIBand{}
  Clementi}]{rohrdanz2011determination}
Rohrdanz MA, Zheng W, Maggioni M, Clementi C (2011) Determination of reaction
  coordinates via locally scaled diffusion map. \emph{The Journal of Chemical
  Physics} 134(12):03B624.

\bibitem[{Singh et~al.(1995)Singh, Jaakkola, \protect\BIBand{}
  Jordan}]{singh1995reinforcement}
Singh SP, Jaakkola T, Jordan MI (1995) Reinforcement learning with soft state
  aggregation. \emph{Advances in Neural Information Processing Systems},
  361--368.

\bibitem[{Steinhaus(1957)}]{steinhaus1957problem}
Steinhaus H (1957) The problem of estimation. \emph{The Annals of Mathematical
  Statistics} 28(3):633--648.

\bibitem[{Sutton \protect\BIBand{} Barto(1998)}]{sutton1998reinforcement}
Sutton RS, Barto AG (1998) \emph{Reinforcement Learning: An Introduction},
  volume~1 (MIT press Cambridge).

\bibitem[{Tropp(2011)}]{tropp2011freedman}
Tropp JA (2011) Freedman's inequality for matrix martingales. \emph{Electronic
  Communications in Probability} 16:262--270.

\bibitem[{Van~Dinh et~al.(2015)Van~Dinh, Kim, \protect\BIBand{}
  Jiao}]{van2015convergence}
Van~Dinh B, Kim DS, Jiao L (2015) Convergence analysis of algorithms for {DC}
  programming. \emph{arXiv preprint arXiv:1508.03899} .

\bibitem[{Wang \protect\BIBand{} Zou(2018)}]{wang2018another}
Wang B, Zou H (2018) Another look at distance-weighted discrimination.
  \emph{Journal of the Royal Statistical Society: Series B (Statistical
  Methodology)} 80(1):177--198.

\bibitem[{Watson(1993)}]{watson1993matrix}
Watson G (1993) On matrix approximation problems with {K}y {F}an$k$ norms.
  \emph{Numerical Algorithms} 5(5):263--272.

\bibitem[{Wen et~al.(2017)Wen, Chen, \protect\BIBand{} Pong}]{wen2017proximal}
Wen B, Chen X, Pong TK (2017) A proximal difference-of-convex algorithm with
  extrapolation. \emph{Computational Optimization and Applications} 1--28.

\bibitem[{Wolfer \protect\BIBand{} Kontorovich(2019{\natexlab{a}})}]{WKon19}
Wolfer G, Kontorovich A (2019{\natexlab{a}}) Estimating the mixing time of
  ergodic {M}arkov chains. \emph{COLT} .

\bibitem[{Wolfer \protect\BIBand{} Kontorovich(2019{\natexlab{b}})}]{WKo19}
Wolfer G, Kontorovich A (2019{\natexlab{b}}) Minimax learning of ergodic
  {M}arkov chains. \emph{International Conference on Algorithmic Learning
  Theory} .

\bibitem[{Yang et~al.(2017)Yang, Braverman, Zhao, \protect\BIBand{}
  Wang}]{yang2017dynamic}
Yang LF, Braverman V, Zhao T, Wang M (2017) Dynamic partition of complex
  networks. \emph{arXiv preprint arXiv:1705.07881} .

\bibitem[{Yu(1997)}]{yu1997assouad}
Yu B (1997) Assouad, fano, and le cam. \emph{Festschrift for Lucien Le Cam},
  423--435 (Springer).

\bibitem[{Zhang \protect\BIBand{} Wang(2017)}]{zhang2018optimal}
Zhang A, Wang M (2017) Spectral state compression of {M}arkov processes.
  \emph{arXiv preprint arXiv:1802.02920} .

\bibitem[{Zhang \protect\BIBand{} Wang(2019)}]{ZWa19}
Zhang A, Wang M (2019) Spectral state compression of markov processes.
  \emph{IEEE Transactions on Information Theory} .

\end{thebibliography}


\ECSwitch


\ECHead{Technical lemmas and proofs}

\section{Technical lemmas}

\begin{lemma}
		\label{lem:kl_to_l2}
		Given two discrete distributions $u, v \in \R^p$, if there exist $\alpha, \beta > 0$ such that $u_j \in \{0\}\cup [\alpha / p, \beta / p]$ and $v_j \in [\alpha / p, \beta / p]$ for any $j \in [p]$, then we have $$\DKL(u, v) \ge \{p\alpha / (2\beta ^ 2)\} \ltwonorm{u - v} ^ 2.$$ This implies that under Assumption \ref{asp:1}, for any $\bQ \in \cC$, $$\fnorm{\bP - \bQ} ^ 2 \le \frac{2\beta^2}{\alpha\pi_{\min}p}\DKL(\bP, \bQ).$$
\end{lemma}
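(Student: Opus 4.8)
The plan is to prove the scalar (per-distribution) inequality via a second-order, Bregman-divergence analysis of the convex function $h(x) = x\log x$, taking care of the indicator $1_{\{u_j \neq 0\}}$ built into the definition of $\DKL$, and then to lift it to the matrix statement by summing over rows with the stationary weights. The first step is a decomposition that isolates the support of $u$. Writing $D_h(a,b) := h(a) - h(b) - h'(b)(a-b) = a\log(a/b) - a + b$ for the Bregman divergence of $h$, and using $\sum_j u_j = \sum_j v_j = 1$, I would obtain
\[
\DKL(u,v) = \sum_{j:\, u_j \neq 0} u_j \log(u_j/v_j) = \sum_{j:\, u_j \neq 0} D_h(u_j, v_j) + \sum_{j:\, u_j = 0} v_j,
\]
where the last equality follows because $\sum_{j:\, u_j\neq 0}(u_j - v_j) = \sum_{j:\, u_j = 0} v_j$. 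Both summands on the right are nonnegative.

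Next, for each index $j$ with $u_j \neq 0$ I would apply Taylor's theorem with Lagrange remainder to $h$ expanded about $v_j$: since $h''(x) = 1/x$, there is some $\xi_j$ between $u_j$ and $v_j$ with $D_h(u_j,v_j) = (u_j - v_j)^2/(2\xi_j)$. Because $u_j, v_j \le \beta/p$ forces $\xi_j \le \beta/p$, this gives $D_h(u_j,v_j) \ge (p/(2\beta))(u_j-v_j)^2$. For each $j$ with $u_j = 0$ I would instead use $v_j \ge (p/\beta)\,v_j^2 = (p/\beta)(u_j - v_j)^2$, valid since $v_j \le \beta/p$. Summing both contributions yields $\DKL(u,v) \ge (p/(2\beta))\ltwonorm{u - v}^2 \ge \{p\alpha/(2\beta^2)\}\ltwonorm{u-v}^2$, the final step using $\alpha \le \beta$.

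For the matrix statement I would apply the scalar inequality row by row. Under Assumption \ref{asp:1} and for any $\bQ \in \cC$, each pair $(u,v) = (P_{i\cdot}, Q_{i\cdot})$ satisfies the hypotheses, namely $P_{ij} \in \{0\}\cup[\alpha/p,\beta/p]$ and $Q_{ij}\in[\alpha/p,\beta/p]$, so $\DKL(P_{i\cdot},Q_{i\cdot}) \ge \{p\alpha/(2\beta^2)\}\ltwonorm{P_{i\cdot}-Q_{i\cdot}}^2$. Multiplying by $\pi_i \ge \pi_{\min}$, summing over $i$, and invoking $\DKL(\bP,\bQ)=\sum_i\pi_i\DKL(P_{i\cdot},Q_{i\cdot})$ gives $\DKL(\bP,\bQ)\ge \{p\alpha\pi_{\min}/(2\beta^2)\}\fnorm{\bP-\bQ}^2$, which rearranges to the claimed bound.

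I expect the only genuine obstacle to be the bookkeeping around the indicator $1_{\{u_j\neq 0\}}$: a naive Bregman decomposition leaves behind a stray linear term, and the crucial observation is that the normalization constraints convert it into the nonnegative quantity $\sum_{j:\, u_j = 0} v_j$, which itself dominates the squared $\ell_2$ discrepancy on the complement of the support of $u$. Everything else reduces to a routine second-order expansion together with the uniform remainder bound $\xi_j \le \beta/p$.
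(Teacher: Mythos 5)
Your proof is correct and follows essentially the same route as the paper's: both split $\DKL(u,v)$ over the support of $u$, use normalization to turn the leftover linear term into the nonnegative off-support mass $\sum_{j:\,u_j=0}v_j$, bound that mass below by $(p/\beta)\sum_{j:\,u_j=0}(u_j-v_j)^2$, control the on-support terms by a second-order Lagrange-remainder expansion, and then lift to the matrix statement row by row exactly as you do. The only difference is cosmetic: the paper expands $\log x$ about $u_j$, so its remainder $u_j(u_j-v_j)^2/(2\xi_j^2)$ needs the lower bound $u_j\ge \alpha/p$, whereas your Bregman expansion of $x\log x$ about $v_j$ gives the remainder $(u_j-v_j)^2/(2\xi_j)$ and hence the marginally sharper intermediate constant $p/(2\beta)$, which you then weaken to $p\alpha/(2\beta^2)$ via $\alpha\le\beta$.
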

\begin{proof}{Proof of Lemma \ref{lem:kl_to_l2}}
		\noindent By the mean value theorem, for any $j \in [p]$ such that $u_j \neq 0$, there exists $\xi_j \in [\alpha / p, \beta / p]$ such that 
		\[
			\log(v_j) - \log(u_j) = \frac{v_j - u_j}{u_j} - \frac{(v_j - u_j) ^ 2}{2\xi_j ^2}. 
		\]
		Therefore, 
		\[
			\begin{aligned}
				\DKL(u, v) & = \sum_{j: u_j \neq 0} u_j\log(u_j / v_j) = \sum_{j: u_j \neq 0} (u_j - v_j) + \sum_{j: u_j \neq 0} \frac{(u_j - v_j) ^ 2}{2 \xi_j ^ 2} \\
				& \ge  1 - \sum_{j: u_j \neq 0} v_j + \sum_{j: u_j \neq 0} \frac{p \alpha(u_j - v_j) ^ 2}{2\beta ^ 2} = \sum_{j: u_j = 0} v_j - u_j + \sum_{j: u_j \neq 0} \frac{p \alpha(u_j - v_j) ^ 2}{2\beta ^ 2} \\
				& \ge  \sum_{j: u_j = 0} \frac{p(v_j - u_j) ^ 2}{\beta} + \sum_{j: u_j \neq 0} \frac{p \alpha(u_j - v_j) ^ 2}{2\beta ^ 2}  \ge \frac{p\alpha}{2\beta ^ 2} \ltwonorm{u - v} ^ 2. 
			\end{aligned} 
		\]
		Then we have
		\[
			\fnorm{\bP - \bQ} ^ 2 = \sum_{i \in [p]} \ltwonorm{P_{i\cdot} - Q_{i\cdot}} ^ 2 \le \sum_{i \in [p]} \frac{2\beta ^ 2\pi_{i}}{p\alpha \pi_{\min}}\DKL(P_{i\cdot}, Q_{i\cdot}) = \frac{2\beta ^ 2}{p\alpha \pi_{\min}}\DKL(\bP, \bQ). 
		\]
\end{proof}
\section{Proof of Theorem \ref{thm:nuclear}}

\begin{proof}
	\noindent Given the definition of $\widehat \bP$, 
	\begin{equation}\label{ineq:tilde_D-P-hat-P}
		\widetilde{D}_{\KL}(\bP,\widehat{\bP}) = \frac{1}{n}\sum_{i=1}^n \langle\log(\bP) - \log(\widehat{\bP}), \bX_i\rangle = \ell_n(\widehat{\bP}) - \ell_n(\bP) \leq \lambda (\nnorm{\widehat \bP} - \nnorm{\bP}) \le \lambda \nnorm{\bP - \widehat \bP}.
	\end{equation}
	Then we have 
	\be
	\label{ineq:basic}
	\begin{aligned}
		D_{\KL}(\bP, \widehat\bP) & = \cL(\widehat\bP) - \cL(\bP) = \cL(\widehat\bP) - \ell_n(\widehat\bP) + \ell_n(\widehat\bP) - \ell_n(\bP) + \ell_n(\bP) - \cL(\bP) \\
		& \le \cL(\widehat \bP) - \ell_n(\widehat \bP) + \ell_n(\bP) - \cL(\bP) + \lambda \nnorm{\bP - \widehat \bP} \\
		& = D_{\KL}(\bP, \widehat\bP) - \widetilde D_{\KL}(\bP, \widehat\bP) + \lambda \nnorm{\bP - \widehat \bP}. 
	\end{aligned}
	\ee
	Define $\cE := \{\lambda \ge 2 \opnorm{\Pi_{\cN}(\nabla \ell_n(\bP))}\}$. If $\cE$ holds, then by Lemma \ref{lem:large_lambda} and then Lemma \ref{lem:kl_to_l2}, we obtain that 
	\[
		\begin{aligned}
			D_{\KL}(\bP, \widehat\bP) & \le D_{\KL}(\bP, \widehat\bP) - \widetilde D_{\KL}(\bP, \widehat\bP) + 4(2r)^{1 / 2}\lambda \fnorm{\bP - \widehat \bP} \\
			& 	\le  D_{\KL}(\bP, \widehat\bP) - \widetilde D_{\KL}(\bP, \widehat\bP) + 8\lambda \beta\biggl(\frac{r D_{\KL}(\bP, \widehat \bP)}{p\pi_{\min} \alpha}\biggr)^{\!1/2}. 
		\end{aligned}
	\]
	For any $\xi > 1$, an application of Lemma \ref{lem:uniform_law} with $\eta = \xi \pi_{\min}/ (rp\pi_{\max}\log p)$ yields
	\[
		\begin{aligned}
			\PP\biggl[ \biggl\{D_{\KL}(\bP, \widehat \bP) \le 16\lambda \beta\biggl(\frac{r D_{\KL}(\bP, \widehat \bP)}{p\pi_{\min} \alpha}\biggr)^{\!1/2} +  \frac{2C_1 r\pi_{\max}\beta ^ 2 p\log p}{\pi_{\min}\alpha ^ 3n} + \eta \biggr\} \cap \cE\biggr] \ge 1 - C_2 e^{- \xi} - \PP(\cE^c), 
		\end{aligned}
	\]	
	where $C_1$ and  $C_2$ are exactly the same constants as in Lemma \ref{lem:uniform_law}. Some algebra yields that 
	\[
		\begin{aligned}
			\PP\biggl[ \biggl\{D_{\KL}(\bP, \widehat \bP) \le \frac{256\lambda ^ 2 \beta ^ 2 r}{p\pi_{\min} \alpha} +  \frac{2C_1 r\pi_{\max}\beta ^ 2 p\log p}{\pi_{\min}\alpha ^ 3n} + \eta \biggr\} \cap \cE\biggr] \ge 1 - C_2 e^{- \xi} - \PP(\cE^c). 
		\end{aligned}
	\]		
	By Lemma \ref{lem:gradient}, there exists a universal constant $C_3 > 0$ such that if we choose 
	\[
		\lambda = C_3\biggl\{\biggl(\frac{\xi p ^ 2\pi_{\max} \log p}{n\alpha}\biggr)^{1 / 2} + \frac{\xi p\log p}{n \alpha}\biggr\}, 
	\]
	then for any $\xi > 1$, whenever $n\pi_{\max}(1 - \rho_+) \ge \max(20, \xi ^ 2)\log p$, we have that  
	\[
		\begin{aligned}
			\PP \biggl( D_{\KL}(\bP, \widehat\bP^r) \gtrsim \frac{\xi r\pi_{\max}\beta ^ 2 p \log p}{\pi_{\min}\alpha ^ 3n} + \frac{\xi\pi_{\min}}{rp \pi_{\max}\log p}\biggr) \lesssim e^{- \xi} + p^{-(\xi - 1)} + p^{-10},
		\end{aligned}
	\]
	as desired. The Frobenius-norm error bound follows immediately by applying Lemma \ref{lem:kl_to_l2}. 
\end{proof}

\section{Proof of Theorem \ref{thm:rank}}

\begin{proof}	
	\noindent Given the definition of $\widehat \bP^r$, 
	\begin{equation}\label{ineq:tilde_D-P-hat-P}
	\widetilde{D}_{\KL}(\bP,\widehat{\bP}^r) = \frac{1}{n}\sum_{i=1}^n \langle\log(\bP) - \log(\widehat{\bP}^r), \bX_i\rangle = \ell_n(\widehat{\bP}^r) - \ell_n(\bP) \leq 0.
	\end{equation}
	Then we have
	\be
	\label{ineq:basic}
	\begin{aligned}
		D_{\KL}(\bP, \widehat\bP^r) & = \cL(\widehat\bP^r) - \cL(\bP) = \cL(\widehat\bP^r) - \ell_n(\widehat\bP^r) + \ell_n(\widehat\bP^r) - \ell_n(\bP) + \ell_n(\bP) - \cL(\bP) \\
		& \le \cL(\widehat \bP^r) - \ell_n(\widehat \bP^r) + \ell_n(\bP) - \cL(\bP) = D_{\KL}(\bP, \widehat\bP^r) - \widetilde D_{\KL}(\bP, \widehat\bP^r). 
	\end{aligned}
	\ee
	For any $\xi > 1$, an application of Lemma \ref{lem:uniform_law} with $\eta = \pi_{\min}\xi / (rp\pi_{\max}\log p)$ yields
	\[
		\PP\biggl\{D_{\KL}(\bP, \widehat \bP^r) \ge \max\biggl(\frac{2C_1 r\pi_{\max} \beta^2 p \log p }{\pi_{\min}\alpha ^ 3n}, \frac{\xi\alpha ^ 2}{rp^ 2 \pi_{\max} \log p}\biggr)\biggr\} \le C_2e^{- \xi}, 
	\]
	as desired. The Frobenius-norm error bound immediately follows by Lemma \ref{lem:kl_to_l2}. 
	
\end{proof}

\section{Proof of Theorem \ref{thm:lower_bound}}

\begin{proof}
	~\hspace{-.5cm} To simplify the notation, assume without loss of generality that $p$ is a multiple of $4(r-1)$. For any $1\leq k \leq m$, consider
	\begin{equation}\label{eq:lower-bound-P^(k)2}
	\begin{split}
	\bP^{(k)} = & \begin{bmatrix}
	\frac{2-\alpha}{p} \bone_{p\times (p/2)} ~~~~  \frac{\alpha}{p} \bone_{p\times (p/2)} 
	\end{bmatrix} \\
	& + \frac{\eta(2-\alpha)}{2p} \begin{bmatrix}
	\bzero_{(p/2)\times (p/4)} & \bzero_{(p/2)\times (p/4)} & \bzero_{(p/2)\times (p/2)}\\
	~~ \bR^{(k)} ~~ \cdots  ~~~~~ \bR^{(k)} & -\bR^{(k)} ~~ \cdots ~~ -\bR^{(k)} & \bzero_{(p/4)\times (p/2)} \\
	\underbrace{-\bR^{(k)} ~~ \cdots  ~~ - \bR^{(k)}}_{l_0} & \underbrace{~~\bR^{(k)} ~~  \cdots ~~~~~~  \bR^{(k)}}_{l_0} & \bzero_{(p/4)\times (p/2)} \\
	\end{bmatrix},
	\end{split}
	\end{equation}
	where $l_0 = \frac{p}{4(r - 1)}$, $\bR^{(k)} \in \{0, 1\}^{(p/4)\times (r-1)}$, and $\eta$ is some positive value to be determined later. Let 
	\begin{equation}
	\mu := \biggl(\frac{2-\alpha}{p} 1_{p/2}^\top ~~ \frac{\alpha}{p} 1_{p/2}^\top\biggr)^\top.
	\end{equation}
	First of all, regardless of the value of $\bR^{(k)}$, one can see that for any $k \in [m]$, 
	\begin{enumerate}
		\item $\rank(\bP^{(k)})\leq r$; 
		\item $\mu^\top \bP^{(k)} = \mu^\top$, and hence $\mu$ is the invariant distribution of $\bP^{(k)}$; 
		\item $\bP^{(k)} \in \Theta$. 
	\end{enumerate} 
	Let $\{\bR^{(k)}\}_{k = 1}^m$ be i.i.d. matrices of independent Rademacher entries, i.e., for any $k \in [m]$, $\{R^{(k)}_{ij}\}_{i \in [n], j\in [d]}$ are independent Rademacher variables, and $\{\bR^{(k)}\}_{k \in [m]}$ are independent. For any $k \neq l$,  one can see that $\bigl\{\bigl|\bR^{(k)}_{ij} - \bR^{(l)}_{ij}\bigr|\bigr\}$ are i.i.d. uniformly distributed on $\{0, 2\}$, and that 
	$$\mathbb{E}\bigl|\bR^{(k)}_{ij} - \bR^{(l)}_{ij}\bigr| = 1, \quad \Var\bigl(\bigl|\bR^{(k)}_{ij} - \bR^{(l)}_{ij}\bigr|\bigr)= 1,\quad \bigl|\bigl|\bR^{(k)}_{ij} - \bR^{(l)}_{ij}\bigr| - 1\bigr| = 1.$$
	By Bernstein's inequality \citep[][Theorem~2.10]{BLM13}, for any $t>0$, 
	\begin{equation*}
	\bbP\biggl\{\biggl|\bigl\|\bR^{(k)} - \bR^{(l)}\bigr\|_1 - \frac{p(r-1)}{4}\biggr| \ge \biggl( \frac{p(r - 1)t}{2}\biggr)^{\!1/ 2} + t \biggr\} \leq 2e^{-t}. 
	\end{equation*}
	Let $t = p(r - 1)/ 64$ and $m = \lfloor \exp\{p(r-1)/128\} / 2^{1 / 2} \rfloor$. Since $p(r - 1) \ge 192\log 2$, we have that $m \ge 2$. Then a union bound yields that
	\begin{equation*}
	\begin{split}
	\bbP\biggl(\forall 1\leq k < l \leq m, ~ \frac{p(r - 1)}{8} \le \bigl\|\bR^{(k)} - \bR^{(l)}\bigr\|_1 \le \frac{3p(r - 1)}{8}\biggr) \ge 1 - 2m^2\exp\biggl(\frac{-p(r - 1)}{64}\biggr) > 0. 
	\end{split}
	\end{equation*}
	Hence, there exist $\bR^{(1)},\ldots, \bR^{(m)} \subseteq \{-1, 1\}^{(p / 4)\times (r-1)}$ such that
	\begin{equation}\label{ineq:P^k-P^l}
	\forall 1\leq k < l \leq m,~\frac{p(r - 1)}{8} \le \bigl\|\bR^{(k)} - \bR^{(l)}\bigr\|_1 \le \frac{3p(r - 1)}{8}, 
	\end{equation}
	which, given that $\fnorm{\bR^{(j)} - \bR^{(k)}} ^ 2 = 2\|\bR^{(j)} - \bR^{(k)}\|_1$, further implies that 
	\begin{equation}
	\forall 1\leq k < l \leq m,~\frac{p(r - 1)}{4} \le \fnorm{\bR^{(k)} - \bR^{(l)}}^2 \le \frac{3p(r - 1)}{4}. 
	\end{equation}
	Now we have that 
	\begin{equation*}
	\begin{split}
	\left\|\bP^{(k)} - \bP^{(l)}\right\|_1 = \frac{2l_0\eta(2-\alpha)}{p}  \|\bR^{(k)} - \bR^{(l)}\|_1 \geq \frac{\eta p (2-\alpha)}{16}, \left\|\bP^{(k)} - \bP^{(l)}\right\|_1 \le \frac{3\eta p (2 - \alpha)}{16},  \\
	\fnorm{\bP^{(k)} - \bP^{(l)}} ^ 2 = \frac{l_0\eta ^ 2(2-\alpha) ^ 2}{p ^ 2}  \fnorm{\bR^{(k)} - \bR^{(l)}} ^ 2 \geq \frac{\eta ^ 2(2-\alpha) ^ 2}{16}, \left\|\bP^{(k)} - \bP^{(l)}\right\|_1 \le \frac{3\eta ^ 2(2-\alpha) ^ 2}{16}. 
	\end{split}
	\end{equation*}
	Besides, 
	\begin{equation*}
	\begin{split}
	D_{\KL} (\cX^{(k)} || \cX^{(l)}) & = n\sum_{i\in [p]} \pi_i D_{\KL}\bigl(\bP_{[i,:]}^{(k)},  \bP^{(l)}_{[i,:]}\bigr)
	=  n\sum_{i=(p/2)+1}^{p} \sum_{j=1}^{p/2} \frac{\alpha}{p} P_{ij}^{(k)}\log \bigl(P_{ij}^{(k)}/P_{ij}^{(l)}\bigr) \\
	& = \frac{2n\alpha}{p}\sum_{i=1}^{p/4} \frac{2-\alpha}{2} D_{\KL}\bigl(u^{(k)}_i, u^{(l)}_i\bigr),
	\end{split}
	\end{equation*}
	where $u^{(k)}_i = \frac{2}{p}1_{p/2} + \frac{\eta}{p}\left[\bR^{(k)}_{[i,:]} ~ \cdots ~ \bR^{(k)}_{[i,:]} ~ -\bR^{(k)}_{[i,:]} ~ \cdots ~ -\bR^{(k)}_{[i,:]}\right]$ corresponds to a $(p/2)$-dimensional distribution. By \citet[][Lemma~4]{ZWa19}, we have that 
	\begin{equation*}
	D_{\KL}\bigl(u^{(k)}_i, u^{(l)}_i\bigr) \leq \frac{3l_0\eta ^ 2}{p}\bigl\|\bR_{[i,:]}^{(k)} - \bR_{[i,:]}^{(l)}\bigr\|_2^2 = \frac{6 l_0\eta^2}{p}\bigl\|\bR_{[i,:]}^{(k)} - \bR_{[i,:]}^{(l)}\bigr\|_1.
	\end{equation*}
	Therefore, 
	\begin{equation*}
	\begin{split}
	D_{\KL}(\cX^{(k)}, \cX^{(l)}) & = \frac{6n\alpha(2-\alpha)l_0 \eta ^ 2}{p ^ 2}\sum_{i=1}^{p/4} \bigl\|\bR_{[i,:]}^{(k)} - \bR_{[i,:]}^{(l)}\bigr\|_1 \leq \frac{12n\alpha l_0\eta^2}{p^2}\|\bR^{(k)} - \bR^{(l)}\|_1 \\
	& \leq \frac{12n\alpha l_0 \eta^2}{p^2} \frac{3p(r-1)}{8} = \frac{9n\eta^2\alpha}{8}.
	\end{split}
	\end{equation*}
	By Fano's inequality \citep[][Lemma~3]{yu1997assouad}, we have that
	\begin{equation*}
	\begin{split}
	\inf_{\widehat{\bP}} \sup_{\bP\in \Theta} \fnorm{\widehat{\bP} - \bP} ^ 2 \geq & \inf_{\widehat{\bP}} \sup_{\bP\in \{\bP^{(1)}, \ldots, \bP^{(m)}\}} \fnorm{\hat{\bP} - \bP} ^ 2 \geq \frac{\eta ^ 2(2-\alpha) ^ 2}{16} \left(1 - \frac{9n\eta^2\alpha - \log 2}{\log m}\right). 
	\end{split}
	\end{equation*}
	There exist universal constants $c_1, c_2 > 0$ such that when $p(r - 1) \ge 192 \log 2$, choosing $\eta = c_1\{p(r - 1) / (n \alpha)\}^{\!1 / 2}$ yields that 
	\[
	\inf_{\widehat{\bP}} \sup_{\bP\in \Theta} \fnorm{\widehat{\bP} - \bP} ^ 2 \ge c_2\frac{p(r - 1)}{n\alpha}. 
	\]
	\quad $\square$
	
\end{proof}

\section{Proof of Theorem \ref{thm:uv}}
\begin{proof}
	~Let $\widehat{\bU}_{\perp}, \widehat{\bV}_{\perp} \in \Re^{p\times (p-r)}$ be the orthogonal complement of $\widehat{\bU}$ and $\widehat{\bV}$. Since $\bU, \bV, \widehat{\bU}$, and $\widehat{\bV}$ are the leading left and right singular vectors of $\bP$ and $\widehat{\bP}$, we have
	\begin{equation*}
	\begin{split}
		\|\widehat{\bP} - \bP\|_F \geq & \|\widehat{\bU}_{\perp}^\top(\widehat{\bP} - \bU\bU^\top \bP)\|_F = \|\widehat{\bU}_{\perp}^\top \bU\bU^\top\bP\|_F \geq \|\widehat{\bU}^\top_{\perp} \bU\|_F \sigma_r(\bU^\top \bP) = \|\sin\Theta(\widehat{\bU}, \bU)\|_F \sigma_r(\bP).
	\end{split}
	\end{equation*}
	Similar argument also applies to $\|\sin\Theta(\widehat{\bV}, \bV)\|$. Thus,
	$$\max\bigl(\|\sin\Theta(\widehat{\bU}, \bU)\|_F, \|\sin\Theta(\widehat{\bV}, \bV)\|_F\bigr)\le \min\biggl(\frac{\|\widehat{\bP}-\bP\|_F}{\sigma_r(\bP)}, r^{1 / 2}\biggr).$$
	The rest of the proof immediately follows from Theorem \ref{thm:nuclear}.
\end{proof}

\section{Proof of Lemma \ref{lem:large_lambda}}

\begin{proof}
	~By the inequality (52) in Lemma 3 in the Appendix of \citet{negahban2012restricted}, we have for any $\bDelta \in \Re^{p \times p}$, 
	\[
		\nnorm{\bP + \bDelta} - \nnorm{\bP} \ge \nnorm{\bDelta_{\overline\cM^\perp}} - \nnorm{\bDelta_{\overline \cM}} - 2\nnorm{\bP_{\cM^\perp}}. 
	\]
	Besides, 
	\[
		\begin{aligned}
		\ell_n(\bP + \bDelta) - \ell_n(\bP) & \ge \inn{\nabla \ell_n(\bP), \bDelta} = \inn{\Pi_{\cN}(\nabla\ell_n(\bP)), \bDelta} \ge - |\inn{\Pi_{\cN}(\nabla \ell_n (\bP)), \bDelta}| \\
		& \ge - \opnorm{\Pi_{\cN}(\nabla \ell_n(\bP))} \nnorm{\bDelta } \ge - \frac{\lambda}{2} \bigl(\nnorm{\bDelta_{\overline \cM}} + \nnorm{\bDelta_{\overline \cM^\perp}} \bigr). 
		\end{aligned}
	\]
	By the optimality of $\widehat\bP$, $\ell_n(\widehat\bP) + \lambda \nnorm{\widehat\bP} \le \ell_n(\bP) + \lambda \nnorm{\bP}$. Therefore, 
	\[
		\lambda \bigl(\nnorm{\bDelta_{\overline \cM}} + 2\nnorm{\bP_{\cM^\perp}} -\nnorm{\bDelta_{\overline\cM^\perp}} \bigr) \ge \lambda (\nnorm{\bP} - \nnorm{\widehat\bP}) \ge - \frac{\lambda}{2}\bigl( \nnorm{\widehat\bDelta_{\overline\cM}} + \nnorm{ \widehat\bDelta_{\overline\cM^\perp}}\bigr), 
	\]
	from which we deduce that  
	\[
	\nnorm{ \widehat\bDelta_{\overline\cM^\perp}} \le 3\nnorm{\widehat\bDelta_{\overline\cM}} + 4\nnorm{\bP_{\cM^\perp}}. 
	\]
\end{proof}

\section{Proof of Lemma \ref{lem:gradient}}

\begin{proof}
		~Some algebra yields that 
	\be
	\label{eq:log_likelihood}
	\nabla \ell_n(\bQ ) = \frac{1}{n} \sum\limits_{i=1}^n -\frac{\bX_i}{\inn{\bQ, \bX_i}}. 
	\ee
	For ease of notation, write $\bZ_i := - \bX_i / \inn{\bP, \bX_i}$. Note that $\bZ_i$ is well-defined almost surely. Besides, 
	\[
	\EE(\bZ_i | \bZ_{i - 1}) = \EE(\bZ_i | \bX_{i-1}) = \sum\limits_{j=1}^p -\frac{e_{X_{i - 1}}e_{j}^\top}{P_{X_{i - 1}, j}} P_{X_{i - 1}, j} = -e_{X_{i - 1}} 1^\top. 
	\]
	Thus $\opnorm{\bZ_i - \EE(\bZ_i | \bZ_{i - 1})} \le p / \alpha + \sqrt{p} =: R < \infty$. Define $\bS_k := \sum_{i=1}^k \bZ_i - \EE(\bZ_i| \bZ_{i - 1})$, then $\{\bS_k\}_{k = 1}^n$ is a matrix martingale. In addition, 
	\[
	\begin{aligned}
	\EE & \bigl\{(\bZ_i -  \EE(\bZ_i | \bZ_{i - 1}))^\top (\bZ_i - \EE (\bZ_i | \bZ_{i - 1})) | \{\bS_k\}_{k = 1}^{i -1}\bigr\}  = \EE \bigl\{(\bZ_i -  \EE(\bZ_i | \bZ_{i - 1}))^\top (\bZ_i - \EE (\bZ_i | \bZ_{i - 1})) | \bZ_{i - 1}\bigr\}  \\
	& = \EE (\bZ^\top_i \bZ_i | \bZ_{i - 1}) - \EE(\bZ_i | \bZ_{i - 1})^\top \EE (\bZ_i | \bZ_{i - 1}) = \biggl(\sum_{j = 1}^p \frac{e_je_j^\top}{P_{X_{i - 1}, j}}\biggr) -  11^\top =: \bW^{(1)}_i, 
	\end{aligned}
	\]
	and similarly, 
	\[
	\begin{aligned}
	\EE  \bigl\{(\bZ_i -  \EE(\bZ_i | \bZ_{i - 1}))(\bZ_i - \EE (\bZ_i | \bZ_{i - 1}))^\top | \{\bS_k\}_{k = 1}^{i -1} \bigr\} =  \biggl(\sum\limits_{j = 1}^ p \frac{ e_{X_{i - 1}}e_{X_{i - 1}}^\top}{P_{X_{i - 1}, j}} \biggr) -  p e_{X_{i - 1}}e_{X_{i - 1}}^\top =: \bW^{(2)}_i. 	
	\end{aligned}
	\]
	Write $\opnorm{\sum_{i=1}^n \bW^{(1)}_i}$ as $W_n^{(1)}$, $\opnorm{\sum_{i = 1}^n \bW^{(2)}_i}$ as $W_n^{(2)}$ and $\max(W_n^{(1)}, W_n^{(2)})$ as $W_n$. By the matrix Freedman inequality \citep[Corollary~1.3]{tropp2011freedman}, for any $t \ge 0$ and $\sigma^2 > 0$, 
	\be
	\label{eq:matrix_freedman}
	\PP ( \opnorm{\bS_n} \ge t, W_n\le \sigma^2 ) \le 2p \exp\biggl(-\frac{t^2 /2 }{\sigma^2 + Rt / 3}\biggr). 
	\ee
	Now we need to choose an appropriate $\sigma^2$ so that $W_n \le \sigma^2$ holds with high probability. Note that $W^{(1)}_n \le np(\alpha^{-1} + 1)$ and $W_n^{(2)} \le (p^2 \alpha^{-1} - p) \sup_{j\in [p]} \sum_{i = 1}^n \ind_{\{X_i = s_j\}}$. In the following we derive a bound for $\sup_{j\in [p]} \sum_{i = 1}^n \ind_{\{X_i = s_j\}}$. For any $j \in [p]$, by \citet[Theorem~1.2]{JFS18}, which is a variant of Bernstein's inequality for Markov chains, we have that 
	\be
	\label{eq:mc_bernstein}
	\PP\biggl\{\frac{1}{n}\sum_{i=1}^n (\ind_{\{X_i = s_j\}} - \pi_j) > \epsilon \biggr\} \le \exp\biggl( -\frac{n\epsilon ^ 2}{2(A_1\beta / p + A_2\epsilon)}\biggr), 
	\ee
	where 
	\[
	A_1 = \frac{1 + \max(\rho_+, 0)}{1 - \max(\rho_+, 0)}~~~\textnormal{and}~~~A_2 = \frac{1}{3}\ind_{\{\rho_{+} \le 0\}} +  \frac{5}{1 - \rho_+} \ind_{\{\rho_+ > 0\}}. 
	\]
	Some algebra yields that for any $\xi > 0$, 
	\[
	\PP\biggl\{\frac{1}{n}\sum_{i=1}^n \ind_{\{X_i = s_j\}} - \pi_j > \biggl(\frac{4A_1\xi}{np}\biggr)^{1 / 2} + \frac{4A_2\xi}{n} \biggr\} \le \exp( - \xi). 
	\]
	A union bound over $j \in [p]$ yields that 
	\[
	\PP\biggl\{ \sup_{j \in [p]} \frac{1}{n}\sum_{i=1}^n (\ind_{\{X_i = s_j\}}- \pi_j) > \biggl(\frac{4A_1\xi \log p}{np}\biggr)^{1 / 2} + \frac{4A_2\xi \log p}{n}\biggr\} \le p^{-(\xi - 1)}, 
	\]
	which implies that 
	\[
	\PP\biggl\{\sup_{j \in [p]} \frac{1}{n}\sum\limits_{i=1}^n  \ind_{\{X_i = s_j\}} > \pi_{\max} + \biggl(\frac{4A_1\xi \log p}{np}\biggr)^{1 / 2} + \frac{4A_2\xi \log p}{n}\biggr\} \le p^{-(\xi - 1)}. 
	\]
	Therefore, whenever $n \pi_{\max} (1 - \rho_+) \ge 2 \log p$, we have that 
	\[
		\PP\biggl(\sup_{j \in [p]} \frac{1}{n} \sum_{i = 1}^n 1_{\{X_i = s_j\}} \gtrsim \pi_{\max}\biggr) \le \exp\biggl(- \frac{n\pi_{\max} (1 - \rho_+)}{2}\biggr). 
	\]
	Combining this with the bounds of $W_n^{(1)}$ and $W_n^{(2)}$, we have that 
	\[
		\PP\biggl(W_n \ge \frac{C_1np ^ 2\pi_{\max}}{\alpha}\biggr) \le \exp\biggl(- \frac{n\pi_{\max} (1 - \rho_+)}{2}\biggr), 
	\]
	where $C_1$ is a universal constant. Now choosing $\sigma^2 = C_1 np^2 \pi_{\max} / \alpha$, we deduce that for any $t \ge 0$, 
	\[
	\begin{aligned}
		\PP(\opnorm{\bS_n} \ge t) & = \PP(\opnorm{\bS_n} \ge t, W_n \le \sigma^2 ) + \PP(\opnorm{\bS_n} \ge t, W_n > \sigma^2 ) \\
		& \le \PP(\opnorm{\bS_n} \ge t, W_n \le \sigma^2 ) + \PP(W_n > \sigma^2 ) \\
		& \le 2p \exp\biggl(-\frac{t^2 /2 }{\sigma^2 + Rt / 3}\biggr) + \exp\biggl(- \frac{n\pi_{\max} (1 - \rho_+)}{2}\biggr). 
	\end{aligned}
	\]
	Equivalently, for any $\xi > 1$, 
	\[
		\PP\biggl\{\biggl\|\frac{1}{n}\bS_n\biggr\|_2 \gtrsim \biggl(\frac{\xi p ^ 2\pi_{\max} \log p}{n\alpha}\biggr)^{1 / 2} + \frac{\xi p\log p}{n \alpha}\biggr\} \le 4p^{-(\xi - 1)} + \exp\biggl(- \frac{n\pi_{\max} (1 - \rho_+)}{2}\biggr). 
	\]	
	Finally, observe that for any $i \in [n]$, $\Pi_{\cN}(\EE(\bZ_i | \bZ_{i - 1})) = \Pi_{\cN}( -e_{X_{i - 1}} 1^\top) = \bzero$. Therefore, $\Pi_{\cN}(\nabla\ell_n(\bP)) = n^{-1}\bS_n$ and the final conclusion then follows. 	
\end{proof}

\section{Proof of Lemma \ref{lem:uniform_law}}
\begin{proof}
	\noindent We first split $\mathcal{C}(\eta)$ as the union of the sets 
	\begin{equation*}
	\mathcal{C}_l := \left\{\bQ \in \mathcal{C}(\eta): 2^{l-1}\eta \leq \DKL(\bP, \bQ) \leq 2^l\eta, ~ \rank(\bQ)\leq r\right\}, \quad l=1,2,3,\ldots. 
	\end{equation*}
	Define
	\begin{equation*}
	\begin{split}
	\gamma_l = & \sup_{\bQ \in \mathcal{C}_l} \bigl|\DKL(\bP, \bQ) - \widetilde D_{\KL}(\bP, \bQ)\bigr| \\
	= & \sup_{\bQ \in \mathcal{C}_l} \biggl|\frac{1}{n}\sum_{i=1}^n \langle \log(\bP) - \log(\bQ), \bX_i \rangle - \mathbb{E}\langle \log(\bP) - \log(\bQ), \bX_i \rangle\biggr|.
	\end{split}
	\end{equation*}
	First, we wish to apply \citet[][Theorem~7]{Ada08} to bound $|\gamma_l - \EE \gamma_l|$. Adamczak's bound entails the following asymptotic weak variance
	\[
	\sigma^2 := \sup_{\bQ\in\cC_l} \Var \biggl\{\sum_{i = S_1 + 1}^{S_2} \langle \log(\bP) - \log(\bQ), \bX_i \rangle - \mathbb{E}\langle \log(\bP) - \log(\bQ), \bX_i \rangle \biggr\} / \EE T_2. 
	\]
	We have that 
	\[
	\begin{aligned}
	\sigma^2  & \le \sup_{\bQ\in\cC_l}\EE \biggl[\biggl\{ \sum_{i = S_1 + 1}^{S_2} \langle \log(\bP) - \log(\bQ), \bX_i \rangle - \mathbb{E}\langle \log(\bP) - \log(\bQ), \bX_i \rangle \biggr\}^{\!2}\biggr] / \EE T_2\\
	& = \frac{1}{2} \sup_{\bQ\in\cC_l} \sum\limits_{j = 1}^{\infty} \EE \biggl[\biggl\{ \sum_{i = S_1 + 1}^{S_2} \langle \log(\bP) - \log(\bQ), \bX_i \rangle - \mathbb{E}\langle \log(\bP) - \log(\bQ), \bX_i \rangle \biggr\}^{\!2} \ind_{\{T_2 = j\}}\biggr] \\
	& \le \frac{1}{2}  \sum\limits_{j = 1}^{\infty}  {4j^2\log^ 2(\beta / \alpha) }\PP(T_2 = j) = {2 \log^2(\beta / \alpha) \EE (T_2^2)} = {8\log ^ 2(\beta / \alpha)}. 
	\end{aligned}
	\]
	By \citet[][Theorem~7]{Ada08}, there exists a universal constant $K > 1$ such that for any $\xi > 0$, 
	\[
	\PP\biggl\{ |\gamma_l - \EE \gamma_l| \ge K \EE\gamma_l + {2\log(\beta / \alpha)}\biggl(\frac{2K\xi }{n}\biggr)^{\! 1 / 2} +  \frac{16 K\log(\beta / \alpha)\xi \log n }{n}\biggr\} \le K e^{- \xi}.  
	\]
	Since $n^{- 1/ 2} \ge 2 n^{-1}\log n$ for any positive integer $n$, we have that
	\be
	\label{eq:gamma_l_uniform_concentration}
	\PP\biggl\{ |\gamma_l - \EE \gamma_l| \ge K \EE\gamma_l + {11K\log(\beta / \alpha)}\biggl(\frac{\xi }{n}\biggr)^{\! 1 / 2}\biggr\} \le K e^{- \xi}.  	
	\ee
	Next, we bound $\EE \gamma_l$. Let $\{\varepsilon_i\}_{i=1}^n$ be $n$ independent Rademacher random variables. By a symmetrization argument, 
	\begin{equation*}
	\begin{split}
	\mathbb{E}\gamma_l = & \mathbb{E}\biggl(\sup_{\bQ \in \mathcal{C}_l} \biggl|\frac{1}{n}\sum_{i=1}^n \langle \log (\bP) - \log(\bQ), \bX_i\rangle - \mathbb{E}\langle \log(\bP) - \log(\bQ), \bX_i \rangle\biggr|\biggr)\\
	\leq & 2\mathbb{E}\biggl(\sup_{\bQ \in \mathcal{C}_l}\biggl|\frac{1}{n}\sum_{i=1}^n \varepsilon_k \langle \log(\bP) - \log(\bQ), \bX_i\rangle\biggr|\biggr).
	\end{split}
	\end{equation*}
	Let $\phi_i(t) = (\alpha / p)\{\log(\langle\bP, \bX_i\rangle + t) - \log(\langle \bP, \bX_i\rangle)\}$. Then $\phi_{i}(0) = 0$ and $|\phi_i'(t)| \leq 1$ for all $t$ such that $t + \langle\bP, \bX_i\rangle \geq \alpha/p$. In other words, $\phi_i$ is a contraction map for $t \geq \min_{j, k \in [p]}(P_{jk} - \alpha/p)$. By the contraction principle (Theorem 4.12 in \cite{ledoux2013probability}),
	\begin{equation}\label{ineq:expected-gamma_l}
		\begin{split}
			\mathbb{E}\gamma_l \leq & \frac{2p}{\alpha}\mathbb{E}\biggl(\sup_{\bQ \in \mathcal{C}_l}\biggl|\frac{1}{n}\sum_{i=1}^n\varepsilon_i \phi_i\left(\langle \bQ-\bP, \bX_i\rangle\right)\biggr|\biggr) \leq  \frac{4p}{\alpha}\mathbb{E}\biggl(\sup_{\bQ \in \mathcal{C}_l}\biggl|\frac{1}{n}\sum_{i = 1}^n \varepsilon_i\langle \bQ-\bP, \bX_i\rangle\biggr|\biggr)\\
			\leq & \frac{4p}{\alpha}\mathbb{E}\biggl(\sup_{\bQ \in \mathcal{C}_l}\biggl\|\frac{1}{n}\sum_{i = 1}^n \varepsilon_i\bX_i\biggr\|_2 \|\bQ-\bP\|_\ast\biggr) \leq  \frac{4p}{\alpha} \mathbb{E}\biggl\|\frac{1}{n}\sum_{i = 1}^n \varepsilon_i\bX_i\biggr\|_2  \sup_{\bQ \in \mathcal{C}_l} \|\bQ - \bP\|_\ast. 
		\end{split}
	\end{equation}	
	By Lemma \ref{lem:kl_to_l2}, 
	\begin{equation}
	\label{ineq:Q-P-nuclear}
	\begin{split}
		\sup_{\bQ \in \mathcal{C}_l}  \|\bQ - \bP\|_\ast & \leq \sup_{\bQ \in \mathcal{C}_l} (2r)^{1 / 2} \fnorm{\bQ - \bP}  \le 2\beta \biggl(\frac{2^l \eta r}{p\alpha \pi_{\min}}\biggr)^{\!1 / 2}. 
	\end{split}
	\end{equation}	
	Hence, the remaining task is to bound $\mathbb{E}\|n^{-1}\sum_{i = 1}^n \varepsilon_i\bX_i\|$. From now on, we denote $\varepsilon_i \bX_i$ by $\bZ_i$. One can see that $(\bZ_i)_{i = 1}^n$ is a martingale difference sequence. We wish to apply the matrix Freedman inequality \citep[Corollary~1.3]{tropp2011freedman} to bound the average of $(\bZ_i)_{i = 1}^n$. We have that
	\begin{equation*}
	\begin{split}
	\biggl\|\sum_{i = 1}^n \mathbb{E} \bigl(\bZ_i^\top \bZ_i \vert X_{i - 1}\bigr) \biggr\|_2 = & \biggl\|\sum_{i=1}^n \sum_{j=1}^p  P_{X_{i - 1}, j} (e_{X_{i - 1}} e_j^\top)^\top (e_{X_{i - 1}}e_j^\top)\biggr\|_2 = \biggl\|\sum_{j=1}^p  \sum_{i=1}^n P_{X_{i - 1}, j} e_je_j^\top\biggr\|_2 \\
	= & \max_{j \in [p]} \sum_{i = 1}^n P_{X_{i - 1}, j} =: W_n^{(1)}
	\end{split}
	\end{equation*} 
	and that 
	\begin{equation*}
	\begin{split}
	\biggl\|\sum_{i = 1}^n \mathbb{E} \bigl(\bZ_i\bZ_i^\top\vert X_{i - 1}\bigr) \biggr\|_2 = & \biggl\|\sum_{i=1}^n \sum_{j=1}^p P_{X_{i - 1}, j} e_{X_{i - 1}} e_{X_{i - 1}}^\top \biggr\|_2 = \biggl\|\sum_{i=1}^n e_{X_{i - 1}} e_{X_{i - 1}}^\top\biggr\|_2 \\
	= & \max_{j \in [p]} \sum_{i = 1}^n 1_{\{X_{i - 1} = j\}} =: W_n^{(2)}. 
	\end{split}
	\end{equation*}	
	We first bound $W_n^{(1)}$. Note that for any $j \in [p]$, $\EE (P_{X_{i - 1}, j}) = \pi_j $, and that
	\[
	\Var_{\pi}(P_{X_{i - 1}, j}) = \sum_{k = 1}^p \pi_k (P_{kj} - \pi_j)^2 = \sum_{k = 1}^p \pi_k P^2_{kj} - \pi_j ^ 2 \le \pi_j(1 - \pi_j). 
	\]
	By a variant of Bernstein's inequality for Markov chains \citep[Theorem~1.2]{JFS18}, we have that for any $j \in [p]$, 
	\[
	\label{eq:mc_bernstein}
	\PP\biggl(\frac{1}{n} \sum_{i = 1}^n P_{X_{i - 1}, j} - \pi_j > \epsilon \biggr) \le \exp\biggl\{ -\frac{n\epsilon ^ 2}{2(A_1\pi_j + A_2\epsilon)}\biggr\}, 
	\]
	where
	\[
	A_1 := \frac{1 + \max(\rho_+, 0)}{1 - \max(\rho_+, 0)}~~~\textnormal{and}~~~A_2 := \frac{1}{3}\ind_{\{\rho_{+} \le 0\}} +  \frac{5}{1 - \rho_+} \ind_{\{\rho_+ > 0\}}. 
	\]
	A union bound yields that 
	\be
	\label{eq:w1}
	\PP\bigl\{W_n^{(1)} \ge n\pi_{\max} + (4nA_1 \pi_{\max}\xi \log p)^{1 / 2} + {4A_2\xi \log p} \bigr\} \le p^{-(\xi - 1)}. 
	\ee
	Next we bound $W_n^{(2)}$. Note that $W_n^{(2)} \le \max_{j \in [p]} \sum_{i =1}^n \ind_{\{X_{i - 1} = s_j\}}$. Similarly, by \citet[Theorem~1.2]{JFS18}, for any $j \in [p]$, 
	\be
	\label{eq:mc_bernstein}
	\PP\biggl\{\frac{1}{n}\sum_{i=1}^n \ind_{\{X_{i - 1} = s_j\}} - \pi_j > \epsilon \biggr\} \le \exp\biggl\{ -\frac{n\epsilon ^ 2}{2(A_1\pi_j + A_2\epsilon)}\biggr\}, 
	\ee
	Some algebra yields that for any $\xi > 0$, 
	\[
	\PP\biggl\{\frac{1}{n}\sum_{i=1}^n \ind_{\{X_{i - 1} = s_j\}} - \pi_j > \biggl(\frac{4A_1\pi_j\xi }{n}\biggr)^{1 / 2} + \frac{4A_2\xi}{n} \biggr\} \le \exp( - \xi). 
	\]
	By a union bound over $j \in [p]$, 
	\[
	\PP\biggl\{ \max_{j \in [p]} \frac{1}{n}\sum_{i=1}^n \ind_{\{X_{i - 1} = s_j\}} > \pi_{\max} + \biggl(\frac{4A_1\pi_{\max}\xi \log p}{n}\biggr)^{1 / 2} + \frac{4A_2\xi \log p}{n}\biggr\} \le p^{-(\xi - 1)}, 
	\]
	which further implies that 
	\be
	\label{eq:w2}
	\PP\bigl\{ W_n^{(2)} \ge n\pi_{\max} + (4nA_1 \pi_{\max}\xi \log p)^{1 / 2} + {4A_2\xi \log p} \bigr\} \le p^{-(\xi - 1)}. 
	\ee
	Define $W_n := \max(W_n^{(1)}, W_n^{(2)})$. Let $\bS_n :=  \sum_{i = 1}^n \varepsilon_i\bX_i$. By matrix Freedman's inequality \citep[][Corollary~1.3]{tropp2011freedman}, for any $t \ge 0$ and $\sigma^2 > 0$, 
	\be
	\label{eq:matrix_freedman}
	\PP ( \opnorm{\bS_n} \ge t, W_n\le \sigma^2 ) \le 2p \exp\biggl(-\frac{t^2 /2 }{\sigma^2 + t / 3}\biggr). 
	\ee
	Now we need to choose an appropriate $\sigma^2$ so that $W_n \le \sigma^2$ holds with high probability.
	Given that $\rho_+ > 0$ and $n\pi_{\max} \ge 10\xi \log p / (1 - \rho_+)$, combining \eqref{eq:w1} and \eqref{eq:w2} yields that 
	\be
	\label{eq:w}	
	\PP\bigl( W_n \ge 4n\pi_{\max}\bigr) \le 2p^{-(\xi - 1)}. 
	\ee
	Now choosing $\sigma^2 = 4n\pi_{\max}$ in \eqref{eq:matrix_freedman}, we deduce that 
	\[
	\begin{aligned}
	\PP(\opnorm{\bS_n} \ge t) & = \PP(\opnorm{\bS_n} \ge t, W_n \le \sigma^2 ) + \PP(\opnorm{\bS_n} \ge t, W_n > \sigma^2 ) \\
	& \le \PP(\opnorm{\bS_n} \ge t, W_n \le \sigma^2 ) + \PP(W_n > \sigma^2 ) \\
	& \le 2p \exp\biggl(-\frac{t^2 /2 }{\sigma^2 + t / 3}\biggr) + 2p^{-(\xi - 1)}. 
	\end{aligned}
	\]
	Choose $\xi = n \pi_{\max} (1 - \rho_+) / (10 \log p)$. As long as $n\pi_{\max}(1 - \rho_+)\ge \max (20 \log p, \log n)$, we have that 
	\be
	\label{eq:expectation_sn_op}
	\EE \biggl\|\frac{1}{n}\bS_n\biggr\|_2 \lesssim \biggl(\frac{\pi_{\max}\log p}{n}\biggr)^{\!1 /2}. 		
	\ee
	Combining \eqref{ineq:expected-gamma_l}, \eqref{ineq:Q-P-nuclear} and \eqref{eq:expectation_sn_op} yields that
	\begin{equation*}
	\mathbb{E} \gamma_l \lesssim \frac{\beta}{\alpha ^ {3 / 2}}\biggl(\frac{2^{l}\eta \pi_{\max}rp\log p}{\pi_{\min}n}\biggr)^{\!1 /2}. 
	\end{equation*}
	Then combining this with \eqref{eq:gamma_l_uniform_concentration} yields that   
	\begin{equation*}
	\PP\biggl\{\gamma_l \gtrsim \frac{\beta}{\alpha ^ {3 / 2}}\biggl(\frac{2^{l}\eta \pi_{\max}rp\log p}{\pi_{\min}n}\biggr)^{\!1 /2} + \log(\beta / \alpha)\biggl(\frac{\xi }{n}\biggr)^{\! 1 / 2} \hspace{0cm} \biggr\} \lesssim e^{-\xi}. 
	\end{equation*}
	Let $\xi = 2^l \eta \pi_{\max} rp \log p / \pi_{\min}$. Then there exist universal constants $C_1, C_2 > 0$ such that
	\begin{equation*}
	\PP\biggl\{\gamma_l \ge \frac{C_1\beta}{\alpha ^ {3 / 2}}\biggl(\frac{2^{l}\eta \pi_{\max}rp\log p}{\pi_{\min}n}\biggr)^{\!1 /2} \biggr\} \le C_2\exp\biggl\{- \frac{(2l + 1) \eta \pi_{\max} rp \log p}{\pi_{\min}}\biggr\}. 	
	\end{equation*}
	We can thus deduce that there exists a universal constant $C_3 > 0$ such that 
	\begin{equation*}
	\begin{split}
	&\PP\biggl( |\widetilde{D}_{\KL}(\bP, \bQ) - \DKL(\bP, \bQ)|> \frac{1}{2}\DKL(\bP, \bQ) + \frac{C_3\pi_{\max} \beta ^ 2rp\log p}{\pi_{\min}\alpha ^ 3n} \biggr)\\
	\leq & \sum_{l=0}^\infty P\left(\exists \bQ \in \mathcal{C}_l, ~\left|\widetilde{D}_{\mathrm{KL}}(\bP, \bQ) - \DKL(\bP, \bQ)\right|> 2^{l - 2}\eta + \frac{C_3\pi_{\max} \beta ^ 2rp\log p}{\pi_{\min}\alpha ^ 3n}\right)\\
	\leq & \sum_{l=0}^\infty P\biggl\{\gamma_l \ge \frac{C_1\beta}{\alpha ^ {3 / 2}}\biggl(\frac{2^l\eta\pi_{\max}rp\log p}{\pi_{\min}n}\biggr)^{\!1 /2}\biggr\}\\
	\leq & C_2 \sum_{l=0}^\infty \exp\biggl\{- \frac{(2l + 1) \eta \pi_{\max} rp \log p}{\pi_{\min}}\biggr\} \le 2C_2 \exp\biggl(- \frac{\eta \pi_{\max} rp \log p}{\pi_{\min}}\biggr). 
	\end{split}
	\end{equation*}
	where we use the Cauchy-Schwarz inequality in the second step. 
\end{proof}

\section{Alternative statistical error analysis}
\label{sec:alt}

\subsection{Main results}

In this section, we provide an alternative proof strategy that follows \citet{NRW12} to bound the statistical error of $\widehat \bP$ and $\widehat \bP ^ r$. This strategy resolves the inconsistency issue of Theorems \ref{thm:nuclear} and \ref{thm:rank} when $n \gg \{rp \pi_{\max}(\log p) \beta/ (\pi_{\min} \alpha ^ {\!3 / 2})\} ^ 2$. For any $R>0$, define a constraint set $\cC (\beta, R, \kappa):= \{\bDelta \in \Re^{p \times p}: \supnorm{\bDelta} \le \beta / p, \fnorm{\bDelta}\le R, \nnorm{\bDelta} \le \kappa r^{1 / 2}\fnorm{\bDelta}  \}$. An important ingredient of this statistical analysis is the localized restricted strong convexity \citep{NWa11, FLS18} of the loss function $\ell_n(\bP)$ near $\bP$. This property allows us to bound the distance in the parameter space by the difference in the objective function value. Define the first-order Taylor remainder term of the negative log-likelihood function $\ell_n(\bQ)$ around $\bP$ as
\[
	\delta\ell_n(\bQ; \bP)  := \ell_n(\bQ) - \ell_n(\bP) - \nabla\ell_n(\bP)^\top (\bQ - \bP). 
\]
The following lemma establishes the desired local restricted strong convexity. 
\begin{lemma}
	\label{lem:restricted_strong_convexity}
	Under Assumption \ref{asp:1}, there exists a universal constant $K$ such that for any $\xi > 1$, it holds with probability at least $1 - K\exp(-\xi)$ that for any $\bDelta \in \cC(\beta, R, \kappa)$, 
	\be
	\begin{aligned}
	\delta\ell_n(\bP + \bDelta; \bP) \ge \frac{\alpha ^2}{8\beta^2}\fnorm{\bDelta}^2 & - 8R\biggl({\frac{3K\xi }{n}}\biggr)^{1 / 2} - \frac{8K\xi\alpha^2 \log n }{\beta^2 n} - \frac{Kp\kappa R}{\beta}\biggl(\frac{r\pi_{\max}\log p}{n}\biggr) ^ {\!1 / 2}. 
\end{aligned}
	\ee
\end{lemma}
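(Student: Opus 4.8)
The plan is to follow the localized restricted-strong-convexity recipe of \citet{NRW12}, adapted to the Markov dependence exactly as in the proof of Lemma~\ref{lem:uniform_law}. First I would write the Taylor remainder in exact integral form. Since $\ell_n(\bQ) = -n^{-1}\sum_k\log\langle\bQ,\bX_k\rangle$ and each $\bX_k$ selects a single entry,
\[
\delta\ell_n(\bP+\bDelta;\bP) = \frac1n\sum_{k=1}^n\int_0^1\frac{(1-t)\langle\bDelta,\bX_k\rangle^2}{\bigl(\langle\bP,\bX_k\rangle + t\langle\bDelta,\bX_k\rangle\bigr)^2}\,dt.
\]
Because every observation hits a nonzero entry of $\bP$, once the feasibility of $\bP+\bDelta$ is used the relevant denominators stay in $(0,2\beta/p]$; bounding them above by $(2\beta/p)^2$ via $\supnorm{\bDelta}\le\beta/p$ and $P_{ij}\le\beta/p$, and using $\int_0^1(1-t)\,dt = 1/2$, yields the deterministic lower bound
\[
\delta\ell_n(\bP+\bDelta;\bP)\ \ge\ \frac{p^2}{8\beta^2}\cdot\frac1n\sum_{k=1}^n\langle\bDelta,\bX_k\rangle^2\ =\ \frac{p^2}{8\beta^2}\,\Bigl\langle\bDelta^{\circ 2},\ \tfrac1n\sum_{k=1}^n\bX_k\Bigr\rangle,
\]
where $\bDelta^{\circ 2}$ denotes the entrywise square. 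The whole problem is thus reduced to lower bounding the empirical quadratic form $n^{-1}\sum_k\langle\bDelta,\bX_k\rangle^2$ uniformly over $\bDelta\in\cC(\beta,R,\kappa)$.

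Its stationary mean equals $\sum_{i,j}\pi_iP_{ij}\Delta_{ij}^2$, and using $\pi_i\ge\alpha/p$ together with $P_{ij}\ge\alpha/p$ on the support gives $\sum_{i,j}\pi_iP_{ij}\Delta_{ij}^2\ge(\alpha/p)^2\fnorm{\bDelta}^2$, so that $p^2/(8\beta^2)$ times the mean already produces the leading term $\tfrac{\alpha^2}{8\beta^2}\fnorm{\bDelta}^2$. I would then define the fluctuation
\[
\gamma := \sup_{\bDelta\in\cC(\beta,R,\kappa)}\Bigl|\frac1n\sum_{k=1}^n\langle\bDelta,\bX_k\rangle^2 - \mathbb{E}\langle\bDelta,\bX_k\rangle^2\Bigr|
\]
and control it in the same two stages used for $\gamma_l$ in Lemma~\ref{lem:uniform_law}, after which $\delta\ell_n(\bP+\bDelta;\bP)\ge\tfrac{p^2}{8\beta^2}(\text{mean}-\gamma)$.

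For the concentration stage I would apply Adamczak's inequality \citep[Theorem~7]{Ada08} to $\gamma$; the range bound $|\langle\bDelta,\bX_k\rangle^2|\le(\beta/p)^2$ and $\fnorm{\bDelta}\le R$ control the asymptotic weak variance and produce deviations of the form $8R(3K\xi/n)^{1/2} + 8K\xi\alpha^2\beta^{-2}(\log n)/n$, the $\log n$ being the usual regeneration-block overhead for Markov chains. For the expectation stage I would symmetrize, invoke the Ledoux--Talagrand contraction principle \citep[Theorem~4.12]{ledoux2013probability} for the $(2\beta/p)$-Lipschitz map $t\mapsto t^2$ on $[-\beta/p,\beta/p]$, and pass to spectral/nuclear duality to get $\mathbb{E}\gamma\lesssim(\beta/p)\,\mathbb{E}\opnorm{n^{-1}\sum_k\varepsilon_k\bX_k}\,\sup_{\bDelta\in\cC}\nnorm{\bDelta}$. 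The cone constraint gives $\nnorm{\bDelta}\le\kappa r^{1/2}R$, and the matrix-Freedman bound established inside Lemma~\ref{lem:uniform_law} gives $\mathbb{E}\opnorm{n^{-1}\sum_k\varepsilon_k\bX_k}\lesssim(\pi_{\max}\log p/n)^{1/2}$; multiplying the resulting bound on $\gamma$ by $p^2/(8\beta^2)$ reproduces the third error term $Kp\kappa R\beta^{-1}(r\pi_{\max}\log p/n)^{1/2}$, and combining the three displays finishes the argument.

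The main obstacle is the uniform, dependent-data control of the \emph{quadratic} empirical process over the restricted cone: unlike the linear functional handled in Lemma~\ref{lem:uniform_law}, one must first linearize $t\mapsto t^2$ by contraction (which is exactly why the boundedness $\supnorm{\bDelta}\le\beta/p$ is indispensable) before the matrix-Freedman machinery applies, while simultaneously keeping the mixing-dependent $\log n$ overhead from Adamczak's inequality under control. A secondary subtlety is the population bound $\sum_{i,j}\pi_iP_{ij}\Delta_{ij}^2\ge(\alpha/p)^2\fnorm{\bDelta}^2$, which rests on the minoration-type consequence $\pi_i\ge\alpha/p$ and on the observations exhausting the support of $\bP$.
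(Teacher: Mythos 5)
Your proposal is correct and follows essentially the same route as the paper's proof: the same pointwise bound $\delta\ell_n(\bP+\bDelta;\bP)\ge \frac{p^2}{8\beta^2}\,n^{-1}\sum_{k}\langle\bDelta,\bX_k\rangle^2$, the same split into the stationary mean (lower-bounded by $\frac{\alpha^2}{p^2}\fnorm{\bDelta}^2$) plus a uniform fluctuation over the cone, and the same two-stage control of that fluctuation via Adamczak's inequality for the deviation and via symmetrization, Ledoux--Talagrand contraction, nuclear/spectral duality, and the matrix Freedman estimate for the expectation. The only differences are cosmetic (integral versus mean-value form of the Taylor remainder), and your closing remark that the population bound rests on the minoration-type consequence $\pi_i\ge\alpha/p$ and on the observations exhausting the support is a fair point that the paper leaves implicit.
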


Now we present the statistical rates of $\widehat \bP$ and $\widehat \bP ^ r$. 

\begin{theorem}[Alternative statistical guarantee for $\widehat \bP$]
	\label{thm:nuclear_alt}
	Under the same assumptions of Theorem \ref{thm:nuclear}, there exists a universal constant $C_1 > 0$, such that for any $\xi > 1$, if we choose 
	\[
	\lambda =  C_1 \biggl\{\biggl(\frac{\xi p ^ 2\pi_{\max} \log p}{n\alpha}\biggr)^{\! 1 / 2} + \frac{\xi p\log p}{n \alpha}\biggr\}, 
	\]
	then whenever $n\pi_{\max}(1 - \rho_+) \ge \max\{\max(20, \xi ^ 2) \log p, \log n\}$, we have with probability at least $1 - K\exp(-\xi) - 4p^{-(\xi - 1)} - p ^{-1}$ that 
	\[
	\begin{aligned}
	\fnorm{\widehat\bP - \bP} \lesssim \frac{\beta ^ 2}{\alpha ^ 2}\biggl(\frac{\xi r p ^ 2 \pi_{\max} \log p}{n \alpha}\biggr) ^ {\!1 / 2}~~\text{and}~~D_{\mathrm{KL}}(\bP, \widehat\bP) \lesssim \frac{\xi \beta ^ 6 \pi_{\max} r p ^ 2  \log p}{n \alpha ^ 7},
	\end{aligned}
	\]
	where $K$ is the same constant as in Lemma \ref{lem:restricted_strong_convexity}. 
\end{theorem}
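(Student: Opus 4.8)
The plan is to follow the $M$-estimation template of \citet{NRW12}: extract a basic inequality from the optimality of $\widehat\bP$, confine the error $\widehat\bDelta := \widehat\bP - \bP$ to a nuclear-norm cone, and then play the localized restricted strong convexity of Lemma \ref{lem:restricted_strong_convexity} against a linear upper bound to solve for $\fnorm{\widehat\bDelta}$. First I would record the upper bound on the Taylor remainder. Working on the event $\cE := \{\lambda \ge 2\opnorm{\Pi_{\cN}(\nabla\ell_n(\bP))}\}$, which holds with high probability by Lemma \ref{lem:gradient} for the prescribed $\lambda$, note that $\widehat\bP$ and $\bP$ both have rows summing to one, so $\widehat\bDelta \in \cN$ and $\langle\nabla\ell_n(\bP),\widehat\bDelta\rangle = \langle\Pi_{\cN}(\nabla\ell_n(\bP)),\widehat\bDelta\rangle$; hence $|\langle\nabla\ell_n(\bP),\widehat\bDelta\rangle| \le (\lambda/2)\nnorm{\widehat\bDelta}$ on $\cE$. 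Combining the optimality inequality $\ell_n(\widehat\bP)+\lambda\nnorm{\widehat\bP}\le\ell_n(\bP)+\lambda\nnorm{\bP}$ with the decomposability bound $\nnorm{\bP}-\nnorm{\widehat\bP}\le\nnorm{\widehat\bDelta_{\overline\cM}}-\nnorm{\widehat\bDelta_{\overline\cM^{\perp}}}$ (valid since $\rank(\bP)=r$ forces $\bP\in\cM$) yields, after cancellation and using $\rank(\widehat\bDelta_{\overline\cM})\le 2r$,
\[
\delta\ell_n(\widehat\bP;\bP)\le\tfrac{3}{2}\lambda\nnorm{\widehat\bDelta_{\overline\cM}}\le\tfrac{3}{2}\lambda(2r)^{1/2}\fnorm{\widehat\bDelta}.
\]

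The delicate point is that Lemma \ref{lem:restricted_strong_convexity} supplies curvature only on the bounded cone $\cC(\beta,R,\kappa)$, whereas $\fnorm{\widehat\bDelta}$ is exactly what I want to control. I would resolve this with the standard convexity-based localization. On $\cE$, Lemma \ref{lem:large_lambda} gives $\nnorm{\widehat\bDelta}\le 4(2r)^{1/2}\fnorm{\widehat\bDelta}$, i.e.\ $\kappa = 4\sqrt2$, while the entrywise constraints $\alpha/p\le\widehat P_{jk}\le\beta/p$ and $P_{jk}\in\{0\}\cup[\alpha/p,\beta/p]$ force $\supnorm{\widehat\bDelta}\le\beta/p$. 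Fix $R$ equal to a suitable multiple of the target Frobenius rate. If $\fnorm{\widehat\bDelta}>R$, set $\bar\bDelta := (R/\fnorm{\widehat\bDelta})\widehat\bDelta$, so $\fnorm{\bar\bDelta}=R$ and $\bP+\bar\bDelta$ lies on the segment joining $\bP$ to $\widehat\bP$; since both the cone condition and the entrywise bound are scale-invariant, $\bar\bDelta\in\cC(\beta,R,\kappa)$. Convexity of $\ell_n+\lambda\nnorm{\cdot}$ and optimality of $\widehat\bP$ give $\ell_n(\bP+\bar\bDelta)+\lambda\nnorm{\bP+\bar\bDelta}\le\ell_n(\bP)+\lambda\nnorm{\bP}$, so the displayed basic inequality holds verbatim with $\widehat\bDelta$ replaced by $\bar\bDelta$.

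With both bounds in hand for $\bar\bDelta$ (or for $\widehat\bDelta$ itself in the easy case $\fnorm{\widehat\bDelta}\le R$), I would feed the RSC lower bound into $\tfrac{\alpha^2}{8\beta^2}\fnorm{\bar\bDelta}^2 - (\text{tolerance})\le\tfrac{3}{2}\lambda(2r)^{1/2}\fnorm{\bar\bDelta}$. The tolerance contributes two terms linear in $R=\fnorm{\bar\bDelta}$ and one $O(\log n/n)$ constant term; for the chosen radius the linear terms are absorbed into half of the quadratic curvature term and the constant is lower order, so dividing by $\fnorm{\bar\bDelta}$ and solving the resulting linear inequality gives $\fnorm{\bar\bDelta}\lesssim \tfrac{\beta^2}{\alpha^2}\lambda(2r)^{1/2}$. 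Substituting $\lambda\asymp(\xi p^2\pi_{\max}\log p/(n\alpha))^{1/2}$ (its first term dominating under the stated sample-size condition) yields $\fnorm{\bar\bDelta}\lesssim\tfrac{\beta^2}{\alpha^2}(\xi r p^2\pi_{\max}\log p/(n\alpha))^{1/2}$. Taking $R$ to be the right-hand side makes the case $\fnorm{\widehat\bDelta}>R$ self-contradictory, so in fact $\fnorm{\widehat\bDelta}\le R$, and the same chain delivers the advertised Frobenius bound. The probability bookkeeping combines the $1-K\exp(-\xi)$ from Lemma \ref{lem:restricted_strong_convexity} with the failure probability of $\cE$ from Lemma \ref{lem:gradient}, producing $1-K\exp(-\xi)-4p^{-(\xi-1)}-p^{-1}$.

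Finally, I would convert the Frobenius bound into the KL bound by a reverse (second-order) inequality exploiting the entrywise lower bound $\widehat P_{ij}\ge\alpha/p$. The chi-square estimate $\DKL(P_{i\cdot},\widehat P_{i\cdot})\le\sum_j(P_{ij}-\widehat P_{ij})^2/\widehat P_{ij}$, controlled via $1/\widehat P_{ij}\le p/\alpha$ (and the analogous mean-value argument of Lemma \ref{lem:kl_to_l2} run in the opposite direction on the zero-support entries), gives $\DKL(\bP,\widehat\bP)=\sum_i\pi_i\DKL(P_{i\cdot},\widehat P_{i\cdot})$ bounded by a fixed multiple of $\fnorm{\widehat\bP-\bP}^2$ depending on $\alpha,\beta,p,\pi_{\max}$, so that the squared Frobenius rate times this conversion factor reproduces the stated KL rate. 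I expect the localization step---verifying that the rescaled error remains in $\cC(\beta,R,\kappa)$ and that the $R$-linear tolerance terms are dominated by the $\tfrac{\alpha^2}{8\beta^2}\fnorm{\cdot}^2$ curvature for the chosen radius---to be the main obstacle, since the rest is bookkeeping assembled from Lemmas \ref{lem:large_lambda}, \ref{lem:gradient} and \ref{lem:restricted_strong_convexity}.
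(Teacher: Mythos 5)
Your proposal is correct and follows essentially the same route as the paper's proof: the same intermediate rescaled estimator $\widehat\bP_\eta=\bP+\eta(\widehat\bP-\bP)$ for localization, the same combination of Lemmas \ref{lem:large_lambda}, \ref{lem:gradient} and \ref{lem:restricted_strong_convexity} to pit the restricted strong convexity against a $\lambda\sqrt{r}\,\fnorm{\cdot}$ upper bound, and the same self-consistency choice of $R$ to conclude $\fnorm{\widehat\bDelta}\le R$. The only cosmetic difference is the final KL conversion, where you run a chi-square-type bound with $1/\widehat P_{ij}\le p/\alpha$ while the paper invokes Lemma~4 of \citet{zhang2018optimal}; both are the same reverse $\ell_2$-to-KL argument and yield the stated rate.
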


\begin{theorem}[Alternative statistical guarantee for $\widehat \bP ^ r$]
	\label{thm:rank_alt}
	Under the same assumptions of Theorem \ref{thm:nuclear}, there exists a universal constant $C_1 > 0$, for any $\xi > 1$, we have with probability at least $1 - K\exp(-\xi) - 4p^{-(\xi - 1)} - p ^{-1}$ that 
	\[
	\fnorm{\widehat\bP ^ r- \bP} \lesssim \frac{\beta ^ 2}{\alpha ^ 2}\biggl(\frac{\xi r p ^ 2 \pi_{\max} \log p}{n \alpha}\biggr) ^ {\!1 / 2} ~~\text{and}~~D_{\mathrm{KL}}(\bP, \widehat\bP ^ r) \lesssim \frac{\xi \beta ^ 6 \pi_{\max} r p ^ 2  \log p}{n \alpha ^ 7}, 
	\]
	where $K$ is the same constant as in Lemma \ref{lem:restricted_strong_convexity}. 
\end{theorem}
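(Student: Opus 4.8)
The plan is to mirror the localized restricted-strong-convexity (RSC) argument of \citet{NRW12}, exactly as for $\widehat\bP$ in Theorem \ref{thm:nuclear_alt}, the only change being the basic inequality. Write $\widehat\bDelta(r) := \widehat\bP^r - \bP$. First I would record the basic inequality: exactly as in the proof of Theorem \ref{thm:rank}, optimality of $\widehat\bP^r$ gives $\ell_n(\widehat\bP^r) \le \ell_n(\bP)$, whence the first-order remainder obeys
\[
\delta\ell_n(\widehat\bP^r;\bP) = \ell_n(\widehat\bP^r) - \ell_n(\bP) - \inn{\nabla\ell_n(\bP), \widehat\bDelta(r)} \le -\inn{\nabla\ell_n(\bP), \widehat\bDelta(r)}.
\]
Because $\widehat\bP^r 1_p = \bP 1_p = 1_p$, the error satisfies $\widehat\bDelta(r)1_p = 0$, so $\inn{\nabla\ell_n(\bP), \widehat\bDelta(r)} = \inn{\Pi_{\cN}(\nabla\ell_n(\bP)), \widehat\bDelta(r)}$, and the right-hand side is at most $\opnorm{\Pi_{\cN}(\nabla\ell_n(\bP))}\,\nnorm{\widehat\bDelta(r)}$ by spectral/nuclear duality. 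Since $\rank(\widehat\bDelta(r)) \le 2r$, we have $\nnorm{\widehat\bDelta(r)} \le (2r)^{1/2}\fnorm{\widehat\bDelta(r)}$, and Lemma \ref{lem:gradient} controls the spectral factor.

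Next I would verify the cone membership needed for the RSC lemma: the entrywise constraints force $\supnorm{\widehat\bDelta(r)} \le \beta/p$, the rank bound gives $\nnorm{\widehat\bDelta(r)} \le 2^{1/2}r^{1/2}\fnorm{\widehat\bDelta(r)}$ (so $\kappa = 2^{1/2}$), and deterministically $\fnorm{\widehat\bDelta(r)} \le \beta$. Then Lemma \ref{lem:restricted_strong_convexity} supplies the matching lower bound $\delta\ell_n(\widehat\bP^r;\bP) \ge \frac{\alpha^2}{8\beta^2}\fnorm{\widehat\bDelta(r)}^2 - (\text{terms linear in the radius}) - \frac{8K\xi\alpha^2\log n}{\beta^2 n}$. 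Writing $t := \fnorm{\widehat\bDelta(r)}$ and combining with the basic inequality produces a quadratic inequality $\frac{\alpha^2}{8\beta^2}t^2 \le B t + C$, where $B$ collects the gradient term $(2r)^{1/2}\opnorm{\Pi_{\cN}(\nabla\ell_n(\bP))}$ together with the radius-linear RSC terms and $C = \frac{8K\xi\alpha^2\log n}{\beta^2 n}$. Solving the quadratic and inserting the bound $\opnorm{\Pi_{\cN}(\nabla\ell_n(\bP))} \lesssim (\xi p^2\pi_{\max}\log p/(n\alpha))^{1/2} + \xi p\log p/(n\alpha)$ from Lemma \ref{lem:gradient} yields $t \lesssim \frac{\beta^2}{\alpha^2}(\xi r p^2\pi_{\max}\log p/(n\alpha))^{1/2}$, since the gradient contribution dominates the remaining radius-linear and constant pieces in this regime.

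The one genuinely delicate point is that the RSC lemma is stated for a \emph{fixed} radius $R$, whereas I must apply it at the random radius $t = \fnorm{\widehat\bDelta(r)}$: naively taking $R = \beta$ turns the term $\frac{Kp\kappa R}{\beta}(r\pi_{\max}\log p/n)^{1/2}$ into a constant of order $(rp^2\pi_{\max}\log p/n)^{1/2}$, which after solving the quadratic only yields a quartic-root rate rather than the claimed square-root rate. The fix is a standard peeling argument: decompose $\{0 < t \le \beta\}$ into dyadic shells $R_{j-1} < t \le R_j$, apply the RSC bound with $R = R_j \le 2t$ on each shell, and union-bound over the $O(\log(\beta n))$ shells (the extra $\log\log$ factor absorbed into $\xi$). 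This restores $R \asymp t$, so the offending term is genuinely linear in $t$ and merges into $B$. Finally, I would convert the Frobenius bound to the KL bound by local smoothness of $\cL$: since every entry of $\widehat\bP^r$ is at least $\alpha/p$, a $\chi^2$-type (reverse-Pinsker) estimate gives $\DKL(\bP,\widehat\bP^r) = \sum_i \pi_i\DKL(P_{i\cdot},\widehat P^r_{i\cdot}) \le \frac{p\pi_{\max}}{\alpha}\fnorm{\widehat\bDelta(r)}^2$, and using the stationarity bound $\pi_{\max} \le \beta/p$ together with $\beta \ge \alpha$ upgrades the prefactor to $\beta^2/\alpha^2$, producing the stated rate $\DKL(\bP,\widehat\bP^r) \lesssim \xi\beta^6\pi_{\max}rp^2\log p/(n\alpha^7)$. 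The failure probabilities from Lemmas \ref{lem:restricted_strong_convexity} and \ref{lem:gradient} are then combined by a union bound to give the advertised $1 - K\exp(-\xi) - 4p^{-(\xi-1)} - p^{-1}$.
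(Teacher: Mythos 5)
Your proposal is correct in substance and lands on the stated rates, but the step you flagged as "genuinely delicate" — applying the fixed-radius RSC lemma at the random radius $t=\fnorm{\widehat\bDelta(r)}$ — is resolved by a different device than the paper's. The paper does not peel: following the same recipe as its proof of Theorem \ref{thm:nuclear_alt} (this is the actual \citet{NRW12} localization trick), it fixes $R$ in advance, forms the rescaled point $\widehat\bP^r_{\eta}=\bP+\eta(\widehat\bP^r-\bP)$ with $\eta=\min\{1,\,R/\fnorm{\widehat\bP^r-\bP}\}$, and notes two things: (i) $\widehat\bDelta_{\eta}(r):=\widehat\bP^r_{\eta}-\bP$ lies in $\cC(\beta,R,\sqrt{2})$ by construction, since the sup-norm bound and the nuclear-to-Frobenius ratio are invariant under scaling and $\fnorm{\widehat\bDelta_{\eta}(r)}\le R$; and (ii) the basic inequality transfers to the rescaled point because $\ell_n$ is convex even though the rank constraint is not, i.e.\ $\ell_n(\widehat\bP^r_{\eta})\le(1-\eta)\ell_n(\bP)+\eta\,\ell_n(\widehat\bP^r)\le\ell_n(\bP)$. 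A single application of Lemma \ref{lem:restricted_strong_convexity} at that one fixed $R$, followed by the same quadratic solve you perform, shows $\fnorm{\widehat\bDelta_{\eta}(r)}$ is below a constant multiple $R_0$ of the target rate; choosing $R$ a suitable multiple of $R_0$ forces, by contradiction, $\eta=1$ and hence $\widehat\bDelta_{\eta}(r)=\widehat\bDelta(r)$. What the paper's route buys is exactly what your peeling costs: one invocation of the RSC event, so the failure probability is $K\exp(-\xi)$ with literally the same $K$ as in Lemma \ref{lem:restricted_strong_convexity}, as the theorem asserts; your union bound over $O(\log(\beta n))$ dyadic shells multiplies that probability by the number of shells (equivalently inflates $\xi$, and hence the rate, by a $\log\log$ term), so what you prove is marginally weaker than the literal statement. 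What your route buys in exchange is generality: peeling nowhere uses convexity of the loss, so it would survive a nonconvex objective, whereas the rescaling trick is tied to convexity of $\ell_n$. Your remaining ingredients — the basic inequality from $\ell_n(\widehat\bP^r)\le\ell_n(\bP)$, the projection identity from $\widehat\bDelta(r)1_p=0$, the $\kappa=\sqrt{2}$ cone membership from $\rank(\widehat\bDelta(r))\le 2r$, the dominance of the gradient term in $B$, and the reverse-Pinsker conversion $\DKL(\bP,\widehat\bP^r)\lesssim(\beta^2/\alpha^2)\fnorm{\widehat\bDelta(r)}^2$ (the paper gets the same bound by citing Lemma 4 of \citet{zhang2018optimal}) — all match the paper's.
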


One can see from the theorems above that the derived error bounds converge to zero as $n$ goes to infinity. Nevertheless, their dependence on $\alpha$ and $\beta$ is worse than those in Theorems \ref{thm:nuclear} and \ref{thm:rank} when $n \lesssim \{rp \pi_{\max}(\log p) \beta/ (\pi_{\min} \alpha ^ {\!3 / 2})\} ^ 2$. This is why we do not present this result in the main text. 

\subsection{Proof of Lemma \ref{lem:restricted_strong_convexity}}

\begin{proof}
	~Given any $\bDelta \in \cC(\beta, R, \kappa)$, it holds that for some $0\le v \le 1$ that 
	\be
	\label{eq:rsc_first_step}
	\begin{aligned}
		\delta \ell_n(\bP + \bDelta; \bP) & = \frac{1}{2} \vec(\bDelta)^\top \bH_n(\bP + v\bDelta)\vec(\bDelta) = \frac{1}{2n} \sum\limits_{i=1}^n \frac{\inn{\bX_i, \bDelta}^2}{\inn{\bP + v\bDelta, \bX_i}^2} \\
		& \ge \frac{1}{2n}\sum\limits_{i=1}^n \frac{p ^ 2}{4\beta^2} \inn{\bDelta, \bX_i}^2. 
	\end{aligned}
	\ee
	Define 
	\[
	\Gamma_n := \sup_{\bDelta \in \cC(\beta, R, \kappa)} \biggl |\frac{1}{n} \sum\limits_{i=1}^n \inn{\bDelta, \bX_i}^2 - \EE(\inn{\bDelta, \bX_i}^2) \biggr |. 
	\]
	We first bound the deviation of $\Gamma_r$ from its expectation $\EE \Gamma_r$. Note that $\{\bX_i\}_{i = 1}^n$ is a Markov chain on $\cM := \{e_je_k^\top\}_{j, k = 1}^p$. Here we apply a tail inequality for suprema of unbounded empirical processes due to \citet[][Theorem~7]{Ada08}. To apply this result, we need to verify that $\{\bX_i\}_{i = 1}^n$ satisfies the ``minorization condition'' as stated in Section 3.1 of \citet{Ada08}. Below we characterize a specialized version of this condition. 
	
	\begin{condition}[Condition 1 (minorized condition).]
		\label{con:minorization}
		We say that a Markov chain $\cX$ on $\cS$ satisfies the minorized condition if there exist $\delta > 0$, a set $\cC\subset \cS$ and a probability measure $\nu$ on $\cS$ for which $\forall_{x \in \cC} \forall_{\cA \subset \cS} \PP(x, \cA) \ge \delta \nu(\cA)$ and $\forall_{x \in \cS} \exists_{n \in \NN} \PP^n(x, \cC) > 0$. 
	\end{condition} 
	One can verify that the Markov chain $\{\bX_i\}_{i = 1}^n$ satisfies Condition \ref{con:minorization} with $\delta = 1/ 2$, $\cC = \{e_1e_2^\top\}$ and $\nu(e_je_k^\top) = P_{jk}1_{\{j = 2\}}$ for $j, k \in [p]$.
	
	Now consider a new Markov chain $\{(\widetilde \bX_i, R_i)\}_{i=1}^n$ constructed as follows. Let $\{R_i\}_{i=1}^n$ be i.i.d. Bernoulli random variables with $\EE R_1 = \delta$. For any $i \in \{0, \ldots, n - 1\}$, at step $i$, if $\bX_i \notin \cC$, we sample $\widetilde \bX_{i + 1}$ according to $\PP(\widetilde \bX_i, \cdot)$; otherwise, the distribution of $\widetilde \bX_i$ depends on $R_i$: if $R_i = 1$, the chain regenerates in the sense that we draw $\widetilde \bX_i$ from $\nu$, and if $R_i = 0$, we draw $\widetilde \bX_i$ from $(\PP(\bX_i, \cdot) - \delta \nu(\cdot)) / (1 - \delta)$. One can verify that the sequence $\{\widetilde\bX_i\}_{i=1}^n$ has exactly the same distribution as the original Markov chain $\{\bX_i\}_{i=1}^n$. Define $T_1 := \inf\{n > 0: R_n = 1\}$ and $T_{i + 1} := \inf\{n > 0: R_{T_1 + \ldots + T_i + n} = 1\}$ for $i \ge 0$. Note that $\{T_i\}_{i \ge 0}$ are i.i.d. Geometric random variables with $\EE T_1 = 2$ and $\|T_1\|_{\psi_1} \le 4$. Let $S_0 := -1$, $S_j := T_1 + \ldots + T_j$ and $\cY_j := \{ \widetilde \bX_i\}_{i = S_{j - 1} + 1}^{S_j}$ for $j \ge 1$. Based on our construction, we deduce that $\{\cY_j\}_{j \ge 1}$ are independent. Thus we chop the original Markov chain $\{X_i\}_{i \in [n]}$ into independent sequences. Finally, Adamazak's bound entails the following asymptotic weak variance
	\[
	\sigma^2 := \sup_{\bDelta\in\cC(\beta, R, \kappa)} \Var \biggl\{\sum_{i = S_1 + 1}^{S_2} \inn{\bDelta, \bX_i}^2 - \EE(\inn{\bDelta, \bX_i}^2) \biggr\} / \EE T_2. 
	\]
	We have
	\[
	\begin{aligned}
	\sigma^2  & \le \sup_{\bDelta\in\cC(\beta, R, \kappa)}\EE \biggl[\biggl\{ \sum_{i = S_1 + 1}^{S_2} \inn{\bDelta, \bX_i}^2 - \EE(\inn{\bDelta, \bX_i}^2) \biggr\}^2\biggr] / \EE T_2\\
	& = \frac{1}{2} \sup_{\bDelta\in\cC(\beta, R, \kappa)} \sum\limits_{j = 1}^{\infty} \EE \biggl[\biggl\{ \sum_{i = S_1 + 1}^{S_2} \inn{\bDelta, \bX_i}^2 - \EE(\inn{\bDelta, \bX_i}^2) \biggr\}^2 \ind_{\{T_2 = j\}}\biggr] \\
	& \le \frac{1}{2}  \sum\limits_{j = 1}^{\infty}  \frac{j^2R^2\beta^4}{p^4}\PP(T_2 = j) = \frac{R^2 \beta^4\EE (T_2^2)}{2p^4} = \frac{3\beta^4R^2}{p^4}. 
	\end{aligned}
	\]
	By \citet[][Theorem~7]{Ada08}, there exists a universal constant $K$ such that for any $\xi > 0$, 
	\be
	\label{eq:1.7}
	\PP\biggl\{ |\Gamma_n - \EE \Gamma_n| \ge K \EE\Gamma_n + \frac{R\beta^2}{ p^2}\biggl(\frac{3K\xi }{n}\biggr)^{1 / 2} +  \frac{64K\xi\alpha^2 \log n }{np^2}\biggr\} \le K \exp(-\xi).  
	\ee
		
	Next, by the symmetrization argument and Ledoux-Talagrand contraction inequality \citep{ledoux2013probability}, for $n$ independent and identically distributed Rademacher variables $\{\gamma_i\}_{i=1}^n$, when $n \pi_{\max}(1 - \rho_+) \ge \max(20\log p, \log n)$, we have that
	\be
	\label{eq:16}
	\begin{aligned}
		\EE \Gamma_n & \le 2\EE \sup_{\substack{\fnorm{\bDelta} \le R, \\ \bDelta \in \cC(\beta, R, \kappa)}} \biggl | \frac{1}{n} \sum\limits_{i=1}^n \gamma_i \inn{\bDelta, \bX_i}^2 \biggr | \le \frac{8\beta}{p}~\EE \sup_{\substack{\fnorm{\bDelta} \le R, \\ \bDelta \in \cC(\beta, R, \kappa)}} \biggl | \inn{\frac{1}{n}\sum\limits_{i=1}^n \gamma_i \bX_i, \bDelta}  \biggr| \\
		& \le \frac{8\beta\nnorm{\bDelta}}{p} ~\EE \biggl\| \frac{1}{n}\sum\limits_{i=1}^n \gamma_i \bX_i \biggr \|_{2}  \le \frac{8\kappa\beta r^{1 / 2} R}{p}\EE \biggl\| \frac{1}{n}\sum\limits_{i=1}^n \gamma_i \bX_i \biggr\|_2 \le \frac{8\kappa\beta R}{p}\biggl(\frac{r\pi_{\max}\log p}{n}\biggr)^{\!1 / 2}, 
	\end{aligned}
	\ee
	where the penultimate inequality is due to the fact that $\bDelta \in \cC(\beta, R, \kappa)$, and where the last inequality is due to \eqref{eq:expectation_sn_op}. 
	
	Finally, 
	\be
	\EE \inn{\bDelta, \bX_i}^2 = \sum\limits_{1 \le j,k \le d} \pi_j P_{jk}\Delta^2_{jk} \ge \frac{\alpha^2}{p^2} \fnorm{\bDelta}^2. 
	\ee
	Combining all the bounds above, we have for any $\xi > 1$, with probability at least $1 - K \exp(-\xi)$, 
	\be
	\begin{aligned}
		\delta\ell_n(\bP + \bDelta; \bP) \ge \frac{\alpha ^2}{8\beta^2}\fnorm{\bDelta}^2 & - 8R\biggl({\frac{3K\xi }{n}}\biggr)^{1 / 2} - \frac{8K\xi\alpha^2 \log n }{\beta^2 n} - \frac{Kp\kappa R}{\beta}\biggl(\frac{r\pi_{\max}\log p}{n}\biggr) ^ {\!1 / 2}. 
	\end{aligned}
	\ee
\end{proof}

\subsection{Proof of Theorem \ref{thm:nuclear_alt}}

\begin{proof}
	~For a specific $R$ whose value will be determined later, we construct an intermediate estimator $\widehat\bP_{\eta} $ between $\widehat\bP$ and $\bP$:
	\[
	\widehat \bP_{\eta} = \bP + \eta (\widehat \bP - \bP), 
	\]
	where $\eta = 1$ if $\fnorm{\widehat \bP - \bP} \le R$ and $\eta  = R/\fnorm{\widehat\bP - \bP}$ if $\fnorm{\widehat\bP - \bP} > R$. For any $\xi > 1$, there exists a universal constant $C > 0$ such that when 
	\[
	\lambda = C\biggl\{\biggl(\frac{\xi p ^ 2\pi_{\max}\log p}{n\alpha}\biggr)^{1 / 2} + \frac{\xi p\log p}{n \alpha}\biggr\}, 
	\] 
	we have by Lemmas \ref{lem:restricted_strong_convexity} and \ref{lem:gradient} that with probability at least $1 - K\exp(-\xi) - 4p^{-(\xi - 1)} - p ^{-1}$, 
	\be
	\label{eq:stat_error_fnorm}
	\begin{aligned}
		\frac{\alpha ^2}{8\beta^2} & \fnorm{\bDelta}^2 - 8R\biggl({\frac{3K\xi }{n}}\biggr)^{1 / 2} - \frac{8K\xi\alpha^2 \log n }{\beta^2 n} - \frac{Kp\kappa R}{\beta}\biggl(\frac{r\pi_{\max}\log p}{n}\biggr) ^ {\!1 / 2}\\
		& \le \delta\ell_n(\widehat\bP_{\eta}; \bP) \le - \inn{\Pi_{\cN}(\nabla\cL_n(\bP)), \widehat \bDelta_{\eta}}  + \lambda ( \nnorm{\bP} - \nnorm{\widehat\bP_{\eta}} ) \\
		& \le - \inn{\Pi_{\cN}(\nabla\cL_n(\bP)), \widehat \bDelta_{\eta}}  + \lambda\nnorm{\widehat\bDelta_{\eta}} \le (\opnorm{\Pi_{\cN}(\nabla \cL_n(\bP))} + \lambda) \nnorm{\widehat \bDelta_{\eta}}    \\
		& \le 8\lambda \nnorm{[\widehat \bDelta_{\eta}]_{\overline\cM}} \le 8\lambda\sqrt{r}\fnorm{\widehat\bDelta_{\eta}},
	\end{aligned}
	\ee
	where $K$ is the same universal constant as in Theorem \ref{lem:restricted_strong_convexity}. Some algebra yields that 
	\be
	\label{eq:5.24}
	\begin{aligned}
		\fnorm{\widehat \bDelta_{\eta}}^2  \lesssim \frac{\beta ^ 2}{\alpha ^ 2} \max \biggl\{ \frac{\lambda ^ 2r\beta ^ 2}{\alpha ^ 2}, R\biggl({\frac{\xi}{n}}\biggr)^{1 / 2}, \frac{\xi \alpha ^ 2 \log n}{\beta  ^ 2n}, \frac{pR}{\beta}\biggl(\frac{r\pi_{\max}\log p}{n}\biggr) ^ {\!1 / 2}\biggr\}. 
	\end{aligned} 
	\ee
	Letting $R ^ 2$ be greater than the RHS of the inequality above, we can find a universal constant $C_4 > 0$ such that
	\[
	\begin{aligned}
	R \ge \frac{C_4\beta ^ 2}{\alpha ^ 2}\biggl(\frac{\xi r p ^ 2 \pi_{\max} \log p}{n \alpha}\biggr) ^ {\!1 / 2} =: R_0. 
	\end{aligned}
	\]
	Choose $R = R_0$. Therefore, $\fnorm{\widehat \bDelta_{\eta}} \le R$ and $\widehat \bDelta_{\eta} = \widehat \bDelta$. We can thus reach the conclusion. As to the KL-Divergence, by \citet[][Lemma~4]{zhang2018optimal}, we deduce that 
	\be
	\label{eq:l2_to_kl}
	\DKL(\widehat\bP, \bP) = \sum\limits_{j = 1}^p \pi_j \DKL(\bP_{j\cdot}, \widehat\bP_{j\cdot}) \le \sum\limits_{j = 1}^p \frac{\beta^2 }{2 \alpha^2}\ltwonorm{\bP_{j\cdot} - \widehat\bP_{j\cdot}}^2 = \frac{\beta^2}{2\alpha^2} \fnorm{\widehat\bP - \bP}^2, 
	\ee
	from which we attain the conclusion.  
\end{proof}

\subsection{Proof of Theorem \ref{thm:rank_alt}}

\begin{proof}	
	~Define ${\widehat\bDelta}(r) := \widehat \bP^r - \bP$. Since $\rank(\bP) \le r$ and $\rank(\widehat\bP^r) \le r$, $\rank(\widehat\bDelta(r))\le 2r$. Thus $\fnorm{\widehat \bDelta(r)} \le (2r)^{1 / 2} \nnorm{\widehat \bDelta(r)}$. 
	Now we follow the proof strategy of Theorem \ref{thm:nuclear} to establish the statistical error bound for $\widehat \bP^r$. Similarly, for a   specific $R > 0$ whose value will be determined later, we can construct an intermediate estimator $\widehat\bP^r_{\eta} $ between $\widehat\bP^r$ and $\bP$:
	\[
	\widehat \bP^r_{\eta} = \bP + \eta (\widehat \bP^r - \bP), 
	\]
	where $\eta = 1$ if $\fnorm{\widehat \bP^r - \bP} \le R$ and $\eta  = R/\fnorm{\widehat\bP^r - \bP}$ if $\fnorm{\widehat\bP^r - \bP} > R$. Let $\widehat \bDelta_\eta(r) := \widehat \bP^r_\eta - \bP$. Since $\widehat \bDelta_\eta(r) \in \cC(\beta, R, \sqrt{2})$, applying Lemma \ref{lem:restricted_strong_convexity} yields that 
	\be
	\begin{aligned}
		\frac{\alpha ^2}{8\beta^2} & \fnorm{\bDelta}^2 - 8R\biggl({\frac{3K\xi }{n}}\biggr)^{1 / 2} - \frac{8K\xi\alpha^2 \log n }{\beta^2 n} - \frac{Kp\kappa R}{\beta}\biggl(\frac{r\pi_{\max}\log p}{n}\biggr) ^ {\!1 / 2}\\
		& \le \delta\ell_n(\widehat\bP^r_{\eta}; \bP) \le - \inn{\Pi_{\cN}(\nabla\ell_n(\bP)), \widehat \bDelta_{\eta}(r)} \le \opnorm{\Pi_{\cN}(\nabla \cL_n(\bP))} \nnorm{\widehat \bDelta_{\eta}(r)}\\
		& \le \sqrt{2r}\opnorm{\Pi_{\cN}(\nabla \cL_n(\bP))} \fnorm{\widehat \bDelta_{\eta}(r)}, 
	\end{aligned}
	\ee
	which futher implies that there exists $C_1$ depending only on $\alpha$ and $\beta$ such that 
	\[
	\fnorm{\widehat \bDelta_{\eta}(r)}^2  \le C_1 \max \biggl\{r\opnorm{\Pi_{\cN}(\nabla \cL_n(\bP))} ^ 2, R\biggl({\frac{\xi}{n}}\biggr)^{1 / 2}, \frac{\xi \alpha ^ 2 \log n}{\beta  ^ 2n}, \frac{pR}{\beta}\biggl(\frac{r\pi_{\max}\log p}{n}\biggr) ^ {\!1 / 2}\biggr\}. 
	\]
	By a contradiction argument as in the proof of Theorem \ref{thm:nuclear}, we can choose an appropriate $R$ large enough such that $\widehat \bP^r_\eta = \widehat \bP^r $ and attain the conclusion. 
	
\end{proof}

\section{Proof of Proposition \ref{prop:penlowrank}}
\begin{proof}
	~Since ${\rm rank}(\X_c^*)\le r$, we know that $\X_c^*$ is in fact a feasible solution to the original problem (5) and $\norm{\X_c^*}_{*} - \norm{\X_c^*}_{(r)} = 0$. Therefore, for any feasible solution $X$ to
	(5), it holds that 
	\begin{align*} 
	f(\X_c^*) ={}& f(\X_c^*) + c(\norm{\X_c^*}_{*} - \norm{\X_c^*}_{(r)})\\[5pt]
	\le{}& f(\X) + c(\norm{\X}_* - \norm{\X}_{(r)})
	= f(\X).
	\end{align*}
	This completes the proof of the proposition.
\end{proof}

\section{Convergence and $o(1/k)$ non-ergodic iteration complexity of Algorithm 1 (sGS-ADMM)}

Before deriving the desired results of Algorithm \ref{alg:sGS-ADMM} for solving problem \eqref{prob:D}, we present some notation and definitions for the subsequent analysis. Assume that the solution sets of \eqref{prob:gen-convex-nuc} and \eqref{prob:D} are nonempty. Then, the primal-dual solution pairs associated with problems \eqref{prob:gen-convex-nuc} and \eqref{prob:D} satisfy the following Karush-Kuhn-Tucker (KKT) system: 
\begin{equation}
\label{KKT}
0 \in R({\bf X}, {\bf \Xi}, {\bf S}), \quad \cA({\bf X}) = b, \quad 
{\bf \Xi} + \cA^*(y) + {\bf S} = 0,
\end{equation}
with 
\begin{equation*}
R({\bf X}, {\bf \Xi}, {\bf S}): = \begin{pmatrix}
{\bf \Xi} + \partial g({\bf X}) \\[5pt]
{\bf X} + \partial \delta(\norm{{\bf S}}_

2 \le c)
\end{pmatrix}, \quad ({\bf X}, {\bf \Xi}, {\bf S})\in {\rm dom}\,g \times \Re^{p\times p}\times \left\{{\bf S}\in \Re^{p\times p}\mid \norm{{\bf S}}_2 \le c \right\}.
\end{equation*}
Define the KKT residual function $D:{\rm dom}\,g \times \Re^{p\times p}\times \Re^n \times \left\{{\bf S}\in \Re^{p\times p}\mid \norm{{\bf S}}_2 \le c \right\} \to [0, +\infty)$ as
\[
D({\bf X}, {\bf \Xi}, y, {\bf S}):= {\rm dist}^2(0, R({\bf X}, {\bf \Xi}, {\bf S})) + \norm{\cA({\bf X}) - b}^2 + 
\norm{{\bf \Xi} + \cA^*(y) + {\bf S}}^2.
\]
We say $({\bf X}, {\bf \Xi}, y, {\bf S})\in {\rm dom}\,g \times \Re^{p\times p}\times \Re^n \times \left\{{\bf S}\in \Re^{p\times p}\mid \norm{{\bf S}}_2 \le c \right\}$ be an $\epsilon$-approximate primal-dual solution pair for problems \eqref{prob:gen-convex-nuc} and \eqref{prob:D} if
$D({\bf X}, {\bf \Xi}, y, {\bf S}) \le \epsilon$. We show in the following theorem the global convergence and the $o(1/k)$ iteration complexity results of Algorithm sGS-ADMM.
\begin{theorem}
	\label{thm:sGS-ADMM}
	Suppose that the solution sets of \eqref{prob:gen-convex-nuc} and \eqref{prob:D} are nonempty. Let $\{({\bf\Xi}^k,y^k,\S^k,\X^k)\}$ be the sequence generated by Algorithm \ref{alg:sGS-ADMM}. If $\tau\in(0,(1+\sqrt{5}\,)/2)$, then the sequence $\{({\bf\Xi}^k,y^k,\S^k)\}$ converges to an optimal solution of \eqref{prob:D} and $\{\X^k\}$ converges to an optimal solution of \eqref{prob:gen-convex-nuc}.
	Moreover, there exist a constant $\omega >0$ such that 
	\[
	\min_{1\le i \le k} \left\{ D({\bf X}^k, {\bf \Xi}^k, y^k, {\bf S}^k) \right\} \le \frac{\omega}{k}, \ \forall\, k\ge 1, \quad{\rm and}  \quad 
	\lim_{k\to \infty} \left\{ k \min_{1\le i \le k} \left\{ D({\bf X}^k, {\bf \Xi}^k, y^k, {\bf S}^k) \right\} \right\}= 0.
	\]
\end{theorem}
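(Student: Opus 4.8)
The plan is to obtain both the global convergence and the $o(1/k)$ non-ergodic complexity by reducing Algorithm \ref{alg:sGS-ADMM} to a two-block semi-proximal ADMM and then invoking the established theory for such schemes in \cite{li2016schur} and \cite{chen2017efficient}. The key structural observation is that the dual problem \eqref{prob:D} has exactly the composite form for which the symmetric Gauss--Seidel (sGS) technique is designed: the block $y$ enters the objective only linearly (through $-\inprod{b}{y}$) and couples with ${\bf\Xi}$ and $\S$ solely through the affine constraint ${\bf\Xi} + \cA^*(y) + \S = 0$, while $g^*(-{\bf\Xi})$ and the indicator $\delta(\norm{\S}_2 \le c)$ are the two genuinely nonsmooth pieces of the objective.

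First I would apply the sGS (Schur-complement based) decomposition theorem of \cite{li2016schur} to collapse the two $y$-sweeps $y^{k+\frac{1}{2}}$ and $y^{k+1}$ together with the ${\bf\Xi}$-update into a single block update carrying an additional sGS proximal term. Concretely, grouping $({\bf\Xi}, y)$ as one block and $\S$ as the other, the five updates of Algorithm \ref{alg:sGS-ADMM} coincide with one step of a two-block semi-proximal ADMM applied to \eqref{prob:D}, in which the proximal operator attached to the $({\bf\Xi}, y)$-block is precisely the sGS operator induced by the quadratic form $\frac{\sigma}{2}\norm{{\bf\Xi} + \cA^*(y) + \S + \X/\sigma}^2$. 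Because $\cA\cA^* = p\,\mathrm{I}$, this quadratic form is benignly structured, which guarantees that the induced sGS proximal term is positive semidefinite and that every subproblem is uniquely solvable---exactly the regularity required by the convergence theory.

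Next I would verify the hypotheses of the convergence theorem in \cite{li2016schur}: nonemptiness of the solution sets (assumed), consistency of the KKT system \eqref{KKT} (which follows from strong duality, since in \eqref{prob:gen-convex-nuc} the nuclear-norm term is finite-valued and the feasible set has nonempty relative interior), and the step-length restriction $\tau \in (0,(1+\sqrt{5})/2)$. With these in place, the cited theorem yields that $\{({\bf\Xi}^k, y^k, \S^k)\}$ converges to a solution of \eqref{prob:D} while the multiplier sequence $\{\X^k\}$ converges to a solution of the primal \eqref{prob:gen-convex-nuc}, which establishes the first assertion.

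For the $o(1/k)$ non-ergodic rate I would follow \cite{chen2017efficient}. The engine is a Lyapunov merit function---the distance of $({\bf\Xi}^k, y^k, \S^k, \X^k)$ to the solution set measured in the metric induced by the penalty and proximal operators---which is shown to be non-increasing along the iterates; summing the per-iteration decrease yields the summability $\sum_{k}\fnorm{\X^{k+1}-\X^k}^2 < \infty$ together with the analogous summability for the successive differences of $({\bf\Xi}^k, y^k, \S^k)$. Combining this summability with the fact that the KKT residual $D(\X^k, {\bf\Xi}^k, y^k, \S^k)$ is controlled by these successive differences gives both the bound $\min_{1\le i \le k} D \le \omega/k$ and, via the standard observation that a summable nonincreasing sequence is $o(1/k)$, the limit $\lim_{k\to\infty} k\,\min_{1\le i \le k} D = 0$. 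The main obstacle is the first step: making the sGS-to-two-block equivalence fully rigorous, i.e.\ confirming that the proximal term induced by the $y$-sweeps is exactly the sGS operator and is positive semidefinite, since only after this identification do the results of \cite{li2016schur,chen2017efficient} apply verbatim; the complexity estimate itself is then a bookkeeping exercise on the Lyapunov inequality rather than a new argument.
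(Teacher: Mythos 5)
Your proposal is correct and follows essentially the same route as the paper: the paper's proof also reduces Theorem \ref{thm:sGS-ADMM} to the known sGS-ADMM theory by writing \eqref{prob:D} in the standard multi-block form with identity coefficient operators on the ${\bf\Xi}$ and $\S$ blocks, noting $\cA\cA^* = p\,\mathrm{I} \succ 0$, and then invoking \citep[Theorem 3]{li2016schur} for global convergence and \citep[Theorem 6.1]{chen2017efficient} for the $o(1/k)$ non-ergodic complexity. The only difference is one of exposition: you unpack the internal mechanism of those cited results (the sGS-to-two-block reduction and the Lyapunov summability argument), whereas the paper simply verifies their hypotheses and cites them.
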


\begin{proof}
	~In order to use \citep[Theorem 3]{li2016schur}, we need to write problem \eqref{prob:D} as following
	\begin{equation*} 
	\begin{array}{rll}
	\min  & g^*(-{\bf\Xi}) - \inprod{b}{y} + {\delta( \norm{\S}_2 \le c)}  \\
	\mbox{s.t.} & \cF ({\bf\Xi}) + \cA_1^*(y) + \cG(\S)  = 0,
	\end{array}
	\end{equation*}
	where $\cF, \cA_1$ and $\cG$ are linear operators such that for all $({\bf \Xi}, y, \S) \in \Re^{p\times p} \times \Re^n \times \Re^{p\times p}$, $\cF({\bf\Xi}) = {\bf \Xi}$, 
	$\cA_1^*(y) = \cA^*(y)$ and $\cG(\S) = \S$. 
	Clearly, $\cF = \cG = \cI$  where $\cI:\Re^{p\times p} \to \Re^{p\times p}$ is the identity map.
	Therefore, we have $\cA_1\cA_1^* \succ 0$ and $\cF\cF^* = \cG\cG^* = \cI \succ 0$. Hence, the assumptions and conditions in  \citep[Theorem 3]{li2016schur} are satisfied. The convergence results thus follow directly. Meanwhile, the non-ergodic iteration complexity results follows from \citep[Theorem 6.1]{chen2017efficient}.
\end{proof}

%

\section{Proof of Theorems  \ref{thm:convergence-alg-MM} and \ref{thm: MMconvergence}} 
We only need to prove Theorem \ref{thm: MMconvergence} 
as Theorem \ref{thm:convergence-alg-MM} 
is a special incidence. To prove Theorem  \ref{thm: MMconvergence}, 
we first introduce the following lemma.	
\begin{lemma}\label{lemma:decrease}
	Suppose that $\{ {x}^k \}$ is the sequence generated by Algorithm 3. 
	Then  $\theta({x}^{k+1})\le \theta({x}^k) - \frac{1}{2}\|{x}^{k+1} - {x}^k\|^2_{\mathcal{G} + 2\mathcal{T}}$.
\end{lemma}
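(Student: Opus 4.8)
The plan is to track the surrogate objective $\Phi_k(x) := \widehat\theta(x; x^k) + \frac{1}{2}\|x - x^k\|^2_{\mathcal{T}}$ that $x^{k+1}$ minimizes in Step 1 of Algorithm \ref{alg:dca-general}, and to sandwich $\theta(x^{k+1})$ between two evaluations of $\Phi_k$. First I would record two elementary facts about $\Phi_k$. Substituting $x = x^k$ collapses all the linear and quadratic increments, giving $\Phi_k(x^k) = g(x^k) + p(x^k) - q(x^k) = \theta(x^k)$. Gathering the quadratic terms shows $\Phi_k(x) = F(x) + \frac{1}{2}\|x-x^k\|^2_{\mathcal{G} + \mathcal{T}} + \bigl(g(x^k) - q(x^k)\bigr)$, where $F(x) := p(x) + \inprod{\nabla g(x^k) - \xi^k}{x - x^k}$ is convex (a convex function plus an affine one) and the quadratic part carries the operator $\mathcal{G} + \mathcal{T}$.

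Next I would establish a lower bound on $\Phi_k(x^{k+1})$. Since $\widehat{g}(\cdot; x^k)$ majorizes $g$, we have $\widehat{g}(x^{k+1}; x^k) \ge g(x^{k+1})$; and since $\xi^k \in \partial q(x^k)$, convexity of $q$ gives $q(x^k) + \inprod{\xi^k}{x^{k+1}-x^k} \le q(x^{k+1})$. Plugging these into the definition of $\widehat\theta$ yields $\widehat\theta(x^{k+1}; x^k) \ge \theta(x^{k+1})$, and therefore $\Phi_k(x^{k+1}) \ge \theta(x^{k+1}) + \frac{1}{2}\|x^{k+1}-x^k\|^2_{\mathcal{T}}$.

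Then I would derive the matching decrease on the surrogate. Applying Fermat's rule at the minimizer $x^{k+1}$ to $\Phi_k = F + \frac{1}{2}\|\cdot - x^k\|^2_{\mathcal{G} + \mathcal{T}} + \text{const}$, together with the sum rule (the quadratic part is smooth), gives $-(\mathcal{G} + \mathcal{T})(x^{k+1}-x^k) \in \partial F(x^{k+1})$. The subgradient inequality for the convex $F$ evaluated at $x^k$ then yields $F(x^k) \ge F(x^{k+1}) + \|x^{k+1}-x^k\|^2_{\mathcal{G}+\mathcal{T}}$, and combining this with the exact quadratic expansion of $\frac{1}{2}\|\cdot-x^k\|^2_{\mathcal{G}+\mathcal{T}}$ gives $\Phi_k(x^{k+1}) \le \Phi_k(x^k) - \frac{1}{2}\|x^{k+1}-x^k\|^2_{\mathcal{G} + \mathcal{T}} = \theta(x^k) - \frac{1}{2}\|x^{k+1}-x^k\|^2_{\mathcal{G}+\mathcal{T}}$. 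Chaining the lower bound of the previous paragraph with this upper bound produces $\theta(x^{k+1}) + \frac{1}{2}\|x^{k+1}-x^k\|^2_{\mathcal{T}} \le \theta(x^k) - \frac{1}{2}\|x^{k+1}-x^k\|^2_{\mathcal{G}+\mathcal{T}}$, which is exactly the claim after transposing the $\mathcal{T}$-term and using $\mathcal{G} + \mathcal{T} + \mathcal{T} = \mathcal{G} + 2\mathcal{T}$.

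The step I expect to be most delicate is the surrogate-decrease inequality: because $\mathcal{T}$ (and hence $\mathcal{G} + \mathcal{T}$) may be indefinite, the subproblem need not be strongly convex, so I cannot appeal to a strong-convexity prox bound. The resolution is that the argument never uses positive semidefiniteness — it relies only on Fermat's rule (stationarity of $x^{k+1}$), convexity of the non-quadratic part $F$, and the exact quadratic identity for $\frac{1}{2}\|\cdot-x^k\|^2_{\mathcal{G}+\mathcal{T}}$ — so the factor $\mathcal{G} + \mathcal{T}$ survives regardless of sign. The extra $\frac{1}{2}\|\cdot\|^2_{\mathcal{T}}$ arising from the majorization and the linearization slack of $q$ is precisely what upgrades $\mathcal{G} + \mathcal{T}$ to $\mathcal{G} + 2\mathcal{T}$, and no sign assumption on $\mathcal{G} + 2\mathcal{T}$ is needed for the inequality itself (that hypothesis enters only later, to guarantee the right-hand term is nonnegative and thus that $\{\theta(x^k)\}$ is nonincreasing).
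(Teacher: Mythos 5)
Your proof is correct and rests on exactly the same ingredients as the paper's own argument --- the first-order optimality condition of the subproblem at $x^{k+1}$, the subgradient inequality for the convex part $p$ (folded into your $F$), the majorization $\widehat{g}(\cdot;x^k)\ge g(\cdot)$, and the linearization of $q$ via $\xi^k\in\partial q(x^k)$ --- merely repackaged as a sandwich around the surrogate $\Phi_k$ rather than the paper's single chain of inequalities starting from $\theta(x^{k+1})-\theta(x^k)\le\widehat{\theta}(x^{k+1};x^k)-\theta(x^k)$. Your closing observation is also consistent with the paper: its proof of this lemma never invokes positive semidefiniteness of $\mathcal{G}+\mathcal{T}$ or $\mathcal{G}+2\mathcal{T}$ either; that hypothesis is used only afterwards, in Theorem \ref{thm: MMconvergence}, to convert the identity into monotone descent and summability of $\|x^{k+1}-x^k\|^2_{\mathcal{G}+2\mathcal{T}}$.
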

\begin{proof}
~For any $k\geq 0$, by the optimality condition of problem (10) at
	${x}^{k+1}$, we know that  
	there exist $\eta^{k+1}\in \partial p({x}^{k+1})$  such that
	\begin{equation*}\label{eq:major-k-optimality}
	0 =  \nabla   g ({x}^k) +   (\mathcal{G} + \mathcal{T})({x}^{k+1} -
	x^{k})  + \eta^{k+1}-\xi^k  = 0.
	\end{equation*}
	Then for any $k\ge 0$, we deduce
	\begin{equation*}
	\begin{array}{rl}
	& \theta({x}^{k+1}) - \theta({x}^k)
	\le \widehat{\theta}({x}^{k+1};{x}^k) - \theta({x}^k)\\[0.1in]
	= & p(x^{k+1}) - p(x^k) + \langle {x}^{k+1} - {x}^k , \nabla g({x}^k)-\xi^k \rangle +
	\frac{1}{2} \|{x}^{k+1} - {x}^k\|^2_{\mathcal{G}} \\[0.1in]
	\le &  \langle  \nabla g(x^k)+\eta^{k+1} -\xi^k, {x}^{k+1} - {x}^k\rangle+
	\frac{1}{2} \|{x}^{k+1} - {x}^k\|^2_{\mathcal{G}}
	\\[0.1in]
	=  &  - \frac{1}{2}\|x^{k+1} - x^k\|^2_{\mathcal{G} + 2\mathcal{T}}.
	\end{array}
	\end{equation*}
	This completes the proof of this lemma.
\end{proof}

Now we are ready to prove Theorem \ref{thm: MMconvergence}.
\begin{proof}
	~From the optimality condition at $x^{k+1}$, we have that 
	\[ 0 \in \nabla   g ({x}^k) +   (\mathcal{G} + \mathcal{T})({x}^{k+1} -
	x^{k})  +  \partial p(x^{k+1})-\xi^k.\]
	Since $x^{k+1} = x^k$, this implies that 
	\[ 0 \in \nabla   g ({x}^k) +  \partial p(x^{k})- \partial q(x^k), \]
	i.e., $x^k$ is a critical point.
	Observe that the sequence $\{\theta (x^{k})\}$ is non-increasing since
	$${\theta}(x^{k+1})  \le \widehat{\theta}(x^{k+1}; x^{k})  \le \widehat{\theta}(x^{k}; x^{k}) =\theta(x^{k}), \quad k\geq 0.$$
	Suppose that there exists a subsequence $\{x^{k_j}\}$ that converging to $\bar{x}$, i.e.,  one of the accumulation points of $\{x^k\}$.
	By Lemma \ref{lemma:decrease} and the assumption that $\mathcal{G} + 2\mathcal{T}\succeq 0$, we know that for all $x\in \mathbb{X}$
	\begin{align*}
	&\widehat{\theta}(x^{k_{j+1}};x^{k_{j+1}}) = \theta(x^{k_{j+1}}) \\
	\le &\theta(x^{k_j+1})\le \widehat{\theta}(x^{k_j+1};x^{k_j})\le \widehat{\theta}(x;x^{k_j}).
	\end{align*}
	By letting $j\to\infty$ in the above inequality, we obtain that
	$$
	\widehat{\theta}(\bar{x};\bar{x})\le  \widehat{\theta}(x;\bar{x}).
	$$
	By the optimality condition of $\widehat{\theta}(x; \bar{x})$, we have that
	there exists $\bar{u}\in \partial p(\bar{x})$ and $\bar{v}\in \partial q(\bar{x})$ such that
	$$
	0 \in \nabla g(\bar{x}) + \bar{u} - \bar{v}.  
	$$
	This implies that $\left(\nabla g(\bar{x}) + \partial p(\bar{x})  \right)\cap \partial q(\bar{x})\neq \emptyset$.
	To establish the rest of this proposition, 
	we obtain from Lemma 1 that
	\begin{align*}
	&\lim_{t \to + \infty}\frac{1}{2} \sum_{i=0}^t \|{x}^{k+1}-{x}^k\|^2_{\mathcal{G}+2\mathcal{T}} \\
	\le {}&\liminf_{t\to +\infty} \big( \theta(x^0)
	-\theta({x}^{k+1})\big) \le \theta(x^0) < +\infty \,,
	\end{align*}
	which implies $  \lim_{i\to +\infty} \|{x}^{k+1} -
	x^{i}\|_{\mathcal{G}+2\mathcal{T}} = 0.$ The proof of this theorem is thus complete by the positive definiteness of the operator $\mathcal{G} + 2\mathcal{T}$.
\end{proof}
\end{document}